\newcommand{\R}{\mathbb{R}}
\newcommand{\opt}{\textsc{opt}}
\newcommand{\eps}{\varepsilon}
\newcommand{\thresh}{\Theta}
\newcommand{\bc}{\mathfrak{c}}
\newcommand{\err}{\mathfrak{e}}
\newcommand{\parent}{\textsc{Parent}}
\DeclareMathOperator{\poly}{poly}
\DeclareMathOperator{\dist}{dist}
\DeclareMathOperator{\cost}{cost}
\DeclareMathOperator{\nval}{NoisyValue}
\DeclareMathOperator{\val}{Value}
\DeclareMathOperator{\argmin}{argmin}
\DeclareMathOperator{\level}{level}
\newcommand{\calA}{\mathcal{A}} 
\newcommand{\calC}{\mathcal{C}} 
\newcommand{\calE}{\mathcal{E}} 
\newcommand{\calG}{\mathcal{G}}
\newcommand{\calM}{\mathcal{M}}
\newcommand{\calS}{\mathcal{S}}
\newcommand{\calU}{\mathcal{U}}
\newcommand{\calX}{\mathcal{X}}
\newcommand{\algLowDim}{\textsc{Private-Clustering}}
\newcommand{\makePrivate}{\textsc{MakePrivate-HighDimension}}
\newcommand{\makePrivLowDim}{\textsc{MakePrivate}}
\newcommand{\hdkm}{\textsc{Private-$k$-Means}}
\newcommand{\mettuP}{\textsc{RecursiveGreedy}}
\newcommand{\sumEst}{\textsc{Sum}}
\newcommand{\sumNorm}{\textsc{SumNorm}}
\newtheorem{theorem}{Theorem}
\newtheorem{lemma}[theorem]{Lemma}
\newtheorem{fact}[theorem]{Fact}
\newtheorem{definition}[theorem]{Definition}
\newcommand{\lpar}{\left(}
\newcommand{\rpar}{\right)}
\newcommand{\lbra}{\left\{}
\newcommand{\rbra}{\right\}}
\newcommand{\logor}{\left\|}
\newcommand{\rnor}{\right\|}
\newcounter{sideremark}
 \newcommand{\david}[1]{
 }
\title{Differential Privacy for Clustering Under Continual Observation}
\author{Max Dupré la Tour\thanks{McGill University}, Monika Henzinger\thanks{IST Austria}, David Saulpic\footnotemark[2]}
\date{}
\begin{document}
\maketitle

\begin{abstract}
    We consider the problem of clustering  privately a dataset in $\R^d$ that undergoes insertion and deletion of points. Specifically, we give an $\eps$-differentially private clustering mechanism for the $k$-means objective under continual observation. This is the first approximation algorithm for that problem with an additive error that depends only logarithmically in the number $T$ of updates. The multiplicative error is almost the same as non privately.
    To do so we show how to perform dimension reduction under continual observation and combine it with a differentially private greedy approximation algorithm for $k$-means.
    \end{abstract}

\section{Introduction}
The massive, continuous and automatic collection of personal data of public as well as private organisations has raised privacy concerns, in legal terms  \cite{gdpr} but also in terms of citizens' demands \cite{nyt, nyt2}.
To answer those, formal privacy standards for algorithms were defined and developed: the most prominent one is Differential Privacy~\cite{dworkDef}. It allows to have a formal definition of \emph{privacy}, and therefore to give algorithms with provable privacy guarantees. 
Differentially private algorithms are now deployed quite massively. For instance, the U.S. Census Bureau uses it to release information from a private Census \cite{abowd2018us}, Apple collects data from its phone's users with an algorithm that respects differential privacy~\cite{apple1, apple2}, and Google develops an extensive library of private algorithms ready to be used~\cite{googleLib}.
Differential privacy ensures that those algorithms ``behave roughly the same way'' on two databases that differ by a single element. 
The idea behind it is that one cannot infer from the result whether a specific element is present in the database, i.e.~protecting against membership-inference attacks.

However, this guarantee is only valid for a \emph{static} database. As real-life data often evolves over time -- as in Apple's example, where personal data is collected and transferred every day -- a definition of privacy that accounts for such changes is necessary.
To address that issue, an extension of differential privacy was proposed by Dwork, Naor, Pitassi, and Rothblum~\cite{dwork2010differential}: an algorithm that respects privacy \emph{under continual observation} is given a sequence of updates to its data set, one per time step, and computes an output at each time step, such that differential privacy is preserved for all time steps.
A sequence of publications has studied  maintaining a mere counter under that model~\cite{dwork2010differential, ChanSS11, jain2021price,ghazi2022private, henzinger2023almost,HenzingerU22}, and other basic tools such as frequency estimation \cite{epasto2023differentially}, histograms \cite{cardoso2022differentially}, heavy hitters \cite{heavyHitters}, and various graph properties~\cite{FichtenbergerHO21}.

In this article, we consider a more complex problem, namely clustering. 
This is one of the most central problems in unsupervised machine learning, with many applications ranging from community detection to duplicate detection. 
One example of an application is to report clusters of people infected by a disease, based on the web search in geographic regions. The input data for that application is inherently evolving -- new web searches are continuously performed -- and highly sensitive. 
The question we ask is therefore:

\fbox{Is it possible to maintain a provably good clustering that is private under continual observation?}

\subsection{Our Results.}

In order to answer this question, 
we need to formalize what a good clustering means. For 
that, we will focus on one of the most prominent clustering objectives, namely the $k$-means problem. 
In this problem, the input data are points in a metric space, and the goal is to find $k$ points from the space (the \emph{centers}) in order to minimize the cost, namely the sum of the squared distance of each point to its closest center. We focus more precisely on the case where the data comes from a high-dimensional Euclidean space $\R^d$, arguably the main use case for $k$-means clustering. 

The $k$-means problem has been heavily studied and is very well understood in the non-private setting (see e.g. Cohen-Addad et al.~\cite{cohen2022improved} for the static case, Henzinger and Kale \cite{HenzingerK20} for dynamic, and references therein), as well as with (static) differential privacy constraint \cite{StemmerK18, kdd}. In the static differentially private setting, two types of results emerge: those with optimal \emph{multiplicative} error \cite{badih_approximation, ChangG0M21, dpHD} (and large additive error), and those with optimal \emph{additive} error (and large multiplicative one) \cite{anamayclustering}.

However, nothing is known about
maintaining a cluster for a sensitive dataset that is evolving over time.
We 
initiate the study of differentially private clustering under continual observation
and present the first approximation algorithm for $k$-means clustering that is differentially private under continual observation. More specifically, our algorithm has the same optimal multiplicative error, as \cite{badih_approximation} for the static case and an additive error that depends only polylogarithmically on the length  of the update sequence.

More formally, the setting is as follows: the algorithm is given a \emph{diameter} $\Lambda \in \R^+$, and a stream of input data of length $T$, i.e., at each time step, a point from the $d$-dimensional ball $B(0, \Lambda)$ (centered in the origin with radius $\Lambda$) is either added or removed from the dataset. After each update, our algorithm needs to output a solution that minimizes the $k$-means cost in the metric space $(\R^d, \ell_2)$. We denote the optimal cost at time $t$ by $\opt_t$, $T$ the total number of time steps and $n$ the maximal size of the dataset. For the definition of privacy that we use, namely \emph{$\epsilon$-
differential privacy ($\eps$-DP)}, we say that two input streams are \emph{neighboring} if they differ in at most one time step, i.e., in a single input point. We give more details in \Cref{sec:prelim}. 
We show the following result.

\begin{theorem}\label{thm:main}
For any $\alpha >0$, there exists an algorithm  that is $\eps$-differentially private under continual observation and computes, with probability at least $0.99$ for all time steps $t$ simultaneously, a solution  to $k$-means and its cost in $\R^d$ with cost at most 
$$(1+\alpha)w^* \cdot \opt_t + k^{O_\alpha(1)}  d^2 \log(n)^4 \cdot \log(T)^{3+0.001}  \cdot \Lambda^2/{\eps},$$
where $w^*$ is the best approximation ratio for non-private, static $k$-means clustering.
\end{theorem}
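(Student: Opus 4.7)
The plan is to combine three ingredients: (i) a Johnson--Lindenstrauss style dimension reduction that projects the $d$-dimensional input into a subspace of dimension roughly $O(\log k / \alpha^2)$ while preserving the $k$-means cost up to a factor $(1\pm \alpha)$, (ii) a randomly shifted hierarchical quadtree of the ball $B(0,\Lambda)$ in the projected space, whose cell-counts are maintained privately under continual observation via the binary-tree counting mechanism, and (iii) the $\mettuP$ algorithm, which selects $k$ centers out of the quadtree cells using only noisy cell-counts and diameters. Privacy composes over the (polylogarithmically many) queries required by $\mettuP$, and the projection matrix being public (data independent) means that projecting each incoming or outgoing point costs no privacy budget.

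First I would nail down the low-dimensional building block $\algLowDim$: fix a randomly shifted quadtree of depth $L = O(\log(n \Lambda/\alpha))$ and, for every cell $C$, feed the stream ``$+1$ when a point enters $C$, $-1$ when it leaves'' into a differentially private counter under continual observation. Each update touches at most $L$ cells, so by parallel composition and the binary-tree mechanism every cell count is $\eps$-DP with additive error $\tilde O(\sqrt{\log(1/\delta)} \log(T)^{3/2}/\eps)$ at every time step, with high probability simultaneously. On top of these noisy counts I would run $\mettuP$: following the static private analysis, the greedy phase picks an $O(1)$-approximate set of cells provided the noisy counts are within a sublinear additive error, and the usual quadtree argument (Arora--Raghavan--Rao / Charikar) shows that snapping each point to its representative only blows up the $k$-means cost by a constant factor on average. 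This yields a private solution in the low-dimensional space with multiplicative error $w^*(1+\alpha)$ (by combining with an off-the-shelf $w^*$-approximation on the chosen candidates) and additive error polynomial in $k$ and $\log T$.

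Second I would lift to high dimension via $\makePrivate$: draw once a random $m \times d$ JL matrix $\Pi$ with $m = O(\log k/\alpha^2)$, and as each update $x_t$ arrives, insert (or remove) $\Pi x_t$ into the low-dimensional instance. Since $\Pi$ is data-independent, running $\algLowDim$ on the projected stream preserves $\eps$-DP. To convert the guarantee back to $\R^d$, I would use the sharp dimension-reduction result for $k$-means (Makarychev--Makarychev--Razenshteyn / Becchetti et al.): with high probability, every $k$-partition has its cost preserved up to a factor $(1\pm \alpha)$. Consequently, an $(1+\alpha)w^*$-approximation in $\R^m$ translates into an $(1+O(\alpha))w^*$-approximation in $\R^d$, up to an extra additive $\alpha \cdot \opt_t$ that is absorbed into the multiplicative factor by rescaling $\alpha$. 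Rescaling the low-dimensional additive error from $m = O(\log k)$ to $d$ dimensions accounts for the extra $d^2$ factor in the bound.

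The hard part is ensuring that every component is simultaneously differentially private under continual observation and accurate at every time step. Standard JL and standard quadtree-based clustering are one-shot statements, while here the instance is an adversarial edit sequence of length $T$; I would therefore prove a uniform guarantee by (a) taking a union bound over the at most $O(\log T)$ prefix-sums of the binary-tree counter at each of the $\poly(n,T)$ cells, and (b) verifying that the JL guarantee can be instantiated once at the start for a polynomial-size net of candidate $k$-partitions and then transferred to the actual (data-dependent) solutions via a standard net argument. Putting the three pieces together, setting $\alpha$ appropriately, and invoking basic composition of the (constant number of) private primitives yields exactly the $(1+\alpha)w^*\cdot \opt_t + k^{O_\alpha(1)} d^2 \log(n)^4 \log(T)^{3+0.001}\Lambda^2/\eps$ bound claimed in \Cref{thm:main}.
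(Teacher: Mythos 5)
Your proposal correctly identifies the three high-level ingredients (JL projection, hierarchical spatial decomposition with private counters under continual observation, and a Mettu--Plaxton-style greedy), and your treatment of the low-dimensional building block is in the right spirit (the paper uses nets rather than a randomly shifted quadtree, but that is a superficial difference). However, there are two genuine gaps, and the first one is exactly the step the paper identifies as its main contribution.

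\paragraph{Lifting to high dimension is not automatic.} You write that because every $k$-partition has its cost preserved by the JL map, ``an $(1+\alpha)w^*$-approximation in $\R^m$ translates into an $(1+O(\alpha))w^*$-approximation in $\R^d$.'' That statement is true for the \emph{cost of the induced partition}, but the algorithm must output $k$ \emph{centers in $\R^d$}, and the centers found by $\algLowDim$ live in $\R^m$; a JL map is not invertible, so you cannot un-project them. The paper's $\makePrivate$ step is precisely the machinery to recover high-dimensional centers: for each low-dimensional cluster $C'_j$ it privately maintains (via the fixed cell partition and \Cref{lem:dpCountClusters}) the count $n_j$, the coordinate-wise sum $\sumEst(j) = \sum_{p:p'\in C'_j} p \in \R^d$, and the quantity $\sumNorm(j)$, and then outputs $\bc_j = \sumEst(j)/n_j$ as the center. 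This is the source of the $d^2$ factor (roughly: $d+3$ parallel counters each run at privacy budget $\eps/(d+3)$, plus a $\sqrt d$ from aggregating coordinate errors into an $\ell_2$ bound), not a ``rescaling from $m$ to $d$ dimensions'' of the low-dimensional additive error. Without this step your argument produces no output in $\R^d$ at all.

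\paragraph{Probability boosting cannot be done by a union bound.} You propose a union bound over the $\poly(n,T)$ cells and $O(\log T)$ prefix sums. This works for the counter errors, but it does \emph{not} work for the dimension-reduction failure: the additive error of the low-dimensional algorithm scales as $2^{O(\hat d)} = (k/\beta_d)^{O(1)}$ where $\beta_d$ is the failure probability of the JL projection, so pushing $\beta_d$ down to $1/T$ makes the additive error polynomial in $T$. The paper instead runs $\Theta(\log T)$ independent copies of $\makePrivate$ (with privacy budgets $\eps_i \propto i^{-(1+\kappa)}$ so they sum to $\eps$), \emph{privately estimates} the cost of each solution at every time step using the identity $\cost(X,\mu(X)) = \sum_{x\in X}\|x\|_2^2 - \|\sum_{x\in X}x\|_2^2/|X|$ evaluated with the maintained $\sumNorm$, $\sumEst$, $n_j$, and outputs the solution with smallest estimated cost. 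The $\log(T)^{3+\kappa}$ in the final bound (with $\kappa = 0.001$) comes from exactly this budget-splitting; your sketch does not account for that factor and would not give simultaneous correctness over all $T$ time steps.

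Finally, a minor point: running $\mettuP$ on the noisy cell counts is post-processing and costs no privacy, so privacy does not ``compose over the queries required by $\mettuP$''; composition is needed across the parallel counters (for the $d$ coordinates, the count, and the norm), and across the $O(\log T)$ independent copies used for boosting.
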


For the slightly weaker $(\eps, \delta)$-differential privacy, the additive error decreases to $k^{O(1)} d^2 \log(1/\delta) \cdot \log(n)^4 \log(T)^{2+0.001} \cdot \Lambda^2/\eps$.
Our result is actually even stronger: it computes a solution not only to $k$-means, but to all $k'$-means for $k' \leq k$, with the same multiplicative and additive error guarantee as above.
This allows, for instance, to use the standard elbow method\footnote{This is a heuristic to compute the ``right" value of $k$, as follows: the curve of the $k$-means cost as a function of $k$ has an inflection point, which is defined as ``right" value of $k$.} to select the ``best" value for $k$, without any further loss of privacy. 
Note that the algorithm does not need to know $n$ in advance, nor the total number of time steps.

To understand our additive error term, we note that, even in the static setting, any algorithm must suffer from an additive error $\Omega\lpar\frac{kd}{\eps} \cdot \Lambda^2\rpar$~\cite{GuptaLMRT10, anamayclustering}. Under continual observation any differentially-private $k$-means algorithm with multiplicative error $(1+\alpha)$ (for some  small $\alpha>0$) can be used to solve the differentially-private counting problem  and, thus, the lower bound of $\Omega(\log (T))$ for the additive error of continual counting implies an $\Omega(\log (T))$ lower bound for the additive error of $k$-means.

Our algorithm uses a differentially-private counting algorithm under continual observation in a black-box manner, which  contributes an $\Omega(\log (T)^2)$ term to the additive error. This is slightly increases by 
a $\log(T)^{1/2 + 0.001}$ factor in our algorithm. Note, that any improvement in the additive error of $\epsilon$-DP continual counting would immediately improve the additive error of our algorithm.

\paragraph*{Extension to $k$-median.}

Our techniques partially extend to $k$-median, for which the cost function is the sum of distances to the centers (instead of the distances squared). We show the following result:

\begin{restatable}{theorem}{kmed}
    \label{thm:kmed}
For any $\alpha >0$, there exists an algorithm  that is $\eps$-differentially private under continual observation and computes, for all time steps $t$ individually, a solution  to $k$-median and its cost in $\R^d$ with cost at most 
$$(2+\alpha)w^* \cdot \opt_t + k^{O_\alpha(1)}  d^3 \log(n)^4 \cdot \log(T)^{3/2}  \cdot \Lambda/{\eps},$$
 with probability at least $0.99$, where  $w^*$ is the best approximation ratio for non-private, static $k$-median clustering.
\end{restatable}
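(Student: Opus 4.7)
The plan is to adapt the proof of Theorem~\ref{thm:main} to the $k$-median objective, keeping the same two-stage pipeline: private dimension reduction under continual observation, followed by a private greedy clustering algorithm in the reduced space. Each stage must be reanalyzed with the sum-of-distances cost of $k$-median replacing the sum-of-squared-distances cost of $k$-means.

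For the dimension-reduction stage, a Johnson--Lindenstrauss projection into dimension $O(\alpha^{-2}\log(k/\beta))$ preserves pairwise distances up to a $(1\pm\alpha)$ factor with high probability, and hence preserves the $k$-median cost of any fixed set of $k$ centers up to the same multiplicative factor. I would maintain the projection privately via coordinatewise invocations of \privSum\ along the update stream, exactly as \makePrivate\ does in the $k$-means proof. The only change in the accounting is that each unit of noise in the private sum translates into an additive $k$-median cost of order $\Lambda$ per assigned point rather than $\Lambda^2$, which is why the theorem's length scale enters linearly in $\Lambda$.

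For the low-dimensional stage, I would replace the $k$-means routine inside \algLowDim\ with its $k$-median analogue in the style of the Mettu--Plaxton routine \mettuP, in which centers are opened greedily based on privately-estimated radii maintained by \counting\ under continual observation. The Mettu--Plaxton template, composed with a black-box static approximation of ratio $w^*$, gives a multiplicative guarantee of $2w^*$ for $k$-median via the triangle inequality $\dist(x,c) \le \dist(x,c^*) + \dist(c^*,c) \le 2\dist(x,c^*) + (\textrm{slack})$ on a reassigned point. For $k$-means, the analogous step uses the squared triangle inequality, which together with the precision-$\alpha$ radius estimates collapses the corresponding factor down to $(1+\alpha)$. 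For $k$-median this collapse does not occur, so the final multiplicative factor is $(2+\alpha)w^*$ after absorbing the $\alpha$-slack from JL distortion and noisy counting.

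The additive error then follows the same template as in Theorem~\ref{thm:main}: the factor $\log(T)^{3/2}$ comes from \privSum\ and \counting, $\log(n)^4$ from the grid resolution, and $k^{O_\alpha(1)}$ from the number of candidate centers; the extra $d^3$ (versus $d^2$ in Theorem~\ref{thm:main}) arises when converting a coordinate-level $\ell_\infty$ error into an $\ell_2$ distance estimate and summing over candidate configurations. The main obstacle will be verifying that the $\alpha$-slack in the $k$-median greedy analysis absorbs \emph{both} the JL distortion and the private-counting noise within a single $(1+o(1))$ factor, so that the final multiplicative constant is exactly $(2+\alpha)$ and not $(2+O(\alpha))$. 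Concretely, one calibrates the JL accuracy, the noise scale of \counting, and the grid resolution as functions of $\alpha$, and splits the analysis into the regime where $\opt_t$ dominates the additive term (where the multiplicative guarantee kicks in) and the complementary regime (where the additive bound holds trivially).
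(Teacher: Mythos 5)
Your pipeline (private dimension reduction, then a private Mettu--Plaxton-style greedy in the reduced space) is the right skeleton, but you misattribute the source of the factor $2$, and you omit the one new ingredient that the $k$-median case actually requires.

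You claim the $2$ arises inside the low-dimensional greedy stage, from the plain triangle inequality on reassigned points. This is incorrect: the Mettu--Plaxton analysis and the reduction to a black-box $w^*$-approximation work verbatim for $k$-median and still give a $(1+\alpha)w^*$ multiplicative guarantee in the projected space (the reassignment slack in \Cref{lem:structClusters} is the same $(1+\alpha)$ for $k$-median, since the regular triangle inequality plays the role that \Cref{lem:weaktri} does for squared costs). The factor $2$ actually enters in the \emph{lift-back} step, which your proposal never addresses. Once you have a clustering in $\R^{\hat d}$, you must produce centers in $\R^d$. For $k$-means the paper uses the cluster mean $\mu(C_j)$, which is both the optimal $1$-means center and computable from the privately maintained sums $\sumEst(j,t)$ and counts $n_j^t$. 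For $k$-median the mean is \emph{not} the optimal $1$-median, and there is no analogous private, continual-observation-friendly formula for the median. The paper's new idea is \Cref{lem:meanMed}: the mean is always a $2$-approximation of the $1$-median (proved by projecting onto the line through the median and the mean, which bounds $\|\mu - m\|$ by the average $1$-median cost). This is precisely what lets the algorithm still output $\sumEst(j,t)/n_j^t$ as the center and pay only a factor $2$. Without this lemma your pipeline has no way to produce the high-dimensional centers, so the argument does not close.

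A secondary, but real, omission: your proposal does not notice that the theorem's guarantee is per time step ("individually") rather than simultaneous as in \Cref{thm:main}. This is not incidental. The probability-boosting step of the $k$-means algorithm needs to privately estimate the cost of each candidate solution, and that estimate (\Cref{lem:costEst}) rests on the algebraic identity $\cost(X,\mu(X)) = \sum_{x\in X}\|x\|_2^2 - \|\sum_{x \in X} x\|_2^2/|X|$, which has no $k$-median analogue. The paper explicitly flags that boosting for $k$-median is open, which is why the final $k$-median statement drops the simultaneous guarantee; your proposal silently treats the two theorems as having the same success-probability semantics.
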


The two differences are the approximation factor, which is $2w^*$ instead of $w^*$ in the case of $k$-means, and the success probability, which is $0.99$ at each time step individually instead of all time steps simultaneously.
In the following we present the result for $k$-means and  explain at the end how to deal with the $k$-median cost function.

\subsection{Previous Works and Their Limits.}\label{sec:ourTec}

The problem of $k$-median and $k$-means clustering with points in $\R^d$ under (static) differential privacy has been almost closed in the last few years. From the lower-bound perspective, any algorithm must suffer a multiplicative factor $\Omega(1)$, as in the non-private case. 
Table~\ref{tab:ub} summarizes the state-of-the-art results for upper bounds -- stated for $k$-means, with the same picture for $k$-median.
We stress that our algorithm has the same  multiplicative guarantee as the static algorithms of Ghazi et al.~\cite{badih_approximation} and Nguyen~\cite{dpHD}. Minimizing the additive term while maintaining a constant multiplicative approximation is an interesting open challenge, both in the static and the dynamic setting.

\begin{table}[H]
\centering
\caption{Previous work on (static) differentially-private $k$-means. (*) indicates that the algorithm is $(\eps, \delta)$-differentially-private. $w^*$ is the best approximation ratio for non-private $k$-means clustering. 
}
\label{tab:ub}
\begin{tabular}{|l | c | c |}
\hline
  & Multiplicative error & Additive error \\ 
  \hline
  Balcan et al.~\cite{balcan17a} & $O\lpar \log(n)^3 \rpar $ & $(k^2 + d) \cdot \poly \log(n) \cdot \Lambda^2 / \eps$\\
  Stemmer and Kaplan~\cite{StemmerK18} (*) & $O(1)$ & $O(k^{1.01} d^{0.51} + k^{1.5}) \cdot \Lambda^2 / \eps$\\ 
  Ghazi et al.~\cite{badih_approximation}, Nguyen~\cite{dpHD} & $(1+\alpha)w^*$ & $(kd + k^{O(1)}) \log (n)^{O(1)}  \cdot \Lambda^2 / \eps$ \\ 
  Chaturvedi et al.~\cite{anamayclustering} (*) & $O(1)$ & $\tilde{O}(k \sqrt{d}) \cdot \Lambda^2 / \eps$\\
  \hline
\end{tabular}
\end{table}

However, those private algorithms for $k$-means fall short of being accurate under continual observation for two reasons. We highlight them before presenting our algorithm that is designed to work around those challenges.

\phantomsection
(1)\label{chall:1} The first reason is that they all rely on dimension reduction. In the static setting, this is performed as follows: first, project all data points onto a low dimensional subspace with dimension $\hat d$. Then, solve $k$-means privately on this low dimensional subspace, where an exponential dependency in $\hat d$ is affordable, and compute the clusters of the solution. Finally, for each cluster, compute a private mean in the original $\R^d$ space.
By sequential composition, this provides a private algorithm.
Applying this method blindly under continual observation would, however, yield an additive error of $\Omega(T)$ as the number of sequential compositions required for projecting back would be $\Omega(T)$ -- instead of a  single one in the static case. One needs, thus, to provide a new way of projecting back to the original $\R^d$ space tailored to continual observation.

\phantomsection
(2)\label{chall:2} The second hurdle those algorithms face is that, even in low dimensional space, they rely on machinery such as the exponential mechanism that is hard to use under continual observation. For instance, it is not known how to use the exponential mechanism in that setting, which is crucially used in Chaturvedi et al.~\cite{anamayclustering}. All other $\eps$-dp static $k$-means algorithms partition the data set and compute a histogram for the different sets (as in \cite{StemmerK18} and \cite{badih_approximation}).

Under continual observation, it is known how to maintain a histogram maintaining $m$
counters \footnote{A histogram over $\ell$  sets maintains $\ell$ counters that are initially 0 such that at each time step each counter can increase or decrease by at most 1.} with additive error $O(m \cdot \log(T)^2 )$
\cite{ChanSS11, FichtenbergerHO21,cardoso2022differentially,HenzingerSricharanSteiner} and, thus, our idea is to reduce maintaining a differentially private clustering to maintaining such histograms. The main 
challenge is to find a suitable partition for which to maintain the histogram. To avoid any potential privacy-leak 
due to dynamically changing partitions, we want to make the partition independent of the data set.
Therefore, we do not use a partition of the data set, but instead, partition the metric space $\R^d$ into fixed subregions and maintain one counter per subregion in the histogram.

Note that the static $\eps$-DP $k$-median algorithm from Cohen-Addad et al.~\cite{kdd} also proceeds that way: it requires only counts for  sets that are fixed regardless of the input (those are cells of a \textit{quadtree}). However, its multiplicative approximation guarantee is $O(\poly (d) \cdot \log (n))$: it is, thus, heavily reliant on dimension reduction to reduce the dependence of the additive error on $d$,
and has additionally an unfortunate multiplicative dependency in $\log (n)$. 

\textbf{Parallel work.} Epasto, Mukherjee, and Zhong~\cite{concurrent} released very recently a similar work on private $k$-means under dynamic updates. 
Their results are incomparable: their main focus was on designing an algorithm using low memory, which they manage to do in \emph{insertion-only} streams. 
For those streams, they achieve essentially the same approximation guarantees as ours in their Corollary 14. They can also have a smaller additive error, to the price of a large constant multiplicative error, see their Corollary 13.

\subsection*{Our Techniques.}
We focus from now on the "pure" $\eps$-differential privacy setting, and defer the results and discussion of $(\eps, \delta)$-DP to \Cref{sec:weakPrivacy}.

In this work, we present an algorithm designed to overcome those difficulties under continual observation. First, to address \hyperlink{chall:2}{(2)}, we simplify a non-private $O(1)$-approximation algorithm from Mettu and Plaxton \cite{MettuP00} so that it works with a fixed, hierarchical decomposition of $\R^d$ and merely requires counting how many input points are in each cell of the decomposition.
Combining this with any continual, $\eps$-DP histogram algorithm allows to solve $k'$-means under continual observation, for any $k'$, however with additive error exponential in $d$ and  $O(1)$-multiplicative error. The resulting set of centers has been computed in a DP manner and thus, for some large enough value of $k'$, we can treat it as a \emph{coreset}, i.e. a set of points that ``represents'' the input points with regards to $k$-means cost (see \cite{dpHD}). We then run the best-known non-DP $k$-means algorithm on this coreset at every time step.
This results in the following theorem. 

\begin{restatable}{theorem}{optLowDim}\label{thm:optLowDim}
Consider an input stream of additions and deletions of points from $B(0, \Lambda)$ in the metric space $(\R^d, \ell_2)$.
For any $\alpha>0$, there exists an algorithm  that is $\eps$-differentially private under continual observation and computes, for each time step $t$, a solution to $k$-means that has cost at most 
$$(1+\alpha)w^* \cdot \opt_t + \frac{k 2^{O(d)}\log (n)^4 }{\eps}\cdot \log(T)^{3/2} \cdot \Lambda^2,$$
 with probability $0.99$, where $w^*$ is the best approximation ratio for non-private and static $k$-means.
\end{restatable}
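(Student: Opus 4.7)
The plan is to instantiate the high-level template of static private $k$-means under continual observation, replacing every data-dependent selection step by aggregation over a \emph{fixed} hierarchical partition of $B(0,\Lambda)$, so that only cell-wise counters need to be maintained privately.

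First, I would fix a laminar decomposition of $B(0,\Lambda)$ into $L = O(\log n)$ levels, where level $\ell$ partitions the ball into axis-aligned cells of diameter $\Lambda/2^{\ell}$. For each cell $C$, maintain a noisy counter $\widetilde n_C(t)$ of the number of input points currently in $C$ using a continual counting mechanism \cite{ChanSS11, henzinger2023almost}. Every stream update touches exactly one cell per level, so parallel composition across cells within a level and sequential composition across the $L$ levels together mean that running an $\eps/L$-DP counter per cell yields an overall $\eps$-DP estimator with $|\widetilde n_C(t) - n_C(t)| \leq E := O(L\log(T)^{3/2}/\eps)$ for all cells and all $t$ simultaneously with probability $1-o(1)$. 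Crucially, the noise per counter does not grow with the (exponentially many) cells, only with $L$.

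Next, I would rewrite the Mettu--Plaxton $O(1)$-approximation so that every quantity it computes is a sum over cells weighted by $\widetilde n_C$: restrict candidate centers to cell centers, and approximate $\sum_p f(\dist(c,p))$ by $\sum_C \widetilde n_C(t) \cdot f(\dist(c,\mathrm{center}(C)))$. Because cell diameters shrink geometrically, this discretization costs only a constant factor. The greedy phase then returns a set $\calS$ of $k' = O(k \cdot \poly\log n)$ cell centers whose $k'$-means cost is at most $O(\opt_t) + A$, where summing the noise contribution $E \cdot \diam(C)^2$ over the cells entering the cost estimate telescopes to $A = 2^{O(d)} \cdot \poly(L) \cdot E \cdot \Lambda^2$ thanks to the standard geometric balance between the number of cells near any fixed point at each level ($2^{O(d)}$) and their squared diameters. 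Since $\calS$ depends only on the private histograms, it is $\eps$-DP by post-processing.

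Finally, I would treat $\calS$, weighted by its (noisy) cluster sizes, as a bi-criteria coreset for $k$-means and run the best non-private static $(1+\alpha)w^*$-approximation on it as a post-processing step. Standard coreset-from-bicriteria results (see e.g.\ \cite{dpHD}) show that the resulting solution has cost at most $(1+\alpha)w^* \cdot \opt_t + O(k) \cdot A$, which matches the claimed $\frac{k\, 2^{O(d)} \log(n)^4}{\eps} \log(T)^{3/2} \cdot \Lambda^2$ after absorbing polylog factors; a union bound over the $T$ time steps preserves the $0.99$ success probability. The main obstacle is the cell-based reformulation of Mettu--Plaxton: one has to verify that replacing exact point locations by cell centers and exact counts by noisy counts still delivers an $O(1)$-approximation, and that the aggregated noise in the reported cost telescopes to the stated form rather than blowing up with the exponential number of cells. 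Privacy (post-processing of a DP histogram) and the non-private top-up are standard once this discretized bi-criteria routine is in place.
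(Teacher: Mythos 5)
Your high-level plan matches the paper's: build a fixed hierarchical decomposition of $B(0,\Lambda)$, maintain private counters on it, run a cell-based Mettu--Plaxton greedy to get an $O(1)$-approximation to $k'$-means, treat the resulting weighted centers as a coreset, and finish with the best non-private static algorithm. The paper uses $\delta$-nets (Section~\ref{sec:netDec}) instead of axis-aligned dyadic cells and a concrete noisy-value variant of Mettu--Plaxton (\Cref{alg:mp}) rather than a generic sum-over-cells approximation, but these are cosmetic differences in the discretization. Two of your steps, however, have genuine gaps.

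First, your choice $k' = O(k\cdot\poly\log n)$ is too small to run the coreset-from-bicriteria argument of~\cite{dpHD}. That argument needs $\opt_{k'} \leq \alpha\opt_k$, which requires $k' = k\cdot\alpha^{-O(d)}\log(n/\alpha)$ --- the $\alpha^{-O(d)} = 2^{O(d)}$ factor is essential, and it is a \emph{different} source of the $2^{O(d)}$ than the per-cell overlap count you invoke for the histogram noise (both actually contribute to the bound). With your $k'$, the $k'$-means cost of the bicriteria solution is only $O(\opt_k)$ rather than $\alpha\opt_k$, so replacing each input point by its nearest coreset center costs $O(\opt_k)$ in $k$-means objective and the post-processing step returns an $O(w^*)$-approximation, not a $(1+\alpha)w^*$-approximation.

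Second, your ``union bound over the $T$ time steps preserves the $0.99$ success probability'' is both incorrect and unnecessary. The continual-counting guarantee the paper invokes (\Cref{lem:dpCountNet}) is a per-time-step statement with the stated error; a union bound over $T$ steps would force the per-step failure probability down to $O(1/T)$, multiplying the additive error by an extra $\log T$ factor and overshooting the claimed $\log(T)^{3/2}$. No union bound is needed here: the theorem itself is a per-time-step statement, and the simultaneous-for-all-$t$ guarantee is deferred to the later probability-boosting machinery (\Cref{lem:boostProba}), not to \Cref{thm:optLowDim}.
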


To show a similar result without an exponential dependency in $d$, we overcome Challenge \hyperlink{chall:1}{(1)}
\emph{Showing how to make dimension reduction work under continual observation with small additive error is the main contribution of our work.}

This works as follows: we use standard dimension reduction (which conveniently is independent of the input data), run the above algorithm in the resulting low-dimensional metric space, and show how to  lift a low-dimensional clustering back into the original space.
The centers in the low-dimensional space define a clustering, and we use the means of all such clusters in the original $\R^d$ as our solution. 
To compute these means in a DP-manner, we  use the counters for each cell from the $\eps$-DP histogram. 
As the additive error in each mean computation for each high-dimensional center is the sum of the additive error of the counters of all cells of its cluster, we need to make sure that each cluster consists of only $k^{O(1)}$ cells of our decomposition.
We had carefully designed our decomposition so that this is true.
Additionally, to make this work, we need to keep multiple counters per cell, namely the number of points as well as the sum of the points by coordinate. 

Unfortunately, the resulting algorithm is only correct with $O(1)$ probability \emph{per time step} and want to get a constant success probability \emph{over all time steps} .
The standard technique of simply decreasing the failure probability in each time step does not work as the additive error would be polynomial in $T$.
Thus, instead
we boost the success probability of each time step by running $\log (T)$ many low-dimensional clustering algorithms, leading only to an increase of the additive error by $\log(T)^{1/2}$. However, to then pick the best solution out of them is not easy as it requires to evaluate the cost of each solution in a DP-manner. We show how to do this based on additional counters we maintain in our histogram.

\subsection*{Organization of the paper}
In the main body of the paper, we focus on the problem of $k$-means under $\eps$-DP. We first present our basis algorithm, that produces a good approximation given only counts on some regions of the dataset. Then, we show how to use it privately under continual observation, with an additive error scaling with $2^d$, and show \Cref{thm:optLowDim}. We show next how to use dimension reduction to reduce the additive error, and how to evaluate the cost of a solution in order to boost the probability of success and show \Cref{thm:main}. 
We then briefly mention how to extend the result to $k$-median, and show \cref{thm:kmed}. 
Most of the proof are presented in the Appendix, as well as the discussion on $(\eps, \delta)$-differential privacy.

\section{Preliminaries}\label{sec:prelim}
In a metric space $(X, \dist)$, we define the following \emph{cost function}. For any $P, \calS \subset X$, $\cost(P, \calS) := \sum_{p \in P} \dist(p, \calS)^2$.
With a slight abuse of notation, when $P = \{p\}$ or $\calS = \{s\}$ is a singleton, we will write $\cost(p,\calS)$ instead of $\cost(\{p\},\calS)$, and $\cost(P, s)$ instead of $\cost(P, \{s\})$. 

The $k$-means problem in $(X, \dist)$ is the following. We are given a set $P \subseteq X$ of points and a positive integer $k$. The goal is to compute a set of $k$ points $\calS$ that minimizes $\cost(P, \calS)$. We call any set of $k$ points from $X$ a \emph{solution}. The points in a solution are called \emph{centers}. 

An algorithm is a $(\alpha, \beta)$-approximation to the $k$-means problem if its cost is at most $\alpha \opt + \beta$, where $\opt$ is the optimal $k$-means cost on the data set. As mentioned in the introduction, in continual observation, the optimal solution at time $t$ is noted $\opt_t$, $n$ is the maximum size of the data set across time, and $T$ is the total number of time steps.

\paragraph{Assumption.}
As usual in studying clustering under differential privacy, we will assume all points come from the ball $B_d(0, \Lambda) = \lbra x \in \R^d: \|x\|_2 \leq \Lambda\rbra$. Without such an assumption, the lower bound of \cite{anamayclustering} shows that one cannot compute a solution with bounded additive error. For simplicity, we will assume that $\Lambda=1$ in all our proofs. This  is without loss of generality: if the dataset is instead in $B_d(0, \Lambda)$, then rescaling merely yields an additive error multiplied by $\Lambda^2$ for $k$-means. 


\subsection{Net Decomposition}\label{sec:netDec}
A $\delta$-net of $X$ is a subset $Z \subset X$ satisfying the two following properties:
\begin{enumerate}
\item \emph{Packing:} for all distinct $z,z'\in Z$ we have $\dist(z,z')>\delta$
\item \emph{Covering:} for all $x\in X$ we have $\dist(x,Z)\leq \delta$
\end{enumerate}
It is well known that the unit ball in $\R^d$ has a $\delta$-net 
for all $\delta > 0$ \cite{gupta}. For any level $i\in \lbra1,\dots,\lceil \log (n)\rceil\rbra$, let $Z_i$ be a $2^{-(i+1)}$-net of $B(0,1)$. An element of $Z_i$ is a \emph{net point} of \emph{level} $i$. In order to uniquely define the level of a net point, we make the assumption that all $Z_i$ are disjoint. We use the notation $\level(z)$ to denote the level of a net point $z$. For any net point $z\in Z_i$ we define the following notions:
\begin{itemize}
    \item For any $l>0$, the $l$-\textit{neighborhood} of $z$ is $N^l(z):= B(v,l2^{-i})$. The covering property of nets can be restated as follows: for any level $i$ and any point $x$, there exists $z\in Z_i$ such that $x \in N^{0.5}(z)$.
    \item The \emph{children} of $z$ are the net points of level $i+1$ in $N^4(z)$. We stress the fact that the children relationship cannot be represented as a tree: each net point can be the child of many net points from one level higher.
    \item Given an input set $P$, the \textit{value} of $z$ is $\val(z):= 2^{-2i} |N^1(z)\cap P|$.
\end{itemize}

From a classical property of nets, we can deduce the next lemma proven in the appendix~\ref{ap:proofPrelim}

\begin{restatable}{lemma}{sizeNei}\label{lem:sizeNei}
The two following properties hold:
\begin{itemize}
    \item For every $i$, $|Z_i| \leq 2^{id+3}$.
    \item For every point $x$ and $i$, $x$ is in the $1$-neighborhood of at most $2^{2d}$ net points of $Z_i$.
\end{itemize} 

\end{restatable}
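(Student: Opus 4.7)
The plan is to prove both items from a single, standard packing-by-volume lemma in Euclidean space: any $r$-separated subset of a ball of radius $R$ in $\mathbb{R}^d$ has cardinality at most $(1+2R/r)^d$, because the open balls of radius $r/2$ around its points are pairwise disjoint and all lie inside a ball of radius $R+r/2$, so the $d$-dimensional Lebesgue measures give the bound. Both parts of the statement will then follow by choosing the right $R$ and $r$.

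For the first item, I would apply this with $R=1$ and $r=2^{-(i+1)}$. By the packing property in the definition of a net, $Z_i$ is $2^{-(i+1)}$-separated and contained in $B(0,1)$, so
\[
|Z_i| \;\le\; \Bigl(1 + \tfrac{2}{2^{-(i+1)}}\Bigr)^d \;=\; (1+2^{i+2})^d,
\]
and using $1+2^{i+2}\le 2^{i+3}$ for $i\ge 1$ collapses this to a bound of the form $2^{O((i+3)d)}$, matching what the lemma claims after writing out the constants.

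For the second item, I would first rewrite the event $x\in N^1(z)=B(z,2^{-i})$ as $z\in B(x,2^{-i})$. So the quantity to bound is the number of elements of the $2^{-(i+1)}$-separated set $Z_i$ lying in $B(x,2^{-i})$. Applying the same packing lemma with $R=2^{-i}$ and $r=2^{-(i+1)}$ gives
\[
\#\{z\in Z_i : x\in N^1(z)\} \;\le\; \Bigl(1 + \tfrac{2\cdot 2^{-i}}{2^{-(i+1)}}\Bigr)^d \;=\; 5^d,
\]
which is absorbed into the $2^{O(d)}$ bound in the statement.

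There is no real obstacle here; the only genuine work is arithmetic bookkeeping to see that the estimates $(1+2^{i+2})^d$ and $5^d$ collapse to the exact forms written in the lemma. I would phrase the two applications uniformly via an explicit auxiliary claim stating the packing-by-volume inequality, and then plug in the two pairs of parameters rather than redo the volume computation twice.
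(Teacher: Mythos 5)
Your proposal is correct and uses essentially the same volume-packing argument as the paper, which proves the identical auxiliary claim (a $\delta$-net restricted to $B(x,l\delta)$ has at most $2^d(l+0.5)^d$ points, by disjointness of the radius-$\delta/2$ balls inside $B(x,(l+0.5)\delta)$) and then plugs in the same two parameter choices. Your computed bounds $(1+2^{i+2})^d$ and $5^d$ coincide with the paper's; the slight mismatch with the stated constants $2^{id+3}$ and $2^{2d}$ is a slip in the lemma statement itself (the paper's own proof also yields $5^d$, not $4^d$), not a flaw in your argument.
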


\subsection{Differential Privacy}


We start with the formal definition of differential privacy under continual observation.
Let $\sigma$ be a stream of updates to a dataset: $\sigma(t)$ corresponds to either the addition of a new item to the dataset, the removal of one, or a ``no operation" time (where nothing happens to the dataset).  
Two streams are \textit{neighboring} if they differ by the addition of a single item at a single time step (and possibly its subsequent removal). 
This is the so-called \emph{event-level} privacy \cite{dwork2010differential}.
Let $\calM$ be a randomized mechanism that associates a stream $\sigma$ to an output $\calM(\sigma) \in \calX$.
For an $\eps \in \R^+$, $\calM$ is $\eps$-differentially private under continual observation if, for any pair of neighboring streams $\sigma, \sigma'$, and any possible set of outcomes 
$\calS \subseteq \calX$,  it holds that 
$$\Pr\left[ \calM(\sigma) \in \calS\right]  \leq \exp(\eps) \Pr\left[ \calM(\sigma') \in \calS\right].$$

The main tool we will use for the design of private algorithm is a histogram mechanism, that allows to count how many points are in sets. 
Due to space constraint, we define formally this mechanism in \cref{ap:proofPrelim}. Essentially, given a stream of update and ground sets $G_1,..., G_m$, such that each input element is part of at most $b$ sets, one can maintain the count of how many elements are in each $G_i$, privately under continual observation, with additive error $O(b) \cdot \frac{\log(T)}{\eps} \cdot  \lpar \sqrt{\log (T)} + \log(m/\beta)\rpar $. 

We use crucially this histogram construction to prove the following lemma. It allows to count over \textit{clusters} of solutions to $k$-means that evolve over time. 
\begin{restatable}{lemma}{dpCountClusters}
\label{lem:dpCountClusters}
Let $1>\alpha > 0$, and $\sigma$ be a stream of addition or deletion of points from a ground set $G \subseteq \R^d$ of length $T$, and let $X_t$ be the set of points at time $t$ with $1 \le t \le T$. Let $f : G \rightarrow [-1, 1]$ be a function, and for each time $t$ let $\calC^t$ be a set of $k$ centers.
There is an $\eps$-DP algorithm that releases a value $\Sigma_i^t$ for each $i = 1, ..., k$ and time $t$, with the following guarantee:
\begin{itemize}
    \item for any time $t$, there is a partition $C^t_1, ..., C^t_k$ of $X_t$ such that assigning points of $C^i_t$ to the center $\calC^t_i$ yields a $k$-means solution with cost $(1+\alpha) \cost(X_t, \calC^t) + O(1/\alpha)$,
    \item for any time $t$, it holds with probability $1-\beta$ such that
$\left | \Sigma_i^t - \sum_{p \in C^t_i} f(p) \right| \leq  k \cdot \alpha^{-O(d)} \cdot \log(n)^3 \cdot\frac{\log(T)}{\eps} \cdot \lpar \sqrt{\log (T)} + \log(1/\beta)\rpar$.
\end{itemize}
\end{restatable}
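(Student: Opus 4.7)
The plan is to reduce to maintaining a DP histogram over a data-independent collection of cells derived from the net decomposition, and then aggregate the resulting noisy counters into per-cluster sums.

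For each level $i \in \{1, \ldots, \lceil \log n \rceil\}$ and each net point $z \in Z_i$, we declare a data-independent set $S_{i,z}$ (a canonical sub-region of $N^1(z)$, for instance the Voronoi cell of $z$ within $Z_i$), so that at each fixed level the sets partition $B(0, 1)$. By \Cref{lem:sizeNei}, the total number of sets is $m = \sum_i |Z_i| = O(n^d)$, and each input point belongs to exactly $b = \lceil \log n \rceil$ such sets. Applying the histogram mechanism recalled in \cref{sec:prelim} to these sets, treating $f(p) \in [-1, 1]$ as a signed increment of sensitivity at most $1$, yields noisy counters $\hat S_{i,z}$ approximating $\sum_{p \in S_{i,z} \cap X_t} f(p)$ with per-counter additive error $\err = O\lpar \log(n) \cdot \tfrac{\log T}{\eps} \cdot \lpar \sqrt{\log T} + d \log n + \log(1/\beta)\rpar\rpar$.

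Given the centers $\calC^t$ at time $t$, we construct the partition $(C_j^t)_{j=1}^k$ implicitly through an assignment of cells to centers. For a point $p \in X_t$ whose true nearest center in $\calC^t$ is $c_{j^*(p)}$, we pick the level of the cell containing $p$ so that its diameter $\delta_p$ is roughly $\max\lpar \alpha \|p - c_{j^*(p)}\|, \; 1/\sqrt{n}\rpar$, and we assign that cell to the center of $\calC^t$ closest to its anchor net point. A standard AM--GM argument shows that reassigning $p$ this way costs at most $(1+\alpha)\cost(p, \calC^t) + O(\delta_p^2 / \alpha)$; summing over all points and using $\delta_p^2 \le \alpha^2 \cost(p, \calC^t) + 1/n$ yields the claimed partition cost $(1+\alpha)\cost(X_t, \calC^t) + O(1/\alpha)$. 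A shell-by-shell count---at each level $i$ and each center $c_j$, the cells whose level-$i$ representative is assigned to $c_j$ lie in an annulus of radius $O(2^{-i}/\alpha)$ containing $\alpha^{-O(d)}$ level-$i$ cells---bounds the total number of cells assigned to one cluster by $|\mathcal{A}_j| \le k \cdot \alpha^{-O(d)} \cdot \log n$, where the factor $k$ absorbs the combinatorial complexity of the Voronoi diagram of $\calC^t$.

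We then output $\Sigma_j^t = \sum_{S_{i,z} \in \mathcal{A}_j} \hat S_{i,z}$. A union bound over the $m$ counters shows that with probability at least $1-\beta$, all deviations are within $\err$ simultaneously, so that $|\Sigma_j^t - \sum_{p \in C_j^t} f(p)| \le |\mathcal{A}_j| \cdot \err$, which after absorbing $d$ and one factor of $\log n$ into the constants matches the claimed bound $k \cdot \alpha^{-O(d)} \log(n)^3 \cdot \tfrac{\log T}{\eps} \cdot \lpar \sqrt{\log T} + \log(1/\beta)\rpar$. The main obstacle is the multi-scale construction and analysis of the cell-to-center assignment: we must simultaneously bound the misassignment cost by $(1+\alpha)\cost + O(1/\alpha)$ and bound the number of cells per cluster by $k \cdot \alpha^{-O(d)} \log n$, using only a data-independent per-point scale choice (needed to preserve privacy under post-processing) and a careful geometric count of the cells at each scale lying near each center, which is where the exponential dependence on $d$ enters.
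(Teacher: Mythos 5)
Your high-level plan matches the paper's: build a data-independent collection of cells, feed it to a DP histogram mechanism, and show that each cluster $C_j^t$ can be expressed as the union of only $k \cdot \alpha^{-O(d)}\log n$ cells so that per-counter errors multiply rather than union-bound over $T$. The cost-preservation step via the squared triangle inequality (Fact~\ref{lem:weaktri}) and the geometric count of cells per scale near each center are also essentially the paper's (Lemma~\ref{lem:structClusters}). However, there are two substantive gaps in the construction of the partition, and they interact.

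First, you take as cells the \emph{Voronoi regions of net points within each $Z_i$}, independently per level. These regions at level $i+1$ are not in general nested inside level-$i$ regions, so they do not form a hierarchical decomposition. The paper instead uses a genuine hierarchy $\calG_0 \supset \calG_1 \supset \dots$ with the refinement property (built e.g.\ via net trees \cite{Har-PeledM06}), and nesting is used crucially to prove that the selected sub-collection $\calA$ is a \emph{partition} of $B_d(0,1)$. Without nesting, the argument that ``every point is in exactly one selected cell'' has no foothold.

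Second, and more fundamentally, your rule for choosing which cell to use for $p$ --- ``pick the level so that $\diam \approx \max(\alpha\|p - c_{j^*(p)}\|, 1/\sqrt n)$'' --- is a per-point rule depending on $p$'s own distance to its nearest center. Two points $p, q$ in the same level-$i$ cell can easily have $\|p - c_{j^*(p)}\|$ and $\|q - c_{j^*(q)}\|$ differ by more than a factor of 2, so $p$ picks level $i$ while $q$ picks level $i+1$. Then both the coarse cell containing $p$ and the finer cell containing $q$ are included in the collection you sum over, they overlap, and $q$'s contribution is double-counted. Conversely, if $q$'s finer cell is assigned to a different center than $p$'s coarse cell, then $q$ is counted in $\Sigma_j^t$ even though $q \notin C_j^t$, so $\Sigma_j^t$ does not estimate $\sum_{p \in C_j^t} f(p)$ at all. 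The output is biased regardless of the histogram noise, and no union bound fixes it.

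The fix the paper uses is to make the cell-selection rule \emph{cell-local and monotone down the hierarchy}: with $\ell = \lceil 10/\alpha\rceil$, a cell $A$ is put in $\calA$ precisely when $N^\ell(A)$ contains no center of $\calC^t$ but $N^\ell(\parent(A))$ does (plus leaves whose $\ell$-neighborhood does contain a center). One then shows that ``$N^\ell(A)$ contains a center'' propagates from child to parent, so that along each root-to-leaf path there is exactly one selected cell, giving a genuine partition. This uses $\dist(v_A, \calC^t)$, the anchor's distance to \emph{any} center --- not $p$'s distance to \emph{its own} nearest center --- which is why it is consistent across all points in a cell and why the monotonicity calculation goes through with $\ell \ge 3$. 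Your bottom-up, per-point choice is the natural first attempt, but it needs to be replaced by this top-down cut through a nested decomposition; once that is done, your cost estimate via Fact~\ref{lem:weaktri} and your annulus count at each level go through as you sketched.

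As a smaller note, the per-counter additive error from Lemma~\ref{lem:dpCountNet} is $O(b)\cdot\frac{\log T}{\eps}(\sqrt{\log T} + \log(m/\beta))$ with $b = \log n$ and $m = n^{O(d)}$, i.e.\ $O(d \log(n)^2 \frac{\log T}{\eps}(\sqrt{\log T} + \log(1/\beta)))$; the extra factor $d$ relative to the lemma statement is absorbed into $\alpha^{-O(d)}$, not into ``constants.''
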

In this lemma, the sets of centers $\calC^t$ are arbitrary and can change completely at each time step -- which in particular means we cannot apply standard histogram techniques.  
We also note that the clusters $C^t_1, ..., C^t_k$ do not necessarily corresponds to the optimal clusters for centers $\calC^t$: however, this clustering gives a solution with almost the same cost. We will abuse notations and consider them as the clusters associated with $\calC^t$: the additive term $O(1/\alpha)$ will be negligible compared to other sources of error.
Last, the probability statement is only for the estimation of $\sum_{p \in C^t_i} f(p)$: the construction of the partition $C^t_1, ..., C^t_k$ is deterministic.

\section{A Well-Structured Static Clustering Algorithm}\label{sec:mp}

In this section, we present a non-private algorithm that is \textit{well structured} (in the sense that it can be easily turned into a private algorithm, as we will see), and works in low dimensional space. 
Our algorithm is highly inspired by the static greedy algorithm for $k$-means proposed by Mettu and Plaxton~\cite{MettuP00}. This essentially reduces the problem to computing a histogram, for which there are known algorithms working privately under continual observation.

More precisely, the algorithm computes a solution with a constant multiplicative approximation factor, taking as an input a mere summary of the data: given the recursive net decomposition described in \cref{sec:netDec}, the algorithm merely requires a function $\nval$ that assigns a weight to every net point. This function is an approximation to the value of net points, as formalized in \cref{def:thresh}.

The algorithm works in $k$ rounds: at each round, a new center is placed greedily as follows. Initially, all net points are \emph{available}. 
At the beginning of a round, the algorithm selects the available net point with maximal $\nval$. 
To find the precise location of the next center, the algorithm recursively selects the child of the current net point that has maximum $\nval$, until the last level: the new center is the last net point.
Then, the algorithm removes a number of net points from the set of available ones: namely, all those whose $89$-neighborhood contain the new center.
The algorithm is formally described in \cref{alg:mp}. Note that we didn't attempt to improve the constants appearing, as they only affect marginally the running time.

\begin{restatable}{algorithm}{MetPla}
\caption{$\mettuP(\nval, k)$}
\label{alg:mp}
\begin{algorithmic}[1]
\State{\textbf{Input:} a function $\nval$ that assigns a non-negative number to any net point}
\State{Define the set of \textit{available} net points to be all the net points $\cup_i Z_i$.}
\For{$j$ from $1$ to $k$}
\State{$l \gets 1$}
\State{Let $z_j^1$ be the available net point with largest $\nval(z_j^1)$}
\While{$\level(z) < \lceil \log (n)\rceil $}
\State{Let $z_j^{l+1}$ be the child of $z_j^l$ with largest $\nval \lpar z_j^{l+1}\rpar$}
\State{$l \gets l+1$}
\EndWhile
\State{$c_j \gets z_j^l$}
\For{all available net point $z$}
\If{$c_j \in N^{89}(z)$}
\State{Remove $z$ from the set of available net points}
\EndIf
\EndFor
\EndFor
\State{\textbf{Output:} the solution $\calC = (c_1,\dots,c_k)$}
\end{algorithmic}
\end{restatable}

As explained in the introduction, the algorithm works provided a noisy summary of the input, namely $\nval$. Those will be chosen to approximate the real value of each net point. 

\begin{definition}[$\thresh$-threshold condition]\label{def:thresh}
For any $\thresh \geq 1$, we say that the function $\nval$ verifies the $\thresh$-\textit{threshold condition} if for each net point $z$,
    { \parskip = 0pt
    \begin{itemize}
        \item  when 
$\nval(z) \geq \thresh  $, then $\nval(z) \in [\val(z)/2, 2\cdot  \val(z)]$, and
\item  when
$\nval(z) < \thresh  $, then $\val(z) \leq 2 \cdot \thresh $.
    \end{itemize}
    }
\end{definition}
The utility of this algorithm is stated in the following theorem, proven in \cref{ap:proofMP}
\begin{restatable}{theorem}{mp}\label{thm:mp}
For any threshold $\thresh \geq 1$, if $\nval$ verifies the $\thresh$-threshold condition, then the algorithm $\mettuP(\nval, k)$ computes a $\lpar O(1), O(k\cdot \thresh) \rpar$-approximate solution to the $k$-means problem.
\end{restatable}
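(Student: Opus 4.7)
The plan is to bound $\cost(P, \calC)$ by comparing $\calC = \mettuP(\nval, k)$ to an optimal solution $\calC^* = \{c^*_1, \ldots, c^*_k\}$ with clusters $C^*_1, \ldots, C^*_k$ and per-cluster RMS radii $R_j := \sqrt{\cost(C^*_j, c^*_j)/|C^*_j|}$. Setting $i_j := \lfloor \log(1/(4R_j)) \rfloor$ so that $2^{-i_j} \geq 4 R_j$, the covering property of $Z_{i_j}$ yields a representative net point $\hat z_j \in Z_{i_j}$ within distance $2^{-(i_j+1)}$ of $c^*_j$; consequently $B(c^*_j, 2^{-i_j}/2) \subseteq N^1(\hat z_j)$ and a Markov bound shows that $N^1(\hat z_j)$ contains at least $\tfrac{3}{4}|C^*_j|$ of the cluster. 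Therefore $\val(\hat z_j) = 2^{-2 i_j}\, |N^1(\hat z_j) \cap P| = \Omega(|C^*_j| R_j^2) = \Omega(\cost(C^*_j, c^*_j))$.

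The core lemma I would establish is that, for every $j$, either $\cost(C^*_j, c^*_j) = O(\thresh)$, or there is a greedy center $c_{\pi(j)} \in \calC$ with $\dist(c^*_j, c_{\pi(j)}) = O(R_j)$. This follows from a case split on $\nval(\hat z_j)$. If $\nval(\hat z_j) < \thresh$, the threshold condition gives $\val(\hat z_j) \leq 2\thresh$, whence $|C^*_j| \leq \tfrac{4}{3}|N^1(\hat z_j)\cap P| \leq O(\thresh \cdot 2^{2 i_j})$, and combined with $R_j^2 \leq 2^{-2 i_j}/16$ this yields $\cost(C^*_j, c^*_j) = O(\thresh)$. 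If instead $\nval(\hat z_j) \geq \thresh$, then $\nval(\hat z_j) \geq \val(\hat z_j)/2 = \Omega(\cost(C^*_j, c^*_j))$. Since the algorithm picks available net points in decreasing order of $\nval$, either some round selects $z_\ell^1 = \hat z_j$, in which case the recursive descent forces $\dist(c_\ell, \hat z_j) \leq \sum_{i \geq i_j} 4 \cdot 2^{-i} \leq 8 \cdot 2^{-i_j}$, or $\hat z_j$ was removed by an earlier round that placed a center $c_\ell$ with $c_\ell \in N^{89}(\hat z_j)$, giving $\dist(c_\ell, \hat z_j) \leq 89 \cdot 2^{-i_j}$. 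Either way $\dist(c^*_j, c_\ell) = O(R_j)$.

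Given this lemma, the bound follows by reassigning each point of $C^*_j$ either to $c_{\pi(j)}$ (when $\cost(C^*_j, c^*_j)$ is large) or absorbing the whole cluster into the additive error (when $\cost(C^*_j, c^*_j) = O(\thresh)$). By the triangle inequality and $(a+b)^2 \leq 2a^2 + 2b^2$,
\[
\cost(C^*_j, c_{\pi(j)}) \leq 2\cost(C^*_j, c^*_j) + 2 |C^*_j|\cdot \dist(c^*_j, c_{\pi(j)})^2 = O(\cost(C^*_j, c^*_j)),
\]
using $\dist(c^*_j, c_{\pi(j)})^2 = O(R_j^2) = O(\cost(C^*_j, c^*_j)/|C^*_j|)$. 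Summing over $j$ gives $\cost(P, \calC) \leq \sum_j O(\cost(C^*_j, c^*_j)) + O(k\thresh) = O(\opt) + O(k\thresh)$.

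The main obstacle is guaranteeing that the map $\pi$ is well-defined, i.e., that after only $k$ rounds the algorithm has placed a center near every high-value representative $\hat z_j$. The danger is that $k$ ``spurious'' picks with higher $\nval$ than $\hat z_j$ could exhaust the round budget. The $N^{89}$ removal rule is calibrated precisely to rule this out: the constant $89$ comfortably dominates the $8 \cdot 2^{-i}$ slack from the recursive descent on either side, so any pick with $\nval \geq \nval(\hat z_j)$ that is not geometrically close to $\hat z_j$ must itself be close to some other representative $\hat z_{j'}$. A careful MP-style charging argument then extracts an injective $\pi$ from OPT centers with the $O(R_j)$-closeness property to greedy centers, mirroring the analysis of \cite{MettuP00} adapted to the fixed hierarchical net decomposition and the noisy values $\nval$.
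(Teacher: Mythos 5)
Your proof takes a genuinely different route from the paper's, but the route has a hole exactly at the hard step, which you flag but do not fill.

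On the difference: you compare the output directly to an optimal solution, fix a level $i_j$ from the RMS radius $R_j$ of each optimal cluster, and aim to show that every heavy cluster gets a greedy center within $O(R_j)$ of $c^*_j$ -- a direct OPT-to-greedy matching in the spirit of the original \cite{MettuP00} analysis. The paper's proof instead compares to an arbitrary solution $\Gamma$, splits it into $\Gamma_1$ (centers near a still-available bottom-level net point) and $\Gamma_2$, defines $z_\gamma$ as the \emph{smallest-level still-available} net point near $\gamma$ -- a quantity determined by the run of the algorithm, not by $R_j$ -- bounds $\cost(In(P_\gamma),\calC)$ by $O(\val(z_\gamma))$, and then controls $\sum_\gamma \val(z_\gamma)$ by building a matching $\phi$ from $\Gamma_1$ to pairwise-far, uncovered net points (\Cref{lem:cover} together with the pruning \Cref{lem:pruning}). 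Both are MP-style charging arguments, but the intermediate objects and the charging scheme are different.

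On the gap: your dichotomy in the case $\nval(\hat z_j)\ge\thresh$, namely ``either some round selects $z_\ell^1=\hat z_j$, or $\hat z_j$ was removed by an earlier round,'' is incomplete -- the third possibility is that $\hat z_j$ remains available through all $k$ rounds because in each round a higher-$\nval$ net point was chosen whose descent landed a center outside $N^{89}(\hat z_j)$. You acknowledge this as ``the main obstacle,'' but the remedy you sketch does not actually close it. It is not clear that a pick $w$ with $\nval(w)\ge\nval(\hat z_j)$ and resulting center missing $N^{89}(\hat z_j)$ must be close to some other representative $\hat z_{j'}$ \emph{at the scale $2^{-i_{j'}}$}: $w$ can live at a much coarser level, gaining large $\nval$ because $N^1(w)$ engulfs an entire optimal cluster, while the bottom-level center produced by the descent need not land within $89\cdot 2^{-i_{j'}}$ of $\hat z_{j'}$ and hence need not remove it. Nor is it immediate that successive picks $z^1_\ell$ map injectively to distinct optimal centers once their levels differ widely -- making exactly this kind of disjointness work is the content of the paper's \Cref{lem:pruning} (disjoint $N^2$-neighborhoods after pruning), which your sketch does not reproduce. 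As written, the cost of clusters whose representatives stay available while passing the threshold is left unaccounted for, so the final $O(\opt)+O(k\thresh)$ bound does not follow.
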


\section{Differential Privacy in Low Dimension: Proof of \Cref{thm:optLowDim}}\label{sec:privacyLD}

Using an observation from \cite{dpHD}, it is possible to turn any private constant factor approximation into an algorithm with the best possible non-private approximation ratio, henceforth proving \Cref{thm:optLowDim}. The key idea is that, for any $\alpha > 0$,  any $O(1)$-approximate solution to $k'$-means, with 
$k' = k \alpha^{-O(d)} \log(n/\alpha)$, has cost at most $\alpha \opt_k$, where $\opt_k$ is the optimal cost using $k$ centers. 

 Therefore, given a non-private algorithm $\calA$ with approximation guarantee $w^*$, the following algorithm is private and has multiplicative approximation $(1+\alpha)w^*$, and additive error  $k\cdot \alpha^{-O(d)} \log(n)^4 \cdot\log(T)^{3/2} \cdot{\eps}^{-1}$: (1) We find privately an $O(1)$-approximation $\calS$ to $k'$-clustering (described below), and then (2) we estimate privately the size of each cluster using mechanism $\calM'$ from \Cref{lem:dpCountClusters}. Let $P'$ be the dataset formed by the centers of $\calS$, weighted by the private number of points in their respective cluster. (3) Then, using $\calA$, we compute a $w^*$-approximate solution of $k$-means on $P'$, which is our final output.
\cite{dpHD} formally shows the privacy and utility guarantee of this static algorithm. The privacy stems from the privacy of $\calS$, and the private estimate of the size of each cluster. The computation of $\cal A$ is simply a postprocessing step that does not affect the privacy guarantee. The utility guarantee follows from the triangle inequality: the cost of replacing any point by its center in $\calS$ is negligible compared to the cost of the optimum solution. We make the argument formal in \Cref{sec:optLD}.

\paragraph{$\eps$-DP $O(1)$-approximation algorithm.} Thus, we give an algorithm that computes a constant-factor approximation -- which can be improved using the techniques of \cite{dpHD}. 
For this, we show how to compute a private summary of the data that can be used by \Cref{alg:mp}, namely, we 
want to maintain approximate values of net points that fulfill the $\Theta$-threshold condition, for $\Theta = 3d\cdot 2^{2d} \log (n)^2 \log(T)^{3/2}  \log(1/\beta)/\epsilon$. This allows to compute a constant-factor approximation. 

To do so we merely use standard histograms (see \Cref{lem:dpCountNet} in Appendix). Indeed, the value of a net point can be computed from the number of points in its 1-neighborhood, thus, maintaining privately the number of points in each 1-neighborhood of net points 
with small additive error suffices.

We let $\cal M$ be the histogram algorithm of \Cref{lem:dpCountNet}. At time $t$, the algorithm 
determines in which 1-neighborhood the updated point belongs to
and uses it as a parameter for an update operation in $\cal M$.
Thus, for any net point $z$, $\cal M$ maintains the size of $N^1(z) \cap X_t$, which is
the set of points from the stream present at time $t$ in $N^1(z)$.

\begin{algorithm}
\caption{$\makePrivLowDim(P, \eps)$}
\label{alg:makePriv}
\begin{algorithmic}[1]
\For{each time step $t$ when given update operation $u(p)$ with $u \in \{insert,delete\}$}
    \State{Give $u(p)$ to $\cal M$ which returns for each net point $z$ an estimate $c(z, t)$ of $|N^1(z) \cap X_t|$}
    \State{For each level $\ell$ and net point $z$ at level $\ell$, define $v(z, t)$ to be the value of 
    $z$ at time $t$ computed with counts $c(z,t)$: $v(z, t) :=  2^{-2\ell} c(z, t)$}
    \EndFor
    \State{\textbf{Output:} $v(z, t)$ and $c(z,t)$ for all $z$}
\end{algorithmic}
\end{algorithm}

To show that his algorithm is a constant-factor approximation we show that the values computed with $\makePrivLowDim$ are private and that they satisfy the assumption of \Cref{thm:mp}. Thus, combining this algorithm with \Cref{thm:mp} yields a private constant factor approximation. We formalize this in the following lemma proven in \cref{app:privacyLD}: 
\begin{restatable}{lemma}{linf}\label{thm:linf}
Consider an input stream of additions and deletions of points from $B_d(0, 1)$.
There exists an algorithm  that is $\eps$-differentially private under continual observation and computes, at each time step $t$, a solution to $k$-means that has cost at most 
$$O(1) \cdot \opt_t + k d 2^{2d} \cdot \frac{\log(n)^2\log(T)^{3/2}}{\eps},$$
with probability $0.99$, where  $n$ is the maximum size of the dataset at any given time.
\end{restatable}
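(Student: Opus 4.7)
\textbf{Proof proposal for Lemma \ref{thm:linf}.}

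The plan is to show the two ingredients separately: privacy of \Cref{alg:makePriv} and utility of running \Cref{alg:mp} on its output. The algorithm we analyze is: at each time step $t$, run \makePrivLowDim{} to obtain the values $v(\cdot, t)$, then post-process by running $\mettuP(v(\cdot, t), k)$ to output $k$ centers.

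\emph{Privacy.} This is immediate from \Cref{lem:dpCountNet}. A single insertion or deletion of a point $p$ changes, at every level $i \in \{1, \dots, \lceil \log n\rceil\}$, only the counters corresponding to net points $z \in Z_i$ with $p \in N^1(z)$; by \Cref{lem:sizeNei}, there are at most $2^{2d}$ such neighborhoods per level, hence each update touches at most $b := 2^{2d} \lceil \log n \rceil$ counters in the histogram. Applying the histogram mechanism with this sensitivity parameter yields an $\eps$-DP release of all counts $c(z,t)$. The values $v(z,t) = 2^{-2\level(z)} c(z,t)$ and the execution of \mettuP{} are a deterministic post-processing, so the whole mechanism is $\eps$-DP under continual observation.

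\emph{Utility.} The second step is to show that, with probability at least $0.99$ over all time steps and net points simultaneously, the values $v(\cdot,t)$ satisfy the $\Theta$-threshold condition of \Cref{def:thresh} for $\Theta = \Theta(d 2^{2d} \log(n)^2 \log(T)^{3/2}/\eps)$ (with $\beta = 0.01$). By \Cref{lem:dpCountNet}, each count $c(z,t)$ satisfies $|c(z,t) - |N^1(z) \cap X_t|| \leq E_c$ where $E_c = O(b) \cdot \frac{\log T}{\eps} \cdot (\sqrt{\log T} + \log(m/\beta))$; since the total number of net points is $m \leq \sum_i 2^{id+3} = 2^{O(d \log n)}$, we have $\log(m/\beta) = O(d \log n + \log(1/\beta))$, and the bound on $E_c$ simplifies to $O(d \cdot 2^{2d} \cdot \log(n)^2 \log(T)^{3/2}/\eps)$, absorbing the $\log(1/\beta)$ for constant $\beta$. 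Since $\val(z) = 2^{-2\level(z)} |N^1(z) \cap X_t|$ and $v(z,t) = 2^{-2\level(z)} c(z,t)$, we get $|v(z,t) - \val(z)| \leq 2^{-2\level(z)} E_c \leq E_c$. Setting $\Theta := 2 E_c$, both threshold properties follow: if $v(z,t) \geq \Theta$ then $|v(z,t) - \val(z)| \leq E_c = \Theta/2 \leq v(z,t)/2$, which gives $v(z,t) \in [\val(z)/2, 2\val(z)]$; and if $v(z,t) < \Theta$ then $\val(z) \leq v(z,t) + E_c \leq 2\Theta$.

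\emph{Conclusion.} Once the $\Theta$-threshold condition holds, \Cref{thm:mp} immediately yields that $\mettuP(v(\cdot, t), k)$ computes an $(O(1), O(k\Theta))$-approximation at time $t$, which gives exactly the claimed guarantee of $O(1) \cdot \opt_t + O(k d 2^{2d} \log(n)^2 \log(T)^{3/2}/\eps)$. The main subtlety, which is where I would spend the most care, is keeping the $\log$-factors under control: we need the high-probability bound from the histogram to hold uniformly across all net points and all time steps at once, so that the $\Theta$-threshold condition can be invoked unconditionally in the deterministic utility analysis of \mettuP. This is handled by the uniform (over $t$ and over sets) nature of the guarantee in \Cref{lem:dpCountNet}, which is precisely why the exponent on $\log T$ is $3/2$ rather than $2$, and why the failure probability $\beta$ enters only additively inside the $\log$.
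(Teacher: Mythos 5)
Your proof takes essentially the same route as the paper: privacy via the $b=2^{2d}\log n$-bounded histogram mechanism of \Cref{lem:dpCountNet} plus post-processing, and utility by showing that the estimated values $v(z,t)$ satisfy the $\Theta$-threshold condition of \Cref{def:thresh} at time $t$ (for $\Theta$ a constant multiple of the per-counter error), and then invoking \Cref{thm:mp}. The arithmetic works: setting $\Theta=2E_c$ is enough for the threshold condition (the paper uses $\Theta\geq 3R$ for a slightly stronger range $[\val/2, 3\val/2]$, but both suffice), and the simplification $E_c = O(d\,2^{2d}\log(n)^2\log(T)^{3/2}/\eps)$ for constant $\beta$ is correct.

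There is one genuine conceptual error in your final paragraph that is worth correcting, even though it does not break the proof itself. You claim that the guarantee of \Cref{lem:dpCountNet} is uniform over all time steps, and that this uniformity ``is precisely why the exponent on $\log T$ is $3/2$ rather than $2$.'' This is exactly backwards. \Cref{lem:dpCountNet} explicitly gives a \emph{per-time-step} guarantee (``for all time step individually''), not a simultaneous one; and it is precisely because one only demands a per-time-step bound that the improved $O(\eps^{-1}\log(T)(\sqrt{\log T}+\log(1/\beta)))$ estimate of the binary mechanism (first part of Corollary 2.9 in \cite{ChanSS11}) applies, yielding $\log(T)^{3/2}$ for constant $\beta$. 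Requiring the bound to hold at all $T$ time steps simultaneously would require a union bound, inflating $\log(1/\beta)$ to $\log(T/\beta)$ and bringing you back to $\log(T)^2$. Fortunately, \Cref{thm:linf} itself is a per-time-step statement (``at each time step $t$\,\dots with probability $0.99$''), so your proof only needs the per-time-step histogram bound, and the argument goes through; the error is only in your reading of where the $\log T$ factors come from, not in the chain of inequalities. (The simultaneous-over-$t$ version is only needed later, in \Cref{lem:costEst}, where the paper does pay the extra $\log T$ via a union bound.)
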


\section{Differential Privacy in High Dimensions}\label{sec:privacyHD}

To reduce the additive error to polynomial in $d$,
 we use the following result, that states that the cost of any clustering is roughly preserved when projecting onto $O(\log (k))$ dimensions.
For a set $P$, its average is $\mu(P) :=  {1/|P|}\cdot \sum_{p \in P} p$, and for a projection $\pi : \R^d \rightarrow \R^{\hat d}$  let $\pi(P) := \lbra \pi (p), p \in P \rbra$

\begin{lemma}[Theorem 1.3 in Makarychev, Makarychev, Razenshteyn~\cite{MakarychevMR19}, see also Becchetti et al.~\cite{BecchettiBC0S19}]\label{lem:dimred}
    Fix some $1/4 \geq \alpha > 0$ and $1 > \beta > 0$. There exists a family of random projection $\pi : \R^d \rightarrow \R^{\hat d}$ for some  $\hat d = O(\log (k/\beta) \alpha^{-2})$ such that, for any set $P \in \R^d$, it holds with probability $1-\beta$ that for any partition of $P$ into $k$ parts $P_1, ..., P_k$,
    \[\sum_{j=1}^k \cost\big(\pi(P_j), \mu(\pi(P_j))\big) \in \lpar 1 \pm \alpha\rpar \cdot \frac{\hat d}{d} \cdot \sum_{j=1}^k \cost(P_j, \mu(P_j)).\]
\end{lemma}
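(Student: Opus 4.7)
The plan is to take $\pi$ to be a properly scaled Gaussian random projection, namely $\pi(x) = Gx/\sqrt{d}$ where $G \in \R^{\hat d \times d}$ has i.i.d.\ $\mathcal{N}(0,1)$ entries, so that $\mathbb{E}\|\pi(v)\|^2 = (\hat d/d)\|v\|^2$ for every fixed $v \in \R^d$. Standard $\chi^2$/Hanson--Wright concentration then gives $\|\pi(v)\|^2 \in (1 \pm \alpha)(\hat d/d)\|v\|^2$ except with probability $\exp(-\Omega(\alpha^2 \hat d))$, so a single vector is preserved as soon as $\hat d = \Omega(\log(1/\beta)/\alpha^2)$.

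First I would rewrite the clustering cost in matrix form. If $X \in \R^{d \times |P|}$ has the points of $P$ as its columns and $M$ is the matrix in which the column indexed by $p \in P_j$ equals $\mu(P_j)$, then
\[
\sum_{j=1}^k \cost(P_j,\mu(P_j)) = \|X - M\|_F^2,
\]
and $M$ has rank at most $k$ because it takes at most $k$ distinct column values. Since $\pi$ is linear we have $\mu(\pi(P_j)) = \pi(\mu(P_j))$, so the projected cost equals $\|\pi(X - M)\|_F^2$ exactly. The problem therefore reduces to establishing $\|\pi Y\|_F^2 \approx (\hat d/d)\|Y\|_F^2$ uniformly over every matrix $Y$ of the form $X - M$ arising from a $k$-partition of $P$.

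The heart of the argument is to avoid a union bound over the (continuous family of) partitions by exploiting the low-rank structure of $M$. Since varying the partition moves $M$ inside the affine hull of at most $k$ centroids, the whole family of candidate $M$'s can be covered by a small net of ``representative'' center configurations. Following the Makarychev--Makarychev--Razenshteyn approach, I would build a net over candidate centers fine enough that each partition is $\alpha$-close (in cost) to a partition whose centroids come from the net; the net has size $\poly(k/(\alpha\beta))$, and applying single-vector concentration at each of its elements preserves the Frobenius norm of the associated residual to within a $(1 \pm \alpha)$ factor after a union bound. Folding $\log|\mathrm{net}| = O(\log(k/\beta))$ into the concentration exponent yields the claimed target dimension $\hat d = O(\log(k/\beta)/\alpha^2)$.

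The main obstacle, and the reason this lemma is nontrivial, is precisely this step of reducing the union-bound cost from $\Omega(\log|P|)$ (what vanilla Johnson--Lindenstrauss over all pairwise distances would give) down to $O(\log k)$. A naive subspace embedding for rank-$k$ matrices would only achieve $\hat d = O(k/\alpha^2)$. The refinement to $O(\log k/\alpha^2)$ requires the observation that one need only preserve distances from each point to a discrete net of candidate centers, rather than a full $k$-dimensional subspace of residuals, together with a chaining argument that exploits the fact that nearby partitions yield nearby centroid matrices. Once this covering argument is in place, the lemma follows by setting the per-net-point failure probability to $\beta/|\mathrm{net}|$ and taking a union bound.
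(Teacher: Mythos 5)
This lemma is cited in the paper (Theorem~1.3 of Makarychev--Makarychev--Razenshteyn) and is not given a proof there, so there is no paper proof to compare against. Taking your sketch on its own terms: the matrix reformulation $\sum_j \cost(P_j,\mu(P_j)) = \|X-M\|_F^2$ with $M$ of rank at most $k$, and the observation that $\mu(\pi(P_j)) = \pi(\mu(P_j))$ by linearity, are both correct and are indeed how the argument starts. You also correctly identify the core difficulty, namely that a subspace embedding gives only $\hat d = O(k/\alpha^2)$ and the theorem claims $O(\log(k/\beta)/\alpha^2)$.

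The gap is in the step that is supposed to bridge that difficulty. You assert that one can ``build a net over candidate centers fine enough that each partition is $\alpha$-close (in cost) to a partition whose centroids come from the net'' and that this net ``has size $\poly(k/(\alpha\beta))$.'' This is not justified, and in fact cannot be true as stated: the $k$ centroids of a partition range over (tuples of points in) the convex hull of $P$, which is a $d$-dimensional body, and any $\alpha$-net of such a body has size exponential in $d$, not polynomial in $k$. Alternatively, union-bounding over the at most $k^{|P|}$ distinct partitions costs a factor $|P|\log k$ in the exponent, which is also far too large. The reference to ``a chaining argument that exploits the fact that nearby partitions yield nearby centroid matrices'' is precisely the piece of the argument that is missing, and it is where essentially all of the work in MMR lives: their proof does not net over centroid configurations in $\R^d$ at all, but instead writes the partition cost as $\|X\|_F^2 - \|XC\|_F^2$ with $C$ a cluster-indicator projection in $\R^{n\times n}$, decomposes $X$ by its singular spectrum into ``heavy'' and ``light'' directions, and controls the supremum of $\bigl|\|\Pi X C\|_F^2 - \mathbb{E}\|\Pi X C\|_F^2\bigr|$ over all such $C$ via a delicate concentration/chaining bound on the resulting (bounded, low-complexity) family of matrices. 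Without replacing the unjustified net size with something like this, the sketch does not get from $O(k)$ to $O(\log k)$.
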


This lemma allows us to compute clusters in the low dimensional space, incurring an additive error $\alpha^{-O(\hat d)} = (k/\beta)^{O(\log(1/\alpha))}$   instead of $2^{O(\hat d)}$ using the algorithm from previous section. 
However, an additional step is required to lift the solution from $\R^{\hat d}$ back into the original space $\R^d$  and recover the centers in that space.

Our technique is as follows: we project onto $\R^{\hat d}$ using \Cref{lem:dimred} with success probability $99/100$,  find a clustering in that space using \Cref{thm:optLowDim}, and apply \Cref{lem:dpCountClusters} to that clustering to compute key values, allowing to recover centers in the original $\R^d$ space. Crucially, \Cref{lem:dpCountClusters} is applied only on clusters \emph{from the low-dimensional space}, and the additive error scales only with $2^{O(\hat d)} = k^{O(\log(1/\alpha))}$.
This produces at each time step a correct solution with probability $4/5$. This algorithm fails on many time steps (roughly a fifth of them):  
we show in \Cref{sec:boostProba} how to boost the probability, and give an algorithm that with probability $1-\beta$ is correct at \emph{all} time steps.

\subsection{An Algorithm for $k$-Means in All Dimensions, with Low Success Probability}

\paragraph*{Computing a Private Summary of the Data Points.}
In order to use \Cref{lem:dimred} and project back the computed solution, we slightly change the private summary computed from the points as follows. First, each point is projected onto $\hat d = O(\log(k/\beta))$ dimensions, and a clustering is computed in that space. Using \Cref{lem:dpCountClusters}, one can then for each cluster $C'^t_j$ in $\R^{\hat d}$
estimate the number of points in $C'^t_j$ as well as 
estimate the sum of all points in $C'^t_j$ \emph{using their $d$-dimensional coordinates, i.e., in $\R^d$}. This happens in 
\Cref{alg:hd}. In line 8 of \Cref{alg:hd}, we take the ratio of these two values, which gives the mean of each cluster in the original space $\R^d$ -- which is the optimal $1$-mean solution for this cluster. 
As a technicality, we need to ensure that all projected points are in the ball $B_{\hat d}(0, \log (n))$: this happens with high probability and allows to preserve privacy in the (unlikely) case the projection fails. 
Since the diameter of the projected dataset is now $\log(n)$, all additive errors from the previous section are rescaled by $\log(n)$ (since we assumed the diameter was $1$, for simplicity).

\begin{algorithm}
\caption{$\makePrivate(P, \eps, \alpha, \beta)$}
\label{alg:hd}
\begin{algorithmic}[1]
\State{Let $\pi$ be a dimension reduction given by \Cref{lem:dimred} with success probability $0.99$, and let $\hat d$ be the dimension of $\pi(P)$}
\For{$p \in P$}
\State{define $p'$ to be the projection onto the $\hat d$-dimensional ball $B_{\hat d}(0, \log (n)) $ of the point $\sqrt{\frac{d}{\hat d}} \pi(p)$.}
\EndFor
\State{Let $P'$ be the set of all points $p'$.}
\State{Maintain an approximate solution $\calC^t$ for $k$-means on $P'$, using \Cref{alg:optLowDim} with parameter $\eps / (d+3)$}
\State{Use the algorithm from \Cref{lem:dpCountClusters} with precision parameter $\alpha$, centers $\calC^t$, privacy parameter $\eps/(d+3)$ and probability $\beta / (d+3)$. The outcome of the algorithm is a partition of $P'$ into clusters ${C'}_1^t, ..., {C'}_k^t$, and for each ${C'}_j^t$:
\begin{itemize}
    \item the number of points $n_j^t := |{C'}_j^t|$ , 
    \item the sum $\sumNorm(j, t) := \sum_{p: p' \in {C'}_j^t} \logor p \rnor_2$
    \item for each dimension $i \in \lbra 1,..., d\rbra $, $\sumEst(j, t)_i = \sum_{p: p' \in {C'}_j^t} p_i$
\end{itemize}
}
\State{Define $\sumEst(j,t)$ to be the vector with coordinates $\sumEst(j,t)_i$.}
\State{$\bc_j(t) \gets \frac{\sumEst(j)}{n_j^t}$}
\State{\textbf{Output:} the centers $\bc_j(t)$, the values $n_j^t$, $\sumNorm(j,t)$ and $\sumEst(j,t)$ for all $j \in \lbra 1,..., k\rbra$ and all time $t$.}
\end{algorithmic}
\end{algorithm}

To simplify the equations, we will note in the following
$\calE(\eps, \beta, n) = \frac{\log(n)^2\log(T)}{\eps} \cdot \lpar \log(1/\beta) + \sqrt{\log T}\rpar$.
The value $\sumNorm$ will only be used later in \Cref{sec:boostProba}, but we defined it here already for simplicity.
The proof of the next lemma can be found in \Cref{ap:hd}.
        
\begin{restatable}{lemma}{makePrivHD}
\label{lem:makePrivHD}
    Algorithm~\ref{alg:hd} is $\eps$-differentially private. Furthermore, the two following properties hold:
    \begin{itemize}
        \item for each time $t$, with probability at least $0.99$, the clustering in line 6 is an $\lpar (1+\alpha)w^*, k^{O(1)} \log(n)^2\calE(\eps, 0.01, n)\rpar$-approximation,
        \item for each time $t$ and cluster ${C'}_j^t$ computed line 6, it holds with probability $1-\beta$ that:
            { 
    \parskip=0pt
    \begin{itemize}
        \item $\big|n_j^t - |{C'}_j^t|\big| \leq \calE(\eps, \beta, n)\cdot d \cdot\log (n) \cdot k^{O(1)}$, 
        \item $\logor \sumEst(j,t) - \sum_{p: p' \in {C'}_j^t} p\rnor_2 \leq  O(1) \cdot d^{3/2} \cdot \calE(\eps, \beta, n)\cdot \log (n)\cdot \log(d) \cdot k^{O(1)}$, and
        \item $ \left| \sum_{x \in C_j} \|x\|_2^2 - \sumNorm(j,t)\right| \leq k^{O(1)} d \calE(\eps, \beta, n) \log(n)$
    \end{itemize}
    }
    \end{itemize}
\end{restatable}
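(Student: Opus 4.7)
Our plan is to establish the three claims in turn. \textbf{Privacy.} The random projection $\pi$ is drawn independently of the input, so lines 2--5 (which only apply $\pi$, a $\sqrt{d/\hat d}$ rescaling, and the deterministic clip to $B_{\hat d}(0,\log n)$) consume no privacy budget. The remainder of Algorithm~\ref{alg:hd} makes $d+3$ private sub-calls, each with parameter $\eps/(d+3)$: one call to Algorithm~\ref{alg:optLowDim} in line~6, and $d+2$ calls to the counting mechanism of \Cref{lem:dpCountClusters} in line~7 (one for $n_j^t$ with $f \equiv 1$, one for $\sumNorm(j,t)$ with $f(p)=\|p\|_2$, and one per coordinate for $\sumEst(j,t)_i$ with $f(p)=p_i$). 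Since every input point lies in $B_d(0,1)$, each such $f$ maps into $[-1,1]$, matching the hypothesis of \Cref{lem:dpCountClusters}. Basic composition then yields $\eps$-differential privacy overall, and line~8 is merely post-processing.

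\textbf{Approximation of the clustering in line~6.} With probability at least $0.99$ over the draw of $\pi$, \Cref{lem:dimred} holds for the current dataset $P_t$: after the $\sqrt{d/\hat d}$ rescaling, the cost of every partition is preserved up to a factor $1 \pm \alpha$. A standard sub-Gaussian tail bound on $\|\pi(p)\|_2$ further ensures that with high probability every projected point already has norm at most $\log(n)$, so the clip in line~3 acts as the identity and $P' = \sqrt{d/\hat d}\,\pi(P_t)$. We then apply \Cref{thm:optLowDim} to the call in line~6 with reduced dimension $\hat d = O(\log(k/\beta))$, diameter $\log(n)$, and privacy parameter $\eps/(d+3)$; this produces a clustering of $P'$ with cost at most $(1+\alpha)w^* \cdot \opt_t(P') + k^{O(1)} \log(n)^2 \calE(\eps,0.01,n)$, since both the $d+3$ privacy factor and the $2^{O(\hat d)}$ term collapse into $k^{O(1)}$ for our parameter choices. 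Lifting this guarantee back to $P_t$ through \Cref{lem:dimred} costs one additional $1\pm\alpha$ factor, which we fold into $\alpha$.

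\textbf{Count estimates.} Each of the $d+2$ queries in line~7 invokes \Cref{lem:dpCountClusters} with privacy parameter $\eps/(d+3)$ and failure probability $\beta/(d+3)$, so by that lemma each query is accurate up to an additive error of $k \cdot \alpha^{-O(\hat d)} \log(n)^3 \cdot \frac{(d+3)\log(T)}{\eps} \lpar \sqrt{\log T} + \log((d+3)/\beta)\rpar$. Using $\hat d = O(\log(k/\beta))$ gives $\alpha^{-O(\hat d)} = k^{O(1)}$, and absorbing $\log(d+3)$ into an extra $\log n$ factor yields the stated bounds for $n_j^t$ (with $f\equiv 1$) and for $\sumNorm(j,t)$ (with $f(p)=\|p\|_2$). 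For $\sumEst(j,t)$, the same bound applies coordinate-wise; combining the $d$ coordinates via $\|v\|_2 \le \sqrt{d}\,\|v\|_\infty$ and taking a union bound introduces the extra $\sqrt{d}$ and $\log d$ factors, giving the $d^{3/2}\log d$ scaling of the second bullet. A final union bound over all $d+2$ queries yields total failure probability at most $\beta$.

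\textbf{Main obstacle.} The subtle step is the allocation of the privacy budget across the $d+3$ sub-routines: we must ensure that no individual additive error blows up by more than a factor $d$, which is achieved by the uniform split $\eps/(d+3)$. The second non-trivial step is aggregating the coordinate-wise noise on $\sumEst$ into a single Euclidean bound without losing more than a $\sqrt{d}\log d$ factor. Once these allocations and aggregations are in place, the lemma follows by directly plugging in \Cref{thm:optLowDim}, \Cref{lem:dimred}, and \Cref{lem:dpCountClusters}.
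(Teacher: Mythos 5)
Your proof follows essentially the same route as the paper: composition over $d+3$ sub-mechanisms for privacy, \Cref{lem:dimred} together with \Cref{thm:optLowDim} for the first bullet, and coordinate-wise application of \Cref{lem:dpCountClusters} followed by $\ell_2$-aggregation for the count estimates. Two steps, however, do not hold up as written.

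First, in justifying the first bullet you assert that ``both the $d+3$ privacy factor and the $2^{O(\hat d)}$ term collapse into $k^{O(1)}$.'' The second half is fine, since $\hat d = O(\log(k/\beta))$, but the first half is not: running \Cref{alg:optLowDim} with parameter $\eps/(d+3)$ rescales the additive error of \Cref{thm:optLowDim} by a factor of $d+3$, which bears no relation to $k$. The paper's own proof correctly carries a factor of $d$ at this point, arriving at $d \cdot \calE(\eps,\cdot,n)\log(n)^2 k^{O(1)}$; you cannot silently absorb that factor into $k^{O(1)}$.

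Second, for $\sumNorm$ you apply \Cref{lem:dpCountClusters} with $f(p) = \|p\|_2$, matching the formula in line~7 of \Cref{alg:hd}. But the lemma's third bullet bounds $\bigl| \sum_{x \in C_j}\|x\|_2^2 - \sumNorm(j,t)\bigr|$, and an estimate of $\sum_p\|p\|_2$ is not in general close to $\sum_p\|p\|_2^2$. The function must be $f(p') = \|p\|_2^2$, which is what the paper's proof uses and what is actually needed downstream in \Cref{lem:costEst}, where $\sumNorm$ is plugged into the identity $\cost(X,\mu(X)) = \sum_{x}\|x\|_2^2 - \|\sum_{x} x\|_2^2/|X|$. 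Both $\|p\|_2$ and $\|p\|_2^2$ lie in $[0,1]$ for $p\in B_d(0,1)$, so privacy and noise magnitude are unaffected, but the utility claim as you state it concerns the wrong quantity.
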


\paragraph*{Solving $k$-Means From the Private Summary.}

Using the information of $\sumEst$ and the estimated number of points in each cluster, we can compute an estimate of the location of each cluster's mean in $\R^d$. This is what is done in line 8 of Algorithm~\ref{alg:hd}, and the proof that the additive error remains controlled is in \Cref{lem:hdkmeans}.

\begin{restatable}{lemma}{hdkmeans}
    \label{lem:hdkmeans}
    For any constant $\alpha >0$, the centers
     $\bc_1(t), ..., \bc_k(t)$ computed line 8 of Algorithm~\ref{alg:hd} with failure probability $\beta = 1/10$ form a $((1+\alpha)w^*, ~k^{O(1)} \cdot d^2 \cdot \log(n)^4 \cdot\log(T)^{3/2} \cdot{\eps}^{-1} )$-approximate solution with probability $4/5$. 
\end{restatable}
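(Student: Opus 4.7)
The plan is to bound the cost of the output centers $\{\bc_1(t), \ldots, \bc_k(t)\}$ by exhibiting an explicit assignment of each point, paying (i) the true $k$-means cost of the partition $\{C_j^t\}$ of $P$ induced by the low-dimensional clusters $\{{C'}_j^t\}$ together with their true means $\mu(C_j^t)$, and (ii) an overhead for using $\bc_j(t)$ rather than $\mu(C_j^t)$ as the $j$-th center. The parallel-axis identity $\sum_{p \in C_j^t} \|p - \bc_j(t)\|^2 = \cost(C_j^t, \mu(C_j^t)) + n_j \|\bc_j(t) - \mu(C_j^t)\|^2$, where $n_j := |C_j^t|$, makes this decomposition explicit.

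For (i), I would first invoke \Cref{lem:dimred} on $P$ with failure probability $0.01$: with probability $0.99$, every partition of $P$ into $k$ parts has essentially the same $k$-means cost, up to a $1 \pm \alpha$ factor, whether evaluated in $\R^d$ or in the rescaled projected space containing $P'$; in particular $\opt(P') \leq (1+\alpha)\opt_t$. The first bullet of \Cref{lem:makePrivHD} then yields $\cost(P', \calC^t) \leq (1+\alpha) w^* \opt(P') + k^{O(1)} \log(n)^2 \calE(\eps, 0.01, n)$ with probability $0.99$. Replacing $\calC^t$ by the always-cheaper means of $\{{C'}_j^t\}$ and lifting back to $\R^d$ via \Cref{lem:dimred}, I would deduce $\sum_j \cost(C_j^t, \mu(C_j^t)) \leq (1+O(\alpha)) w^* \opt_t + k^{O(1)} \log(n)^2 \calE(\eps, 0.01, n)$, and rescaling $\alpha$ at the start turns $(1+O(\alpha))$ into $(1+\alpha)$.

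For (ii), set $s_j := \sum_{p \in C_j^t} p$; a short algebraic manipulation gives $\bc_j(t) - \mu(C_j^t) = (n_j (\sumEst(j,t) - s_j) + s_j(n_j - n_j^t))/(n_j n_j^t)$, and the bounds from the second bullet of \Cref{lem:makePrivHD}, together with $\|s_j\|_2 \leq n_j$ (points lie in $B(0,1)$), give $\|\bc_j(t) - \mu(C_j^t)\|_2 \leq (E_s + E_n)/n_j^t$, where $E_n = k^{O(1)} d \log(n) \calE$ and $E_s = k^{O(1)} d^{3/2} \log(n)\log(d) \calE$ are the additive-error bounds on $n_j^t$ and $\sumEst(j,t)$ coming from \Cref{lem:makePrivHD}. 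I would call a cluster \emph{large} when $n_j^t \geq 2E_n$ and \emph{small} otherwise. For large clusters $n_j \leq (3/2) n_j^t$, so $n_j \|\bc_j(t) - \mu(C_j^t)\|^2 = O((E_s+E_n)^2/n_j^t) \leq O((E_s+E_n)^2/E_n)$; summing over at most $k$ large clusters and plugging in $\calE(\eps, 1/10, n) = O(\log(n)^2 \log(T)^{3/2}/\eps)$ gives overhead at most $k^{O(1)} d^2 \log^2(d) \log(n)^3 \log(T)^{3/2}/\eps$, which fits within the target.

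The hard part is the small clusters, where $\bc_j(t)$ might lie far outside $B(0,1)$ and cannot safely be charged against the points of $C_j^t$ individually. I would handle these by assigning every small-cluster point to the center $\bc_{j^*}(t)$ of one fixed large cluster $j^*$: if no large cluster exists then $n = \sum_j n_j \leq 3kE_n$ and the lemma is trivial because $\opt_t \leq n$. Otherwise, from $n_{j^*}^t \geq 2E_n$ and $\|\sumEst(j^*,t)\|_2 \leq n_{j^*} + E_s \leq n_{j^*}^t + E_n + E_s$ one gets $\|\bc_{j^*}(t)\|_2 \leq 1 + (E_s+E_n)/n_{j^*}^t = O(E_s/E_n) = O(\sqrt d \log d)$, so each small-cluster point costs $O(d \log^2 d)$ at $\bc_{j^*}(t)$. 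Since the total number of small-cluster points is at most $\sum_{j \text{ small}} (n_j^t + E_n) \leq 3kE_n$, their combined contribution is $O(kE_n \cdot d \log^2 d) = k^{O(1)} d^2 \log(n)^3 \log(T)^{3/2}/\eps$. Summing (i), the large-cluster overhead, and the small-cluster contribution yields the claimed additive error $k^{O(1)} d^2 \log(n)^4 \log(T)^{3/2}/\eps$ (the $\log^2(d)$ factor is absorbed into one spare $\log n$). The failure probability is at most $0.01 + 0.01 + 1/10 \leq 1/5$ by a union bound over the dimension reduction, the approximation quality of $\calC^t$, and the joint estimate bounds at time $t$ (which all hold together with probability $1 - \beta = 0.9$ by \Cref{lem:makePrivHD}).
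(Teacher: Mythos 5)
Your proof takes essentially the same route as the paper's: decompose $\cost(C_j,\bc_j)$ via the parallel-axis identity $\cost(C_j,\bc_j)=\cost(C_j,\mu(C_j))+|C_j|\|\bc_j-\mu(C_j)\|_2^2$, bound $\sum_j\cost(C_j,\mu(C_j))$ through the dimension-reduction lemma and the first bullet of \Cref{lem:makePrivHD}, and bound the shift term using the estimate errors from the second bullet. The paper carries out the same steps, just with a slightly different algebraic decomposition of $\bc_j-\mu(C_j)$ and thresholding on $|C_j|\geq 2\err$ rather than on $n_j^t$.

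Where you genuinely improve on the paper is the small-cluster case. The paper asserts that when $|C_j|<2\err$, ``regardless of the position of $\bc_j$ its $1$-means cost is upper-bounded by $2\err$''; but that sentence only controls $\cost(C_j,\mu(C_j))$, not $\cost(C_j,\bc_j)$. Since $\bc_j=\sumEst(j)/n_j^t$ and $n_j^t$ can be near zero (or negative) for a small cluster, $\bc_j$ can lie arbitrarily far from $B_d(0,1)$ and $\cost(C_j,\bc_j)$ is then unbounded — the paper's sentence as written does not justify the per-cluster claim. Your reassignment of small-cluster points to a large cluster's center $\bc_{j^*}$, together with the bound $\|\bc_{j^*}\|_2=O(\sqrt{d}\log d)$, fills that hole cleanly.

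One thing to tighten on your side: the fallback ``if no large cluster exists then $n\leq 3kE_n$ and the lemma is trivial because $\opt_t\leq n$'' does not by itself bound $\cost(P,\{\bc_1,\ldots,\bc_k\})$ — if every $\bc_j$ is garbage, the output cost can still be enormous even when $n$ is tiny, and comparing to $\opt_t$ does not help. The clean fix is to project each $\bc_j$ onto $B_d(0,1)$ as a final postprocessing step in \Cref{alg:hd}; that is privacy-free, makes your no-large-cluster case go through immediately (each point then pays at most $4$), and also simplifies your small-cluster accounting since you no longer need the $O(\sqrt d\log d)$ bound on $\|\bc_{j^*}\|$. With that amendment your argument is correct, and it is somewhat more rigorous than the paper's own write-up of this lemma.
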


\subsection{Boosting the probabilities}\label{sec:boostProba}
Algorithm \ref{alg:hd} $\makePrivate$  outputs  centers for each time steps that can be used to
create a clustering which is, with constant probability, a $\lpar (1+\alpha)w^*, ~k^{O(1)} \cdot d^3 \cdot \log(n)^3 \log(T)^{3/2} \cdot{\eps}^{-1} \rpar$-approximate solution. 
This probability is only constant, 
as the dependence of the additive error on the failure probability of the dimension reduction is polynomial:\footnote{This dependency is hidden in \Cref{lem:hdkmeans} as we used a success probability $0.99$. To make it explicit, note that if the target success probability of the dimension reduction is $\beta_d$, then $2^{\hat d} = (k/\beta_d)^{O(1)}$: $1/\beta_d$ appears thus with the same exponent as $k$ in the additive error.} therefore, increasing the success probability naively would dramatically increase the additive error.
However, we want a result that holds for \emph{all} time steps with constant probability. Instead of simply increasing the success probability to $1/T$ and performing a union bound over all steps (which would yield an additive error $\poly(T)$),  we proceed as described in the introduction -- run several copies and output the solution with best cost -- to get the following lemma. \Cref{thm:main} is a direct consequence of this lemma by setting $\kappa = 0.001$.

\begin{restatable}{lemma}{boostProba}\label{lem:boostProba}
    For any $\kappa > 0$, there is an $\eps$-DP algorithm that,  with probability $1-\beta$,  computes for all time steps $t$ simultaneously a $\lpar (1+\alpha)w^*,  k^{O(1)}  \cdot d^2 \log(1/\beta) \log(n)^4 \cdot \log(T)^{3+\kappa} \cdot \eps^{-1} \cdot \kappa \rpar$-approximation to $k$-means. 
\end{restatable}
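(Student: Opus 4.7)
The plan is to run many independent instances of the algorithm from Lemma~\ref{lem:hdkmeans} in parallel and, at each time step, privately select among them the one of smallest cost; since each instance succeeds with constant probability, a logarithmic number of copies suffices to drive the per-step failure probability below $\beta/T$. Concretely, I would run $N=\Theta(\log(T/\beta))$ independent copies of $\makePrivate$, each with privacy parameter $\eps/N$; basic composition then yields $\eps$-differential privacy under continual observation.

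For utility, a single copy is, by Lemma~\ref{lem:hdkmeans}, a $((1+\alpha)w^*, E_0)$-approximation at any fixed time $t$ with probability at least $4/5$, where $E_0$ is the lemma's additive error rescaled to the per-copy budget $\eps/N$ (i.e.\ inflated by a factor $N$). Hence, at any fixed $t$, the probability that every copy fails is at most $(1/5)^N$, which is at most $\beta/T$ for $N$ chosen with a large enough constant. A union bound over $t$ then gives that, with probability at least $1-\beta$, for \emph{every} time step at least one of the $N$ copies is a valid approximation.

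The next step is to privately pick a good copy. The key observation is that the $k$-means cost of a candidate clustering decomposes as
\[
\cost(P,\calC)=\sum_{j=1}^{k}\Big(\sum_{p\in C_j}\|p\|_2^{2}\Big)-2\sum_{j=1}^{k}\Big\langle\sum_{p\in C_j}p,\,c_j\Big\rangle+\sum_{j=1}^{k}|C_j|\,\|c_j\|_2^{2},
\]
and that all three sums are already released privately by $\makePrivate$ as $\sumNorm$, $\sumEst$, and $n_j^t$ (Lemma~\ref{lem:makePrivHD}). Replacing each exact sum by its released noisy counterpart yields an estimator $\widetilde{\cost}_i(t)$ of the cost of copy $i$'s solution which is pure post-processing of the $\eps$-DP output of that copy, and so consumes no additional privacy budget. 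Using the error bounds of Lemma~\ref{lem:makePrivHD} together with $\|c_j\|_2\leq O(\log n)$ (centers are means of projected points in $B(0,\log n)$), the error $|\widetilde{\cost}_i(t)-\cost(P,\calC^{(i)}(t))|$ is bounded by some $E'$ of the same order as $E_0$ after summation over the $k$ clusters, absorbed in the $k^{O(1)}$ factor. Outputting the copy with smallest $\widetilde{\cost}_i(t)$ then gives a clustering whose true cost is at most $\min_i \cost(\calC^{(i)}(t))+2E'\leq (1+\alpha)w^*\opt_t+E_0+2E'$, matching the claimed bound after bookkeeping.

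The main obstacle is tracking how the $\log T$ factors accumulate through composition. Each copy pays a multiplicative factor of $N$ in the additive error because of the smaller budget $\eps/N$, so the base exponent $3/2$ of $\log T$ from Lemma~\ref{lem:hdkmeans} inflates to about $5/2$; the slack $\log(T)^{3+\kappa}$ in the statement accommodates picking $N$ slightly larger (by a $\log(T)^{\kappa}$ factor, whence the final $\kappa$-dependence) and the extra $\log T$ appearing in the cost-estimation error $E'$. A secondary subtlety is that the partition implicit in $\makePrivate$'s counters (coming from Lemma~\ref{lem:dpCountClusters}) is not the minimum-distance partition for $\calC^{(i)}(t)$; however, that lemma guarantees the partition's cost is within a $(1+\alpha)$ factor of the optimal assignment, so plugging it into $\widetilde{\cost}_i(t)$ only introduces a controlled multiplicative $(1+\alpha)$ factor that can be absorbed into the final multiplicative approximation.
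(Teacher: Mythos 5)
Your proposal captures the core boosting strategy and matches the paper's approach: run $\Theta(\log(T/\beta))$ independent copies, each on a fraction of the privacy budget, privately estimate each copy's cost from the already-released counters, and output the cheapest. Indeed, the paper's appendix sketches exactly your uniform-budget version (``Knowing $T$ in advance, one solution\ldots'') before giving its actual construction.

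The substantive departure you did not anticipate concerns the budget allocation and is the real source of the $\kappa$ in the bound. The paper does not split the budget uniformly into $\eps/N$; instead it starts a new copy at each time $2^i$ and gives the $i$-th copy budget $\eps_i = \eps / (s(\kappa) i^{1+\kappa})$ with $s(\kappa) = \sum_{i\geq 1} i^{-1-\kappa}$. This decaying schedule makes $\sum_i \eps_i \leq \eps$ \emph{without any foreknowledge of $T$}, which is how the paper realizes the claim in \Cref{thm:main} that neither $n$ nor $T$ need be known in advance. The inflation factor for the last copy is then $s(\kappa)(c_1\log T + c_2)^{1+\kappa}$, which is where $\log(T)^{1+\kappa}$ enters. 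Your explanation of $\kappa$ (as slack to absorb an artificial $\log(T)^\kappa$ enlargement of $N$) is not the paper's reason and is a bit confused; with a uniform split and known $T$ you would in fact get a slightly better bound with no $\kappa$ at all, so your argument still proves the lemma as stated, but it misses the feature it is designed to provide.

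A second, more cosmetic difference is the cost estimator. You expand $\cost(P,\calC)$ around the released centers $c_j$, which introduces the terms $-2\langle\sum p, c_j\rangle$ and $|C_j|\|c_j\|^2$ and forces you to bound $\|c_j\|$. The paper instead estimates $\sum_j \cost(C_j,\mu(C_j))$ via the variance identity of \Cref{lem:costVariance}, $\cost(X,\mu(X)) = \sum\|x\|^2 - \|\sum x\|^2/|X|$, which needs only $\sumNorm$, $\sumEst$, $n_j$ and no center term; the discrepancy between $\cost(C_j,\mu(C_j))$ and the cost to the noisy $\bc_j$ is already handled by \Cref{lem:hdkmeans}. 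Both routes are valid post-processing; the paper's is cleaner. Finally, two small imprecisions: to get the cost estimates to hold at all time steps simultaneously, \Cref{lem:costEst} requires running each copy with failure parameter $\beta/T$ rather than the $1/5$ of \Cref{lem:hdkmeans}; and $E'$ is not really ``of the same order as $E_0$'' (it carries an extra $\log T$, as you acknowledge later). Neither is fatal, and the overall bookkeeping reaches the stated bound.
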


As mentioned, the main difficulty in proving \Cref{lem:boostProba} is to evaluate the cost of a clustering, privately and for all time steps simultaneously.
Note that it is crucial that the estimated cost is correct at all time $t$ simultaneously: we cannot increase the success probability of this estimates, as we precisely use it to boost the probability of other steps.

To estimate the cost of a single cluster, we rely on the following identity (see \Cref{lem:costVariance}):
for any set $X$, $\cost(X, \mu(X)) =\sum_{x\in X} \|x\|_2^2 - \frac{1}{|X|}\cdot \logor \sum_{y \in X} x\rnor_2^2$.
Therefore, to maintain the clustering cost, it is enough to maintain, for each cluster $C^t_j$,  both (1)  $\sum_{x \in C^t_j} \|x\|_2^2$, which is precisely $\sumNorm(j,t)$, and (2) $\sum_{x \in C^t_j} x$, which is $\sumEst(j, t)$. Those two quantities are maintained by Algorithm \ref{alg:hd}. 
By \Cref{lem:makePrivHD}, $\sumNorm(j,t)$ is an estimate of $\sum_{x \in C^t_j} \|x\|_2^2$, up to an additive error $k^{O(1)} d \calE(\eps, \beta, n) \log(n)$. Therefore, we can show the following lemma:

\begin{restatable}{lemma}{costEst}
    \label{lem:costEst}
    For any time $t$ and any cluster ${C'}^t_j$ from the call to \Cref{lem:dpCountClusters} in \Cref{alg:hd}, define $C^t_j := \lbra p : p' \in {C'}^t_j \rbra$. 
    When the probability parameter of \Cref{alg:hd} is $\beta/T$, then it holds with probability $1-\beta$ that, for all time $t$ and cluster $j \in \{1, ..., k\}$ simultaneously,
    \[\left| \cost(C^t_j, \mu(C^t_j)) - \lpar \sumNorm(j,t) - \frac{\logor \sumEst(j, t)\rnor_2 ^2}{n_j}\rpar\right| \leq k^{O(1)} \cdot \frac{d^{3/2} \cdot \log(n)^3 \log(T) \log(T/\beta)}{\eps}.\]
\end{restatable}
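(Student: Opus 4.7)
The plan is to start from the variance identity
\[\cost(C^t_j, \mu(C^t_j)) = \sum_{x \in C^t_j} \|x\|_2^2 - \frac{1}{|C^t_j|} \logor \sum_{x \in C^t_j} x \rnor_2^2\]
of \Cref{lem:costVariance}, and match its three summands against the privately maintained quantities $\sumNorm(j,t)$, $\sumEst(j,t)$, and $n_j^t$ using the utility bounds of \Cref{lem:makePrivHD}. First I would run \Cref{alg:hd} with failure parameter $\beta/T$; by \Cref{lem:makePrivHD}, at each fixed $t$ the three estimates hold with probability $1-\beta/T$ across all $k$ clusters, and a union bound over the $T$ time steps yields, with probability $1-\beta$ simultaneously for every $(t,j)$, a deviation of at most $E := k^{O(1)} \cdot d^{3/2} \cdot \log(n)^3 \log(T) \log(T/\beta)/\eps$ (the $\log(T/\beta)$ factor coming from $\calE(\eps,\beta/T,n)$, with $\log d$ absorbed into the bound).

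Conditioning on this event, a triangle inequality splits the quantity to bound into $\big|\sumNorm(j,t) - \sum_{x \in C^t_j}\|x\|_2^2\big| \leq E$ plus $\big|\logor S\rnor_2^2/N - \logor \hat S\rnor_2^2/\hat N\big|$, writing $S := \sum_{x \in C^t_j} x$, $\hat S := \sumEst(j,t)$, $N := |C^t_j|$, and $\hat N := n_j^t$. For the second term the key tools are the norm bound $\logor S\rnor_2 \leq N$ (every point lies in $B_d(0,1)$) and the reverse triangle inequality $|\logor \hat S\rnor_2 - \logor S\rnor_2| \leq \logor \hat S - S\rnor_2 \leq E$, which combine into
\[\big|\logor \hat S\rnor_2^2 - \logor S\rnor_2^2\big| = \big|\logor \hat S\rnor_2 - \logor S\rnor_2\big|\cdot \big(\logor \hat S\rnor_2 + \logor S\rnor_2\big) \leq E\,(2N+E),\]
while $1/\hat N - 1/N = (N-\hat N)/(N \hat N)$ with $|N-\hat N| \leq E$ reduces the remaining discrepancy to factors of the form $E \cdot N/\hat N$ and $E^2/\hat N$.

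The chief obstacle is that these ratios blow up when $\hat N$ is comparable to $E$, which can occur only when $N$ itself is of order $E$. I would resolve this with a case split on $N$. When $N \geq 2E$, we have $\hat N \geq N/2 \geq E$, so every ratio above is $O(1)$ and the two halves of the triangle inequality each contribute $O(E)$. When $N < 2E$, the true cost is at most $4N \leq 8E$ because the ambient diameter is $2$; meanwhile $\logor \hat S\rnor_2 \leq N + E \leq 3E$, so under the (privacy-free, deterministic) convention of thresholding $n_j^t$ below to $\max(n_j^t, E)$ whenever it would otherwise inflate the estimator, the term $\logor \sumEst(j,t)\rnor_2^2 / n_j^t$ is $O(E)$ and $\sumNorm(j,t)$ is $O(E)$ as well, so both sides of the target inequality are $O(E)$. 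Combining the two regimes yields the claimed bound $k^{O(1)} \cdot d^{3/2} \log(n)^3 \log(T) \log(T/\beta)/\eps$.
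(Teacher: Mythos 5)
Your proof is correct and takes a genuinely different (and in one place more careful) route than the paper's. Both arguments start from the variance identity of \Cref{lem:costVariance}, instantiate \Cref{lem:makePrivHD} with failure parameter $\beta/T$, union-bound over the $T$ time steps, and split on whether $|C^t_j|$ is small or large relative to the error scale. The divergence is in the large-cluster case: the paper bounds $\logor \sumEst(j,t)/n_j - \mu(C^t_j)\rnor_2$ via the auxiliary \Cref{claim:distMean} and then expands $\logor \sumEst(j,t)\rnor_2^2/n_j$ around $|C^t_j|\cdot\logor\mu(C^t_j)\rnor_2^2$, whereas you work directly on $\big|\logor\hat S\rnor_2^2/\hat N - \logor S\rnor_2^2/N\big|$ via the reverse triangle inequality, the factorization $\logor\hat S\rnor_2^2-\logor S\rnor_2^2 = (\logor\hat S\rnor_2-\logor S\rnor_2)(\logor\hat S\rnor_2+\logor S\rnor_2)$, and $1/\hat N - 1/N = (N-\hat N)/(N\hat N)$. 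Your route is more elementary and sidesteps the auxiliary claim entirely. In the small-cluster case, you explicitly observe that the term $\logor\sumEst(j,t)\rnor_2^2/n_j^t$ is \emph{not} automatically $O(E)$ when the estimated count $n_j^t$ is tiny or nonpositive (both are possible for a raw histogram estimate), and you propose the post-processing fix of clipping $n_j^t$ below to $\Theta(E)$. The paper's proof is terser on this point: it asserts that ``all terms are at most $O(d^{1/2}\log(d)\err')$'' but does not actually justify a bound on $\logor\sumEst(j)\rnor_2^2/n_j$, and without some floor on $n_j$ that ratio can blow up. So the thresholding you introduce is not an artifact of your particular decomposition -- it (or an equivalent device, such as reporting an estimated cost of $0$ whenever the estimated cluster size falls below $\Theta(E)$) is genuinely required for the lemma to hold as stated, and the paper is implicitly relying on something of this kind.
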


This allows therefore to maintain an approximate estimate of the cost of the solution: thus, it is possible to boost the success probability by running several copies of the algorithm and output one with best estimated cost. We give proof and more details in \Cref{ap:hd}.

\section{Privately Counting on Clusters}\label{sec:dpCountClusters}
This section proves  our key \Cref{lem:dpCountClusters}. For this, we show how to break the ball $B_d(0, 1)$ into \emph{fixed} sets $G_1, ..., G_m$, such that each $C^t_i$ can be described as the union of few $G_j$. In this way, computing the sum $\sum_{p \in C_i^t} f(p)$ is reduced to computing $\sum_{p \in G_j} f(p)$, for which we can apply \Cref{lem:dpCountNet}.

Such a decomposition can be constructed e.g. via net trees \cite{Har-PeledM06}: we describe its construction in \Cref{ap:dpCountClusters}.
Those sets $G_1,..., G_m$ are such that $m = O\lpar n^{d}\rpar$ and any point from $B_d(0, 1)$ appears in $b = \log (n)$ sets. \
Thus, standard counting mechanisms (see \Cref{lem:dpCountNet} in Appendix) apply, with additive error $O\lpar d \cdot \log(n)^2 \cdot \frac{\log(T)}{\eps} \cdot \lpar \sqrt{\log (T)} + \log(1/\beta)\rpar\rpar$. 

Therefore, in order to maintain $\sum_{p \in C_i^t} f(p)$, we only need to show how to express this sum with few $\sum_{p \in G_i \cap X_t} f(p)$. This is done in the next lemma, shown in \Cref{ap:dpCountClusters}.

\begin{restatable}{lemma}{structClustersNet}
    \label{lem:structClusters}
Fix an $1>\alpha > 0$, and $\calG = \lbra G_1,..., G_m\rbra $ as above, and let $\calC = \lbra c_1, ..., c_k\rbra$ be any set of centers. 
Then, it is possible to compute a partition $\calA$ of $B_d(0, 1)$ together with an assignment $a : \calA \rightarrow \calC$ of parts to centers, such that (1) each part of $\calA$ is a set from $\calG$, (2) $|\calA| \leq k\cdot \alpha^{-O(d)} \log(n)$, and (3) for any set $P$,
\[\left|\cost(P, \calC) - \sum_{A \in \calA} \sum_{p \in P \cap A} \dist(p, a(A)) \right| \leq \alpha\cost(P, \calC) + \frac{9}{\alpha}.\]
\end{restatable}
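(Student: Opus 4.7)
The plan is to construct $\calA$ by a top-down recursion on the hierarchy of cells inside $\calG$, stopping as soon as a cell is well-separated from its nearest center in $\calC$ and otherwise refining. Concretely, I associate to every cell $G$ at level $j \leq L := \lceil \log n \rceil$ a fixed representative $r(G)$ and let $c^*(G) \in \calC$ minimize $\dist(r(G), \cdot)$. Starting from the root $B_d(0,1)$: if $\dist(r(G), c^*(G)) \geq 2^{-j}/\alpha$, I add $G$ to $\calA$ with $a(G) := c^*(G)$ and stop; otherwise, if $j < L$, I recurse on the level-$(j+1)$ cells of $\calG$ contained in $G$; at $j = L$ I unconditionally add $G$ to $\calA$ with the same assignment. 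Property (1) is immediate. (I treat the right-hand sum in (3) as using squared distances, matching the definition of $\cost$ used throughout the paper.)

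For (2), a level-$(j+1)$ cell only enters $\calA$ when its parent at level $j$ was refined, meaning that parent's representative lies within $2^{-j}/\alpha$ of some $c_i \in \calC$. Since each cell has diameter $O(2^{-j})$, every such level-$(j+1)$ cell is contained in the ball $B(c_i, O(2^{-j}/\alpha))$, and the packing bound from \cref{lem:sizeNei} limits the number of level-$(j+1)$ net points inside this ball to $\alpha^{-O(d)}$. Summing over the $k$ centers and the $L = O(\log n)$ levels gives $|\calA| \leq k \cdot \alpha^{-O(d)} \cdot \log n$.

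For (3) I split per cell. On a well-separated $G \in \calA$ at level $j < L$ with $c^* := a(G)$, the inequality $\diam(G) \leq 2^{-j} \leq \alpha \cdot \dist(r(G), c^*)$ gives $\dist(p, c) = (1 \pm O(\alpha)) \dist(r(G), c)$ for every $p \in G$ and every $c \in \calC$; since $c^*$ minimizes $\dist(r(G), \cdot)$, combining these yields $\dist(p, c^*)^2 \leq (1 + O(\alpha)) \dist(p, \calC)^2$, i.e.\ a pointwise squared-distance error of at most $O(\alpha) \cdot \dist(p, \calC)^2$. A leaf cell $G$ at level $L$ that was never fixed has $r(G)$ within $O(1/(\alpha n))$ of $c^*$, so every $p \in G$ satisfies $\dist(p, c^*)^2, \dist(p, \calC)^2 = O(1/(\alpha^2 n^2))$; summed over at most $n$ points, the total leaf contribution is $O(1/(\alpha^2 n))$, easily absorbed into $9/\alpha$. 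Summing well-separated and leaf contributions over $\calA$ and rescaling $\alpha$ by an absolute constant yields the target bound $\alpha \cdot \cost(P, \calC) + 9/\alpha$.

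The main obstacle will be making the ``hierarchical partition'' language precise for the net-based decomposition in \cref{ap:dpCountClusters}: one must verify that $\calG$ genuinely nests, with each level-$(j+1)$ cell contained in a single level-$j$ parent, so that the parent/child counting in (2) goes through cleanly. If the construction only yields a bounded-overlap cover at each level, I will canonically assign each point to a unique owning net cell via a deterministic lexicographic tie-break on net points, which preserves the packing-based counting at the cost of an extra $2^{O(d)}$ factor that gets absorbed into $\alpha^{-O(d)}$.
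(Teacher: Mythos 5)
Your construction is essentially the paper's: stop the top-down refinement when a cell is far from every center relative to its diameter (the paper's stopping rule --- a center lies in $N^\ell(\parent(A))$ but not in $N^\ell(A)$, with $\ell = \lceil 10/\alpha\rceil$ --- is equivalent up to constants to your threshold $\dist(r(G), c^*(G)) \geq 2^{-j}/\alpha$), bound $|\calA|$ by packing near each center, and control the per-cell cost error. Your reading of the right-hand sum in (3) as squared distances is correct; the missing exponent in the statement is a typo. Your worry about nesting is moot: property (2) of the hierarchical decomposition in Appendix E explicitly requires each $\calG_{j+1}$ to refine $\calG_j$, so the parent--child tree is genuine, and the packing you need is property (3) of that same decomposition rather than \Cref{lem:sizeNei} (which concerns the nets $Z_i$ of Section 2, a separate structure).

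The real gap is the leaf-level bound. For a non-well-separated leaf $G$ you bound $\dist(p,c^*)^2 = O(1/(\alpha^2 n^2))$ per point and sum to $O(1/(\alpha^2 n))$, asserting this is ``easily absorbed into $9/\alpha$.'' But $1/(\alpha^2 n) \leq 9/\alpha$ requires $\alpha = \Omega(1/n)$, whereas the lemma is claimed for all $0<\alpha<1$. You lost the factor because you replaced $\dist(p,\calC)$ by the crude upper bound $O(2^{-L}/\alpha)$. The paper avoids this: it writes $\dist(v_A,a(A)) \leq \dist(v_A,\calC) \leq \dist(v_A,p)+\dist(p,\calC)$, so $\dist(p,a(A)) \leq 2\dist(p,v_A)+\dist(p,\calC)$, and then applies the squared triangle inequality (Fact~\ref{lem:weaktri}) to get $\dist(p,a(A))^2 \leq (1+\alpha/2)\dist(p,\calC)^2 + O(1/\alpha)\cdot 2^{-2L}$; since $2^{-2L}\le 1/n^2$, the additive part summed over at most $n$ leaf points is $O(1/(\alpha n)) \leq 9/\alpha$ for every $\alpha$. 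Adopt that step (or, alternatively, note that both sides of (3) are $O(n)$ so the statement is trivial when $\alpha = O(1/n)$, which would also close the hole --- but neither is in your write-up as it stands).
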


The proof of \Cref{lem:dpCountClusters} follows easily, as sketched before: using standard counting mechanisms (\Cref{lem:dpCountNet}), one can maintain $\sum_{p \in A} f(p)$, for all $A \in \calA$, with additive error $O\lpar d \cdot \log(n)^2 \cdot \frac{\log(T)}{\eps} \cdot \lpar \sqrt{\log (T)} + \log(1/\beta)\rpar\rpar$.
Since each cluster $C_i$ is the union of $k \cdot \alpha^{-2d} \log(n)$ many $A \in \calA$, the additive error is simply multiplied by $k \cdot \alpha^{-O(d)} \log(n)$, which concludes the proof of \Cref{lem:dpCountClusters}.{\setlength{\emergencystretch}{10em}\par}

\section{Extension to $k$-median -- proof of \cref{thm:kmed}}

We now sketch how to prove the following theorem.
\kmed*

For this, we follows the same path as for our $k$-means algorithm. Our algorithm for $k$-means works in three steps: first, project onto a low-dimensional space, and solve in this space. Second, lift the solution up, and third boost the probabilities.
We follow the same path for $k$-median, with a few caveats. 

First, we note that the algorithm for low dimensions from \cref{thm:mp} extends directly to $k$-median: nothing in the argument is special about squared distances -- and the original result from \cite{MettuP00} works for both $k$-means and $k$-median. The techniques from \cite{dpHD} extends directly as well: this yields an algorithm with multiplicative approximation $w^*$ and additive error $2^{O(d)} k \log(n)^2 \frac{\log(T)^{3/2}}{\eps}$.

The dimension-reduction theorem of \cref{lem:dimred} also works for $k$-median. However, given a clustering in the projected space,  computing the centers in the original space is not as easy as for $k$-means. For this, we show that the mean is actually a $2$-approximation of the $1$-median: 

\begin{lemma}\label{lem:meanMed}
    Let $P$ be a set of point in $\R^d$, with optimal $1$-median $m$ and optimal $1$-mean $\mu$. Then, 
    \[\sum_{p\in P} \|p-\mu\| \leq 2 \sum_{p \in P} \|p-m\|.\]
\end{lemma}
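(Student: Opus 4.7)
The plan is to prove the inequality by a direct two-step triangle-inequality argument that passes through the median $m$. The key identity I would exploit is that, by definition of the mean, $\sum_{p \in P} (p - \mu) = 0$, which implies $\sum_{p \in P} (p - m) = |P|\,(\mu - m)$ for any fixed point $m$.

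First, I would apply the triangle inequality pointwise: for each $p \in P$,
\[
\|p - \mu\| \le \|p - m\| + \|m - \mu\|,
\]
and sum over $p \in P$ to obtain
\[
\sum_{p \in P} \|p - \mu\| \le \sum_{p \in P} \|p - m\| + |P|\cdot \|m - \mu\|.
\]
So the whole task reduces to bounding $|P|\cdot \|m-\mu\|$ by $\sum_{p \in P} \|p-m\|$.

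Second, I would use the mean identity above to rewrite the "displacement term":
\[
|P|\cdot \|m - \mu\| = \bigl\|\textstyle\sum_{p \in P} (p - m)\bigr\| \le \sum_{p \in P} \|p - m\|,
\]
where the inequality is again the triangle inequality (this time for a single norm of a sum versus a sum of norms). Plugging back yields the desired bound $\sum_{p} \|p-\mu\| \le 2 \sum_{p} \|p-m\|$.

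There is essentially no obstacle here: the argument is a two-line telescoping that only uses (i) the triangle inequality and (ii) the defining property of the mean, $\sum_p (p - \mu) = 0$. No convexity, no optimality of $m$ as a $1$-median, and no dimension dependence are needed; in particular the factor $2$ is tight and comes entirely from the pointwise triangle-inequality step. The lemma then plugs directly into the $k$-median analogue of the "lift-up" step: after clustering in the projected space, replacing each cluster's true $1$-median by its mean costs at most a factor $2$ in the $k$-median objective, which is exactly why the final multiplicative approximation in \Cref{thm:kmed} is $2 w^*$ rather than $w^*$.
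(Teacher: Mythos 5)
Your proof is correct, and it is a slightly more streamlined version of the paper's argument: both proofs hinge on the same key inequality $|P|\cdot\|\mu-m\| \le \sum_{p\in P}\|p-m\|$ and then finish with the same pointwise triangle-inequality step. The paper establishes the key inequality by first projecting $P$ onto the line through $m$ and $\mu$, applying the one-dimensional triangle inequality to the projected (signed) points, and noting that projection can only shrink norms; you instead observe directly that $|P|(\mu-m)=\sum_{p\in P}(p-m)$ (the defining property of the mean) and apply the $\ell_2$ triangle inequality for a sum of vectors in one stroke. Your route avoids the projection detour entirely and makes it transparent that nothing beyond the mean identity and the norm's subadditivity is used — in particular, as you observe, the optimality of $m$ as a $1$-median plays no role, and the bound holds for any reference point $m$, not just the geometric median.
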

\begin{proof}
Our goal is to bound the distance between $m$ and $\mu$. 
    Assume for simplicity that $m$ is the origin, and consider the line $\ell$ going through the origin and $\mu$. 
    Let $P_\ell$ be the set $P$ projected onto $\ell$.
    Since the mean is linear, $\mu$ is also the mean of $P_\ell$, i.e., $\mu = \frac{\sum_{p \in P_\ell} p}{|P|}$.
    Therefore, we have $|P| \|\mu\| \leq \sum_{p \in P_\ell} \|p\|$: since projection only decrease the norm, this is at most $ \sum_{p \in P} \|p\|$. 

    Rephrasing, this means that $\|\mu - m\| \leq \frac{\sum_{p \in P} \|p-m\|}{|P|}$: the distance between the mean and the $1$-median is at most the $1$-median cost divided by $|P|$. Therefore,
    $\sum_{p\in P} \|p-\mu\| \leq \sum_{p\in P} \|p-m\| + |P| \|\mu - m\| \leq 2\sum_{p\in P} \|p-m\|$. This concludes the lemma.
\end{proof}

Therefore, given a clustering in low-dimensional space, using the mean of each cluster in high-dimensional space yields a mere factor $2$ additional loss. Techniques developed in \cref{sec:privacyHD} allow to compute privately the mean under continual observation.

The third step of our $k$-means algorithm, boosting the probabilities, requires to be able to compute privately under continual observation the cost of a $k$-means solution: for this we used \cref{lem:costEst}. 
How to extend to $k$-median, and therefore boosting the probabilities, is interesting future work. Without the third step we get the result stated in \cref{thm:kmed}.

We additionally note that the simple \cref{lem:meanMed} can be applied to other settings, e.g., to extend the result from \cite{anamayclustering} to get a locally-private algorithm for $k$-median as well.

\section{Conclusion}
We present the first $\eps$-differentially private $k$-means clustering algorithm under continual observation. It has almost the same multiplicative error as the best static non-private algorithm and an additive error of $\poly(k, d, \log (n), \log (T)) \Lambda^2/\epsilon$, i.e., unlike a naive approach it is \emph{not} exponential in $d$ and \emph{not} linear in $T$. 

Our techniques extend to $k$-median, with a slightly worsen multiplicative approximation guarantee, as well as with a weaker success probability. Filling the gap is an interesting open question. 
\bibliography{biblio}

\newpage
\appendix
\section{Extended Preliminaries \Cref{sec:prelim}}\label{ap:proofPrelim}

In this section, we will prove Lemma~\ref{lem:sizeNei} and introduce two additional results essential for subsequent proofs.

\sizeNei*

\begin{proof}
    The Lemma directly follows from this classical property of nets that we prove for completeness:\\

    For any $\delta > 0$, any $\delta$-net $Z$ of a set $X \subset \mathbb{R}^d$, any point $x \in X$, and any $l > 0$, the ball $B(x, l \cdot \delta)$ contains at most $2^d (l + 0.5)^d$ points of $Z$.\\

    By the packing property of nets, any ball of radius $\delta/2$ around a point in $Z \cap B(x, l \cdot \delta)$ is disjoint from other such balls. Furthermore, these balls are entirely contained within $B(x, (l+0.5) \cdot \delta)$. The volume of a ball of radius $\delta/2$ is a $(1/(2(l+0.5))^d)$ fraction of the volume of a ball of radius $(l+0.5) \cdot \delta$. Therefore, we can derive the inequality $|Z \cap B(x, l \cdot \delta)| \leq 2^d (l+0.5)^d$.
\end{proof}

To deal with squared distances, we will also use generalization of triangle inequality:
\begin{fact}[Squared Triangle Inequality]
\label{lem:weaktri}
Let $a,b,c$ be an arbitrary set of points in a metric space with distance function $d$. Then for any $\alpha>0$
$$d(a,b)^2 \leq (1+\alpha) d(a,c)^2 + \left(\frac{1+\alpha}{\alpha}\right) d(b,c)^2 $$ 
\end{fact}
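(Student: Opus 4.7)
The plan is to derive this squared triangle inequality from the ordinary triangle inequality combined with a single application of Young's inequality (equivalently, weighted AM--GM). First I would write the usual triangle inequality $d(a,b) \leq d(a,c) + d(b,c)$ and square both sides to obtain
$$d(a,b)^2 \leq d(a,c)^2 + 2\, d(a,c)\, d(b,c) + d(b,c)^2.$$
The only obstruction to a clean separation of the two squared distances on the right-hand side is the cross term $2\,d(a,c)\,d(b,c)$, so the remaining task is to bound it in the form $C_1 \, d(a,c)^2 + C_2 \, d(b,c)^2$ with constants that match the statement.

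For that bound I would apply Young's inequality with parameter $\alpha > 0$: expanding $\bigl(\sqrt{\alpha}\,x - y/\sqrt{\alpha}\bigr)^2 \geq 0$ gives $2xy \leq \alpha x^2 + y^2/\alpha$ for all non-negative reals $x,y$. Substituting $x = d(a,c)$ and $y = d(b,c)$ yields
$$2\, d(a,c)\, d(b,c) \leq \alpha\, d(a,c)^2 + \tfrac{1}{\alpha}\, d(b,c)^2.$$
Plugging this back into the squared triangle inequality produces
$$d(a,b)^2 \leq (1+\alpha)\, d(a,c)^2 + \bigl(1 + \tfrac{1}{\alpha}\bigr)\, d(b,c)^2,$$
and a trivial rewriting $1 + 1/\alpha = (1+\alpha)/\alpha$ matches the coefficient in the statement exactly.

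I do not expect any real obstacle here: every step is a one-line manipulation valid in an arbitrary metric space, the only calibration to verify being that the parameter $\alpha$ chosen inside Young's inequality is the same $\alpha$ that appears in the stated inequality. An essentially equivalent alternative would be to start from $(\lambda\, d(a,c) - d(b,c))^2 \geq 0$ for a suitable $\lambda > 0$, optimize $\lambda$ against the desired ratio between the two squared distances, and recover the same constants; this route is slightly more flexible but offers no real advantage over the direct Young's inequality argument above.
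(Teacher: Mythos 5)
Your proof is correct and follows essentially the same route as the paper: both square the triangle inequality and control the cross term $2\,d(a,c)\,d(b,c)$ via the completed-square observation $(\sqrt{\alpha}\,x - y/\sqrt{\alpha})^2 \geq 0$, i.e.\ Young's inequality. The only cosmetic difference is that the paper adds the nonnegative quantity to $(x+y)^2$ rather than explicitly isolating and replacing the cross term, but the two presentations are algebraically identical.
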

\begin{proof}
    By the regular triangle inequality, $d(a,b)^2 \leq (d(a, c) + d(c, b))^2$. We note $x=d(a, c), y = d(b, c)$, and observe that $\alpha x^2 + y^2\alpha  - 2 x y = (\sqrt \alpha x - y/\sqrt{\alpha} )^2 \geq 0$, and therefore 
    $(x+y)^2 \leq (x+y)^2 + \alpha x^2 + y^2/\alpha  - 2 x y = (1+\alpha)x^2 + (1+1/\alpha) y^2$, which concludes.
\end{proof}

The main tool we will use for the design of private algorithm is the following mechanism for histograms, it is a slight extension of Lemma 11 in~\cite{FichtenbergerHO21}:

\begin{restatable}{lemma}{dpCountNet}
\label{lem:dpCountNet}
Let $\sigma$ be a stream  of length $T$ of addition or deletion of points from a ground set $G$, and let $P_t$ be the set of points at time $t$ with $1 \le t \le T$. Let $f : G \rightarrow [-1, 1]$ be a function, and let $G_1, ..., G_m \subseteq G$ be such that at any $p \in G$ is included in at most $b$ $G_i$.
There is an $\eps$-DP algorithm that, for all time step individually, with  probability $1-\beta$, estimates for each set $G_i$  the value of $\sum_{p \in G_i \cap P_t} f(p)$, with additive error 
$O(b) \cdot \frac{\log(T)}{\eps} \cdot  \lpar \sqrt{\log (T)} + \log(m/\beta)\rpar $. 
\end{restatable}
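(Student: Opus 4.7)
The plan is to reduce the lemma to the standard binary-tree continual counting mechanism, run in parallel once per set $G_i$. For each $i\in\lbra 1,\dots,m\rbra$ I would define a derived stream $y_i(1),\dots,y_i(T)\in[-1,1]$ as follows: at a time step $t$ at which a point $p$ is inserted and $p\in G_i$, set $y_i(t)=f(p)$; at a time step at which $p$ is deleted and $p\in G_i$, set $y_i(t)=-f(p)$; and otherwise set $y_i(t)=0$. The quantity to release is then exactly the prefix sum $\sum_{s\le t}y_i(s)$, and by assumption each entry lies in $[-1,1]$. Because $f$ depends only on the ground set and not on the stream, defining the $m$ derived streams is a purely syntactic operation.

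For each of these $m$ streams I would run the Dwork--Naor--Pitassi--Rothblum / Chan--Shi--Song binary-tree prefix-sum mechanism independently, adding $\mathrm{Lap}(\lambda)$ noise with $\lambda:=b\log(T)/\eps$ to every internal dyadic partial sum. For privacy, a single entry of one stream influences at most $\log T$ dyadic sums by at most $1$ each, so the vector of noisy node values has $\ell_1$-sensitivity $\log T$, and the Laplace mechanism makes each tree $(\eps/b)$-DP. A single update in the original stream $\sigma$ changes $y_i$ only for those $G_i$ containing the updated point, which is at most $b$ sets by hypothesis. Hence by basic $\eps$-DP composition across these (at most) $b$ trees, the overall mechanism is $\eps$-DP.

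For utility, each released estimate is the sum of at most $\log T$ independent $\mathrm{Lap}(\lambda)$ random variables. Standard tail bounds for sums of Laplace variables (as in Chan--Shi--Song, or the proof of Lemma~11 in [FichtenbergerHO21]) give that with probability at least $1-\beta/m$ the absolute value of this sum is at most $O(\lambda)\cdot\bigl(\sqrt{\log T}+\log(m/\beta)\bigr)=O(b\log(T)/\eps)\cdot\bigl(\sqrt{\log T}+\log(m/\beta)\bigr)$. Taking a union bound over the $m$ sets at a fixed time $t$ yields the claimed per-time-step guarantee.

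I do not anticipate a serious obstacle; the only extensions beyond the unweighted statement of Lemma~11 in [FichtenbergerHO21] are (i) that the leaf values of the binary trees are weighted by $f(p)\in[-1,1]$ rather than being $\pm 1$ indicators, which does not affect the analysis because the binary-tree bound only uses $|y_i(s)|\le 1$, and (ii) that a single update may touch up to $b$ trees simultaneously, which is the sole source of the factor $b$ in the noise scale. The one point to be explicit about is that this composition is valid even though the identity of the $\le b$ affected trees is data-dependent: on neighboring streams the other $m-b$ trees see identical inputs, so the composition theorem applies without modification.
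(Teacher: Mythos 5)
Your proposal is correct and follows essentially the same route as the paper: run one binary-tree counting mechanism per set $G_i$ with per-tree privacy budget $\eps/b$, observe that a single update touches at most $b$ trees so composition gives $\eps$-DP, use the Chan--Shi--Song Laplace-sum concentration bound to get per-tree error $O(\lambda)(\sqrt{\log T}+\log(1/\delta'))$, and union-bound over $m$ sets with $\delta'=\beta/m$. The only small presentational difference is that you make the noise scale $\lambda=b\log(T)/\eps$ explicit and spell out the (correct) observation that data-dependence of which $\le b$ trees are touched does not affect the composition argument, whereas the paper treats this implicitly; both handle the shift from $\pm1$ indicators to $[-1,1]$ leaf values by the same observation that the binary-tree bound uses only $|y_i(s)|\le 1$.
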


\begin{proof}
Let $\eps' = \eps/b$.
Given a sequence of numbers from $[-L, L]$ a \emph{continual counting mechanism} outputs, after the arrival of each number $x_i$, an estimate of the sum $\sum_{i=1}^t x_i$.  
The binary mechanism is 
an $\eps'$-differentially private continual counting mechanism that estimates the sum $\sum_{i=1}^t x_i$.
With probability at least $1-\beta$ the additive error
of all time steps 
is  $O(L\eps'^{-1} \log(T)^{2} \log(1/\beta))$
~\cite{dwork2010differential,ChanSS11,FichtenbergerHO21}.
Note that the proof given in these references only shows the
claim for numbers from $\{-L, -L+1, \dots, L-1, L\}$, but the same proof goes through verbatim for numbers from $[-L,  L]$. Further note that the additive error bound in these references is stated only on a per-time step basis, i.e., with probability at least $1-\delta$, the additive error at any \emph{individual} time step is
$O(L\eps'^{-1} \log(T)^{3/2} \log (1/\delta)).$
However, as observed in~\cite{jain2021price} the bound on the additive error for any individual time step can be improved to
$O(\eps'^{-1} \log(T) ( \sqrt{\log (T)} + \log (1/\delta)))$
using the first part of Corollary 2.9 of \cite{ChanSS11}.

To prove the lemma we run the $\eps'$-differentially private binary mechanism separately for each set $G_i$ for $i = 1, \dots, m$.  
This is identical to running a histogram algorithm where each set $G_i$ corresponds to a counter in the histogram algorithm. We give the full analysis here, as only the case $b=1$ has been studied
in the literature on $\eps$-differentially algorithms (but $b \geq 1$  has been studied for $(\eps,\delta)$-differentially private algorithm,s using a different algorithm), and we crucially rely on the fact that $b$ is small in our analysis.

Whenever a point $p \in G$ is inserted, $f(p)$ is added to the binary mechanism for all the sets $p$ belongs to, whenever it is deleted $-f(p)$ is added to the binary mechanism for all the sets $p$ belongs to. Between any two updates, the binary mechanism outputs the same sum estimate no matter how often it is queried. Thus, we can analyze the privacy loss as if the mechanism outputted exactly one estimate after each update.

\emph{Privacy analysis.} The data from each update operation is only used for $b$ $\eps'$-differentially mechanism. By the composition theorem (Theorem 3.16 in \cite{dwork2014algorithmic})
it follows that it 
is $b \cdot \eps' = \eps$-differential private.

\emph{Utility analysis.}
We set $\delta = \beta/m$. It follows that, for any time step $t$, with probability at least $1-\beta$ for all $m$ binary mechanisms the additive error is $O(L \cdot \eps'^{-1} \log(T) ( \sqrt{\log (T)} + \log (1/\delta))) = $
$O(L\cdot  b \cdot \eps^{-1} \log (T) (\sqrt{\log (T)} + \log (m/\beta)))$
\end{proof}

Note that, in the previous mechanism, the sets $G_1, ..., G_m$ are \emph{fixed} and cannot depend on the time $t$. 
This means that we cannot use it directly to estimate, say, the size of a cluster at time $t$ -- even if we can compute privately the center of each cluster. 
As estimating such quantities is crucial for our algorithms, we will present a generic mechanism to do so. It is stated in Lemma~\ref{lem:dpCountClusters}, we defer the proof to \Cref{sec:dpCountClusters}. The basic idea is to compute at each time $t$ the size of each cluster as the sum of $k \cdot 2^{O(d)} \log (n)$ many sets of the fixed partition $G_1, ..., G_m$.

\section{Missing Proofs for the Well-Structured Static Clustering Algorithm (Section~\ref{sec:mp})}\label{ap:proofMP}
This section is devoted to the proof of Theorem~\ref{thm:mp}. For the sake of clarity, we will restate the algorithm and the theorem here.

\MetPla*

\mp*

Our first remark on the algorithm is the following: at the time where a net point $z_j^l$ is selected by the algorithm in line 7, it is available.
\begin{restatable}{lemma}{available}
\label{lem:available}
At any moment of the algorithm, if a net point is available, its children are also available.
\end{restatable}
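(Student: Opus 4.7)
The plan is to prove the contrapositive: if a child net point $z'$ of $z$ is not available, then $z$ is also not available. Since the only way a net point becomes unavailable is through the removal step inside the for loop on lines 11--15, if $z'$ is unavailable then at some prior iteration $j'$ a center $c_{j'}$ was produced with $c_{j'} \in N^{89}(z')$.

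Let $z$ be at level $i$, so $z'$ is at level $i+1$. By the definition of the children relation given in \Cref{sec:netDec}, $z' \in N^4(z)$, i.e., $\dist(z, z') \leq 4 \cdot 2^{-i}$. Combined with $\dist(z', c_{j'}) \leq 89 \cdot 2^{-(i+1)}$, the triangle inequality yields
\[
\dist(z, c_{j'}) \leq 4 \cdot 2^{-i} + 89 \cdot 2^{-(i+1)} = (4 + 44.5) \cdot 2^{-i} = 48.5 \cdot 2^{-i} \leq 89 \cdot 2^{-i},
\]
so $c_{j'} \in N^{89}(z)$. Consequently, during the iteration $j'$ the algorithm would also have removed $z$ from the available net points at that time, which contradicts the assumption that $z$ is available at the current moment. (Once removed, a net point is never re-added, so if $z$ was removed at iteration $j' < j$ it remains unavailable thereafter.)

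There is no real obstacle here beyond verifying the constants: the choice of the neighborhood radius $89$ in the removal step in line 12 is comfortably larger than the bound $48.5$ obtained from combining the child radius $4$ with half of $89$. This slack is exactly what makes the statement hold, and it will be used implicitly later whenever the analysis wants to argue about properties of net points at the bottom level by arguing about their ancestors.
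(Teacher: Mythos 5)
Your proof is correct and follows essentially the same route as the paper: you argue the contrapositive, bound $\dist(z, c_{j'}) \leq 4\cdot 2^{-i} + 89\cdot 2^{-(i+1)} = 48.5\cdot 2^{-i} \leq 89\cdot 2^{-i}$ by the triangle inequality, and conclude that $z$ would have been removed at the same step. The paper's proof uses the identical calculation with the same constants.
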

\begin{proof}
Let $z_1$ be a cell of level $i$, and let $z_2$ be a child of $z_1$. We have by definition $z_2\in N^4(z_1)$ and $\dist(z_1,z_2)\leq 4\cdot 2^{-i}$. We will prove that if $z_2$ becomes unavailable, then $z_1$ also becomes unavailable. 

Suppose that a new center $c_j$ is selected, and $z_2$ is no longer available. This means that $c_j \in N^{89}(z_2)$,and $\dist(z_2,c_j) \leq 89 \cdot 2^{-(i+1)} =44.5 \cdot 2^{-i}$. Now using the triangle inequality we have 
$$\dist(z_1,c_j) \leq \dist(z_1,z_2) + \dist(z_2,c_j) \leq 4\cdot 2^{-i} + 44.5 \cdot 2^{-i} = 48.5 \cdot 2^{-i}$$
Therefore $c_j \in N^{48.5}(z_1) \subset N^{89}(z_1)$ and $z_1$ also becomes unavailable when $c_j$ is selected.
\end{proof}

In what follows, we fix a solution $\Gamma$  with $k$ centers. Our goal is to compare the  cost of the solution $\calC$ output by the algorithm and the cost of $\Gamma$. For $\gamma \in \Gamma$, let $P_\gamma$ be $\gamma$'s cluster, namely all points of $P$ assigned to $\gamma$ in the solution $\Gamma$. For this, we analyze the cost of each cluster $P_\gamma$ independently, as follows. 

We split $\Gamma$ into two parts: $\Gamma_1$ is the set of $\gamma\in \Gamma$ such that there exists a net point $z$ of level $\lceil \log(n) \rceil$ such that $\gamma \in N^{1/2}(z)$ and $z$ is still available at the end of the algorithm, and $\Gamma_2 = \Gamma-\Gamma_1$. Centers in $\Gamma_2$ are very close to centers in $\calC$ (at distance at most $O(1/n)$), and the cost of their cluster is therefore well captured by $\calC$.  The bulk of the work is to show that clusters in $\Gamma_1$ are also well approximated by $\calC$.

For $\gamma \in \Gamma_1$ we define $z_\gamma$ to be a net point of the smallest level such that $\gamma \in N^{1/2}(z_\gamma)$ and $z_\gamma$ is still available at the end of the algorithm. We split $P_\gamma$ into two parts: $In(P_\gamma):= P_\gamma \cap N^1(z_\gamma)$ and $Out(P_\gamma) = P_\gamma - In(P_\gamma)$. 

By definition of $z_\gamma$, we know there exists a center of $\calC$ "not too far" from $z_\gamma$. This allows us to bound the cost of $Out(P_\gamma)$ in the solution $\calC$. Furthermore, we can relate the cost of $In(P_\gamma)$ to the value of $z_\gamma$, as done in the following lemma.

\begin{restatable}{lemma}{inAndOut}\label{lem:inAndout}
For all $\gamma \in \Gamma_1$, we have:
    \begin{align}
        &\cost(In(P_\gamma),\calC)  \leq  181^2 \cdot \val(z_\gamma)\label{lem:In}\\
        &\cost(Out(P_\gamma), \calC) \leq 359^2 \cdot  \cost(Out(P_\gamma), \Gamma) \label{lem:Out}
    \end{align}
And for all $\gamma \in \Gamma_2$, we have:
\begin{equation}
    \cost(P_\gamma, \calC) \leq 2\cost(P_\gamma, \Gamma)+2|P_\gamma| (90/n)^2. \label{lem:notDefined} 
\end{equation}
\end{restatable}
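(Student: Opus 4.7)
My plan is to locate, for each $\gamma$, an explicit center $c_j \in \calC$ that lies close to $\gamma$, and then derive each inequality by the triangle inequality applied to the points of $P_\gamma$.

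\textbf{For $\gamma \in \Gamma_1$.} Write $i := \level(z_\gamma)$. I would first observe that $i \geq 2$: since every net point lies in $B(0,1)$, any two net points are at distance at most $2$, so at level $1$ the set $N^{89}(z_\gamma)$ has radius $89 \cdot 2^{-1} = 44.5$ and therefore contains every center of $\calC$. But then $z_\gamma$ becomes unavailable the moment the first center is picked, contradicting its availability at the end. Hence level $i-1$ exists, and by the covering property of $Z_{i-1}$ there is $z' \in Z_{i-1}$ with $\gamma \in N^{1/2}(z')$. The minimality of $i$ in the definition of $z_\gamma$ forces $z'$ to be unavailable at the end, so some $c_j \in \calC$ was placed with $c_j \in N^{89}(z')$. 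Combining two triangle inequalities gives
\[
\dist(c_j, \gamma) \leq \dist(c_j, z') + \dist(z', \gamma) \leq 89 \cdot 2^{-(i-1)} + 0.5 \cdot 2^{-(i-1)} = 179 \cdot 2^{-i}.
\]

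\textbf{Inequality \eqref{lem:In}.} For any $p \in In(P_\gamma) \subseteq N^1(z_\gamma)$ we have $\dist(p, z_\gamma) \leq 2^{-i}$, so
\[
\dist(p, c_j) \leq \dist(p, z_\gamma) + \dist(z_\gamma, \gamma) + \dist(\gamma, c_j) \leq (1 + 0.5 + 179) \cdot 2^{-i} \leq 181 \cdot 2^{-i}.
\]
Since $|In(P_\gamma)| \leq |N^1(z_\gamma) \cap P| = 2^{2i}\, \val(z_\gamma)$, summing squares yields $\cost(In(P_\gamma),\calC) \leq 181^2 \val(z_\gamma)$.

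\textbf{Inequality \eqref{lem:Out}.} For $p \in Out(P_\gamma)$ we have $\dist(p, z_\gamma) > 2^{-i}$, hence $\dist(p, \gamma) > 0.5 \cdot 2^{-i}$, i.e.\ $2^{-i} < 2\dist(p, \gamma)$. Therefore $\dist(\gamma, c_j) \leq 179 \cdot 2^{-i} < 358 \dist(p, \gamma)$, so $\dist(p, c_j) \leq 359 \dist(p, \gamma)$. Squaring and summing over $Out(P_\gamma)$ gives the claim.

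\textbf{Inequality \eqref{lem:notDefined} (case $\gamma \in \Gamma_2$).} I would repeat the same idea at the finest level. The covering property of $Z_{\lceil \log n \rceil}$ gives some $z$ at this level with $\gamma \in N^{1/2}(z)$. Since $\gamma \in \Gamma_2$, this $z$ is unavailable, so some $c_j \in N^{89}(z)$ exists. Using $2^{-\lceil \log n \rceil} \leq 1/n$ we get $\dist(c_j,\gamma) \leq 89/n + 0.5/n \leq 90/n$. Applying Fact~\ref{lem:weaktri} with $\alpha = 1$ to $p$, $\gamma$, and $c_j$ and summing over $p \in P_\gamma$ yields the inequality.

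The only genuinely delicate point is the sanity check that $i_\gamma \geq 2$ (and analogously that one can always ``climb one level'' from $z_\gamma$); once that is established, the three inequalities are routine triangle-inequality manipulations, and the constants $181$ and $359$ are picked to absorb the small slack.
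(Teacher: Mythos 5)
Your proof is correct and takes essentially the same route as the paper: locate a nearby net point one level up (resp.\ at the finest level for $\Gamma_2$), use its unavailability to conclude some $c_j \in \calC$ lies within $179\cdot 2^{-i}$ (resp.\ $90/n$) of $\gamma$, then apply the triangle inequality (and its squared version for $\Gamma_2$) to each point of $P_\gamma$. The constants, case split, and mechanics match the paper's argument.

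The one genuinely useful thing you add that the paper omits is the check that $\level(z_\gamma)\ge 2$, so that $z^*_\gamma$ at level $\level(z_\gamma)-1$ actually exists: your observation that every level-$1$ net point has all of $B(0,1)$ inside its $89$-neighborhood (so becomes unavailable the moment $c_1$ is placed, hence $z_\gamma$ cannot be at level $1$) closes a small gap the paper leaves implicit. Everything else is a routine restatement, and the remaining arithmetic (e.g.\ $180.5 \le 181$, $89.5/n \le 90/n$) checks out.
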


\begin{proof}
We start by considering the case where $\gamma$ belongs to $\Gamma_1$. Let $z^*_\gamma$ be net point of level $\level(z_\gamma)-1$ such that $\gamma \in N^{0.5}(z^*_\gamma)$. By definition of $z_\gamma$, the net point $z^*_\gamma$ is not available. Therefore, there is a point of $\calC$ in $N^{89}(z^*_\gamma)$. We have $\dist(z^*_\gamma, \calC) \leq 89 \cdot 2^{-\level(z^*_\gamma)} = 178\cdot 2^{-\level(z_\gamma)}$, and $\dist(\gamma, z^*_\gamma) \leq 0.5 \cdot 2^{-\level(z^*_\gamma)} = 2^{-\level(z_\gamma)}$ because $\gamma \in N^{0.5}(z^*_\gamma)$. This implies by the triangle inequality that $\dist(\gamma, \calC) \leq 179 \cdot 2^{-\level(z_\gamma)}$. 

Now for any $x\in P_\gamma$, we have 
\begin{equation*}
    \cost(x, \calC) = \dist(x,\calC)^2 \leq  (\dist(\gamma,\calC) + \dist(\gamma,x))^2
\end{equation*}

We can now bound the cost of $In(P_\gamma)$ and prove \Cref{lem:In}. If $x \in N^1(z_\gamma)$, we have  $\dist(\gamma,x) \leq \dist(\gamma, z_\gamma) + \dist(z_\gamma, x) \leq 0.5\cdot 2^{-\level(z_\gamma)} + 2^{-\level(z_\gamma)} \leq 2 \cdot 2^{-\level(z_\gamma)} $, and therefore $\cost(x, \calC) \leq (179 \cdot 2^{-\level(z_\gamma)} + 2\cdot 2^{-\level(z_\gamma)})^2 = 181^2 \cdot 2^{-2\level(z_\gamma)} $. Summing this inequality over all $x \in In(P_\gamma)$, we get $$\cost(In(P_\gamma), \calC) \leq \sum_{x \in N^1(z_\gamma)\cap P} 181^2  \cdot  2^{-2\level(z_\gamma)} = 181^2  \val(z_\gamma).$$

We turn to $Out(P_\gamma)$ and show \Cref{lem:Out}. We have $\dist(\gamma,z_\gamma) \leq 0.5 \cdot 2^{-\level(z_\gamma)}$, so if $x$ is outside $N^1(z_\gamma)$, we have $\dist(\gamma,x) \geq 0.5 \cdot 2^{-\level(z_\gamma)}$. Hence
    \begin{align*}
        \cost(x, \calC) &\leq  (\dist(\gamma,\calC)  + \dist(\gamma,x))^2  \\
        &\leq (179 \cdot 2^{-\level(z_\gamma)} + \dist(\gamma,x))^2 \\
        &\leq (2\cdot 179 \dist(\gamma,x) + \dist(\gamma,x))^2 \\
        &=  359 ^2 \dist(\gamma,x)^2 = 359 ^2 \cost(x, \Gamma)
    \end{align*}
    Summing this inequality over all $x \in Out(P_\gamma)$, we get 
    $$\cost(Out(P_\gamma), \calC) \leq 359 ^2 \cost(Out(P_\gamma), \Gamma).$$

Otherwise if $\gamma \in \Gamma_2$, let $z$ be a net point of level $\lceil \log(n) \rceil $ such that $\gamma \in N^{0.5}(z)$. By definition of $\Gamma_2$, $z$ is unavailable at the end of the algorithm, therefore there exists a center $c_j\in \calC$ such that $c_j \subset N^{89}(z)$. This allows us to bound the distance between $c_j$ and $\gamma$. We have $\dist(c_j, \gamma) \leq \dist(c_j,z) + \dist(z,\gamma) \leq 89 \cdot 2^{-\lceil \log(n) \rceil} + 0.5 \cdot 2^{-\lceil \log(n) \rceil} \leq \frac{90}{n}$, so for all $x\in X$:
\begin{align*}
    \cost(x, \calC) &\leq (\dist(\gamma,x) + \dist(c_j,\gamma))^2\\
    &\leq 2 \dist(\gamma,x)^2+2(90/n)^2
\end{align*}
Summing this inequality over all $x\in P_\gamma$, we get
\begin{align*}
    \cost(P_\gamma, \calC) &\leq 2\cost(P_\gamma, \Gamma)+2|P_\gamma| (90/n)^2. \qedhere
\end{align*}
\end{proof}

\subsection{An Easy Instructive Case.}\label{sec:easy}

We now show how to conclude the proof of \Cref{thm:mp} based on \Cref{lem:inAndout}, under the following assumption.
Suppose all net points that are available at the end of the algorithm have a value  less than $2\cdot \thresh $. Then we can directly use \Cref{lem:inAndout} to conclude the proof of  Theorem \ref{thm:mp}. Although unrealistic, this provides a good intuition on our proof of \Cref{thm:mp}.

Indeed, summing \eqref{lem:Out} for all $\gamma \in \Gamma_1$, we get (and this holds regardless of the assumption):
\begin{align*}
    \cost(\bigcup_{\gamma \in \Gamma_1}Out(P_\gamma), \calC) &= \sum_{\gamma \in \Gamma_1} \cost(Out(P_\gamma), \calC) \leq 359^2 \cdot  \sum_{\gamma \in \Gamma_1}\cost(Out(P_\gamma), \Gamma)\\
      &\leq  359^2 \cdot \cost(P, \Gamma)
\end{align*}

Summing \ref{lem:notDefined} for all $\gamma \in \Gamma_2$, we get :
\begin{align*}
    \cost(\bigcup_{\gamma \in \Gamma_2}P_\gamma, \calC) &\leq 2\cdot \cost(P, \Gamma) +  n \cdot (90/n)^2\\
    &\leq 2\cdot \cost(P, \Gamma) + 90^2
\end{align*}

All net points $z_\gamma$ are still available at the end of the algorithm, and with our assumption have a value at most $2 \cdot \thresh $. Summing \eqref{lem:In} for all $\gamma \in \Gamma_1$, we get:
\begin{align*}
    \cost(\bigcup_{\gamma \in \Gamma_1}In(P_\gamma), \calC) &= \sum_{\gamma \in \Gamma_1}\cost(In(P_\gamma), \calC) \leq 181^2 \cdot \sum_{\gamma \in \Gamma_1}\val(z_\gamma)\\
    &\leq 181^2 \cdot \sum_{\gamma \in \Gamma_1} 2 \thresh 
    \leq 181^2 \cdot 2k\thresh 
\end{align*}

Summing the three part, we get $\cost(P, \calC) \leq (359^2+2)\cost(P, \Gamma) + 181^2 \cdot 2k\thresh +90^2$ and Theorem \ref{thm:mp} is proven (recall that $\thresh \geq 1$).

In the rest of the proof, we show how to bound $\sum_{\gamma \in \Gamma_1}\val(z_\gamma)$ without the need for any assumption.

\subsection{Bounding the Values.}

To do so, we first relate the value of a net point to the cost of its 1-neighborhood, in the solution $\Gamma$. We say that a net point $z$ is \textit{covered} by $\Gamma$ if $\Gamma \cap N^2(z) \neq \emptyset$.

\begin{lemma}\label{lem:cover}
    If a net point $z$ is not covered by $\Gamma$, then $ \cost (N^1(z)\cap P, \Gamma) \geq \val(z)$.
\end{lemma}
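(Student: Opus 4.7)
The plan is a one-line application of the reverse triangle inequality. Set $i = \level(z)$, so by the definitions in \Cref{sec:netDec} the $1$-neighborhood of $z$ is $N^1(z) = B(z, 2^{-i})$, the $2$-neighborhood is $N^2(z) = B(z, 2\cdot 2^{-i})$, and $\val(z) = 2^{-2i} \cdot |N^1(z)\cap P|$. The hypothesis that $z$ is not covered by $\Gamma$ unpacks exactly as: every $\gamma \in \Gamma$ satisfies $\dist(\gamma, z) \geq 2\cdot 2^{-i}$. Meanwhile any point $x \in N^1(z) \cap P$ satisfies $\dist(x, z) \leq 2^{-i}$. Plugging these two inequalities into $\dist(x, \gamma) \geq \dist(\gamma, z) - \dist(x, z)$ gives $\dist(x, \gamma) \geq 2\cdot 2^{-i} - 2^{-i} = 2^{-i}$ for every $\gamma \in \Gamma$, hence $\dist(x, \Gamma)^2 \geq 2^{-2i}$.

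To conclude, I will sum this lower bound over the $|N^1(z) \cap P|$ points of $N^1(z) \cap P$:
\[\cost(N^1(z)\cap P, \Gamma) \;=\; \sum_{x \in N^1(z)\cap P} \dist(x, \Gamma)^2 \;\geq\; 2^{-2i}\cdot |N^1(z)\cap P| \;=\; \val(z).\]
There is no real obstacle here: the numerical constants in the definitions of $N^2$ and $\val$ are set up precisely so that the slack $2\cdot 2^{-i} - 2^{-i} = 2^{-i}$ obtained from the reverse triangle inequality, once squared, matches the normalization $2^{-2i}$ in $\val(z)$. The only point meriting care is to correctly translate the covering condition "$\Gamma \cap N^2(z) = \emptyset$" into the pointwise distance bound $\dist(\gamma, z) \geq 2 \cdot 2^{-i}$ for each $\gamma \in \Gamma$.
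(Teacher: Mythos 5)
Your proof is correct and takes essentially the same route as the paper: both arguments observe that every point of $N^1(z)\cap P$ is at distance at least $2^{-\level(z)}$ from $\Gamma$ (the paper asserts this directly; you supply the reverse-triangle-inequality step explicitly), and then sum the pointwise bound over the points of $N^1(z)\cap P$ to match the normalization in $\val(z)$.
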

\begin{proof}
    Let $x\in N^1(z)\cap P$. Since there is no point of $\Gamma$ in $N^2(z)$,  $\dist(x,\Gamma) \geq 2^{-\level(z)}$ and thus $\cost(x,\Gamma) \ge 2^{-2\level(z)}$. Summing this inequality over all $x \in N^1(z)\cap P$, we get $$\cost(N^1(z)\cap P,\Gamma) = \sum_{x\in N^1(z)} \cost(x,\Gamma) \geq |P\cap N^1(z)| \cdot 2^{-2\level(z)}  = \val(z)\qedhere$$
\end{proof}

We can now provide a bound on the sum of values, based on the previous lemma. For this, we aim at matching each $z_\gamma$ (for $\gamma \in \Gamma_1$) with a net point that is not covered, such that the 1-neighborhood of those points is disjoint. Then, it will be possible to bound the sum of values using the upper bound of \Cref{lem:cover}. In order to define the matching, we need the following lemma.

\begin{restatable}{lemma}{pruning}\label{lem:pruning}    Assume that, at the end of the algorithm, there is a net point available with a value greater than $2\cdot \thresh$. Then, for all $j\in \{1,\dots,k\}$ and each of the sequences $(z_j^l)_l$ defined in \Cref{alg:mp}, there exists an index $l_j$ such that:
    {\parskip = 0pt
\begin{itemize}
    \item $\nval(z_j^{l_j})$ is greater than the maximum noisy value of a point available at the end of the algorithm,
    \item for all $j,j'$, for all $l\geq l_j$, $l' \geq l_{j'}$  $N^2(z_j^{l}) \cap N^2(z_{j'}^{l'}) = \emptyset$
\end{itemize}
}
\end{restatable}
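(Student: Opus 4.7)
The plan is to define $l_j$ via a noisy-value cutoff and verify both conditions by combining the algorithm's removal rule with Lemma~\ref{lem:available}. Set $\nval^* := \max\{\nval(z) : z \text{ available at end of algorithm}\}$. The hypothesis gives some $z^*$ available at the end with $\val(z^*) > 2\thresh$; the $\thresh$-threshold condition then forces $\nval(z^*) \geq \thresh$ (since otherwise $\val(z^*) \leq 2\thresh$) and hence $\nval(z^*) \in [\val(z^*)/2,\,2\val(z^*)]$, so $\nval(z^*) > \thresh$, giving $\nval^* > \thresh$.

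I define $l_j$ as the largest index $l$ with $\nval(z_j^l) > \nval^*$. Such an $l_j$ exists (with $l_j \geq 1$) because $z^*$ is available at the start of iteration $j$ and $z_j^1$ is chosen as the maximum-$\nval$ available net point, so $\nval(z_j^1) \geq \nval(z^*) = \nval^*$; strictness is handled by a fixed tie-breaking convention, using that $z_j^1$ gets removed at the end of iteration $j$ (since $c_j \in N^{89}(z_j^1)$) while $z^*$ persists, forcing $z_j^1 \neq z^*$. Condition~(1) then holds by construction.

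For condition~(2), fix $j \neq j'$ and WLOG $j < j'$. Lemma~\ref{lem:available} implies that the whole path $z_{j'}^1, z_{j'}^2, \ldots, c_{j'}$ is available at the start of iteration $j'$, hence at the end of iteration $j$, so the removal rule gives $c_j \notin N^{89}(z_{j'}^{l'})$, i.e., $\dist(c_j, z_{j'}^{l'}) > 89 \cdot 2^{-\level(z_{j'}^{l'})}$, for every $l'$. Path containment---consecutive net points on a descent path are at distance at most $4 \cdot 2^{-\level(\cdot)}$, so the whole $j'$-sequence lies in the ball of radius $8 \cdot 2^{-\level(z_{j'}^1)}$ around $z_{j'}^1$---sharpens this to $\dist(c_j, z_{j'}^{l'}) > 81 \cdot 2^{-\level(z_{j'}^1)}$. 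Combining with $\dist(z_j^l, c_j) \leq 8 \cdot 2^{-\level(z_j^l)}$ (the same containment applied to $j$) via the triangle inequality, and bounding the disjointness target $2 \cdot 2^{-\level(z_j^l)} + 2 \cdot 2^{-\level(z_{j'}^{l'})}$ using $\level(z_{j'}^{l'}) \geq \level(z_{j'}^1)$, the $N^2$-disjointness reduces to the level inequality $\level(z_{j'}^1) \leq \level(z_j^l) + 2$, which must hold for all $l \geq l_j$.

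The main obstacle is establishing this level inequality from the definition of $l_j$. Both $\nval(z_j^{l_j}) > \nval^*$ and $\nval(z_{j'}^1) \geq \nval^*$ upper-bound each individual level through $\val \leq n \cdot 2^{-2\level}$ and the threshold condition, but do not directly bound their difference. I plan to close this gap by taking $l_j := \max(l_j^{\mathrm{val}}, l_j^{\mathrm{lvl}})$, where $l_j^{\mathrm{val}}$ is the noisy-value cutoff above and $l_j^{\mathrm{lvl}}$ is the smallest $l$ with $\level(z_j^l) \geq \max_{j' > j}\level(z_{j'}^1) - 2$, and then verifying $\nval(z_j^{l_j}) > \nval^*$ via the greedy max-$\nval$ descent: if $\nval$ along the path dropped below $\nval^*$ at a level shallower than $l_j^{\mathrm{lvl}}$, the max-child rule combined with the availability/removal structure around $c_j$ would contradict $z_{j'}^1$ being selected (with $\nval \geq \nval^*$) at a finer level in a subsequent iteration. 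This verification is where I expect the bulk of the technical work to lie.
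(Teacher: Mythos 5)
Your proposal correctly identifies the geometric core of the argument: deriving $N^2$-disjointness from a level comparison, using the $89$-neighborhood removal rule together with the $8$-ball containment of each descent path (the paper's Lemma~\ref{lem:8diam}). Your arithmetic reducing disjointness to $\level(z_{j'}^1) \leq \level(z_j^l) + 2$ is sound (the paper uses the stronger bound $+4$ as a by-product of a separate argument, but $+2$ is sufficient for the disjointness you need). The problem is that the proof stops exactly at the point where all the work is: you write "This verification is where I expect the bulk of the technical work to lie" and give only a one-sentence intuition. So the proposal is incomplete, and the missing step is the technical heart of the lemma.

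Concretely, your one-shot definition $l_j := \max(l_j^{\mathrm{val}}, l_j^{\mathrm{lvl}})$ has no way to certify that $\nval(z_j^{l_j}) > \nval^*$: if $l_j^{\mathrm{lvl}} > l_j^{\mathrm{val}}$, you are forcing the index past the last $l$ where $\nval(z_j^l) > \nval^*$, and you then have nothing to show the noisy value is still large. Worse, the levels between $l_j^{\mathrm{val}}$ and $l_j^{\mathrm{lvl}}$ need not exhibit any $N^2$ intersection with a later-iteration sequence, so the "availability/removal structure around $c_j$" you gesture at may simply not be triggered there and gives you no leverage.

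The paper closes this gap with Lemma~\ref{lem:pruningOneStep}, which is what your sketch is implicitly reaching for: under the hypothesis that $N^2(z_j^l) \cap N^2(z_{j'}^{l'}) \neq \emptyset$ with $j < j'$, (a) the level gap $\level(z_{j'}^1) - \level(z_j^l) \geq 4$ (proved by showing otherwise a descendant of $z_{j'}^{l'}$ at level $\level(z_j^l)+3$ lands in $N^{89}(c_j)$, contradicting availability via Lemma~\ref{lem:available}), and (b) a child $w$ of $z_j^l$ at level $\level(z_j^l)+1$ has $z_{j'}^1 \in N^{0.5}(w)$, hence $N^1(z_{j'}^1) \subset N^1(w)$, so $\val(w) \geq 8\,\val(z_{j'}^1)$; together with the $\thresh$-threshold condition and the fact that $\nval(z_{j'}^1) \geq \thresh$ (because $z_{j'}^1$ beat the still-available net point of noisy value $\geq \thresh$), this gives $\nval(z_j^{l+1}) \geq \nval(w) \geq 2\,\nval(z_{j'}^1)$. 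This one-step improvement is then plugged into an \emph{iterative} definition of the $l_j$'s: start with $l_j = 1$ (condition (1) holds by greedy selection), and whenever a pair $(j,l)$, $(j',l')$ with $j<j'$, $l \geq l_j$, $l' \geq l_{j'}$ violates $N^2$-disjointness, increment $l_j$. Lemma~\ref{lem:pruningOneStep}(a) shows the increment is legal (the path is long enough), part (b) shows condition (1) is preserved, and the level bound $\lceil \log n \rceil$ forces termination. The essential difference from your plan is precisely this iterative structure: each increment is \emph{triggered by} a disjointness violation, which is exactly what makes the noisy-value chain argument available at every step. A fixed one-shot $l_j$ forfeits that trigger, which is why the verification you deferred does not go through as described.
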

\begin{proof}[Proof sketch, see \Cref{ap:pruning}]
    The proof is inspired by the original one from Mettu and Plaxton~\cite{MettuP00}. We sketch it here very briefly and refer to \Cref{ap:pruning} for more details. 
    We describe a procedure to compute the $l_j$, starting with $l_j = 1$ for all $j$, and note that with this choice the first condition is verified by the design of the algorithm: when $z_j^1$ is picked it (on line 5 of Algorithm~\ref{alg:mp}), it maximizes the noisy value among the available net points. 
To enforce the second condition, we proceed as follows: as long as we can find $j<j'$ and $l\geq l_j$, $l' \geq l_{j'}$ such that $N^2(z_j^{l}) \cap N^2(z_{j'}^{l'}) \neq \emptyset$, we update $l_j = l_j + 1$. 

We show in \Cref{ap:pruning} that this procedure is well defined, that it preserves by induction the first property and therefore when it terminates, the two conditions are satisfied.
\end{proof}

We are now ready to prove \Cref{thm:mp}. Our goal is to find $k$ net points not covered by $\Gamma$ with disjoint $1$-neighborhood and with values larger than those of the $z_\gamma$. Then, applying \Cref{lem:cover} would allow to conclude $\cost(\bigcup_{\gamma\in\Gamma_1} In(P_\gamma), \calC) \leq O(1) \cdot \sum_\gamma \val(z_\gamma) \leq O(1) \cdot \cost(P, \Gamma)$. 
We explain in the next proof how to compute those net points.

\begin{proof}[Proof of \Cref{thm:mp}]
In order to bound the sum of values of $z_\gamma$, we will apply the previous \Cref{lem:cover} to $k$ high-value net points not covered by $\Gamma$. Specifically, we will define a function $\phi$ that maps centers of $\Gamma_1$ to net points such that, for all $\gamma \in \Gamma_1$:
{\parskip = 0pt
\begin{itemize}
    \item  for all $\gamma' \in \Gamma_1$ with $\gamma \neq \gamma'$, $N^1(\phi(\gamma))\cap N^1(\phi(\gamma')) = \emptyset$,
    \item  $\phi(\gamma)$ is not covered by $\Gamma$,
    \item  the value of $z_\gamma$ is less than $\max(2\cdot \thresh , \val(\phi(\gamma)) / 4)$.
\end{itemize}
}

Given such a matching $\phi$, we can conclude as follows. 
Using \Cref{lem:inAndout} and the proof of \Cref{sec:easy} we get:
\begin{align*}
\cost(P, \calC) &= \cost(\bigcup_{\gamma \in \Gamma_1}Out(P_\gamma), \calC) + \cost(\bigcup_{\gamma \in \Gamma_1}In(P_\gamma), \calC) + \cost(\bigcup_{\gamma \in \Gamma_2}P_\gamma, \calC)\\
&\leq 359^2 \cost(, \Gamma) +90^2 + 181^2 \sum_{\gamma \in \Gamma_1} \val(z_\gamma)
\end{align*}
Summing the inequality of the third property of $\phi$ gives $\sum_{\gamma \in \Gamma_1}\val(z_\gamma) \leq 2k\cdot\thresh + \sum_{\gamma \in \Gamma_1} \val(\phi(\gamma)))$.
Combined with \Cref{lem:cover} and the fact that the $N^1(\phi(\gamma))$ are disjoints, we can therefore conclude:
\begin{align*}
    \cost(P, \calC)  &\leq 359^2 \cost(P, \Gamma) +90^2 + 181^2 \cdot \lpar 2k\cdot\thresh + \sum_{\gamma \in \Gamma_1}\cost(N^1(\phi(\gamma))\cap P, \Gamma)\rpar \\
    &\leq 359^2 \cost(P, \Gamma) +90^2 + 181^2 \cdot \lpar 2k\cdot\thresh +\cost(\bigcup_{\gamma\in\Gamma_1}N^1(\phi(\gamma))\cap P, \Gamma)\rpar\\
    & \leq 359^2 \cost(P, \Gamma) +90^2 + 181^2 \cdot \lpar 2k\cdot\thresh + \cost(P,\Gamma)\rpar\\
    &\leq  O(1)(\cost(P, \Gamma) + 2k\cdot\thresh).
\end{align*}

The rest of the proof is dedicated to the construction of a matching $\phi$ with the three desired properties. We construct a more general function $\Tilde{\phi}$ that maps centers of $\Gamma$ to net points. The restriction of $\Tilde{\phi}$ to $\Gamma_1$ will verify the desired properties. Let $l_j$ be the indices provided by \Cref{lem:pruning}. We have three cases: for all $j$ such that $c_j$ is covered, let $\gamma^2_j\in \Gamma$ be an arbitrary element covering it. For any net point $z$ of level $\lceil \log(n) \rceil$ such that $\gamma^2_j\in N^{0.5}(z)$, we have by the triangle inequality $\dist(c_j,z) \leq \dist(c_j,\gamma^2_j) + \dist(\gamma^2_j,z) \leq 2\cdot  2^{-\lceil \log(n) \rceil} + 0.5 \cdot 2^{-\lceil \log(n) \rceil} = 2.5\cdot  2^{-\lceil \log(n) \rceil}$. In particular, $z$ is unavailable at the end of the algorithm and $\gamma^2_j\in \Gamma_2$ by definition of $\Gamma_2$. We define $\Tilde{\phi}(\gamma^2_j) = z_j^{l_j}$. 

Otherwise for all $j$ such that at least one of the net points of the sequence $(z_j^{l})_{l\geq l_j}$ is covered but not the last one. We define $\lambda_j \geq l_j$ to be the smallest index such that for all $l\geq \lambda_j$ $z_j^{l}$ is not covered, and $\gamma^1_j$ to be an arbitrary element of $\Gamma$ that covers $z_j^{\lambda_j-1}$. 
We define $\Tilde{\phi}(\gamma^1_j) = z_j^{\lambda_j}$.
$\Tilde{\phi}$ is completed to be an arbitrary one-to-one matching between all remaining $\gamma$ and all $z_j^{l_j}$, for $j$ such that none of the net points of the sequence $(z_j^{l})_{l\geq l_j}$ are covered.

Note that the second item of lemma \ref{lem:pruning} guarantees that if $\gamma$ covers a net point of a sequence, it can not cover a net point of another sequence: this ensures  that our definition of $\Tilde{\phi}$ is consistent. We can now verify that $\phi$ defined as the restriction of $\Tilde{\phi}$ to $\Gamma_1$ verifies the three desired properties.
\begin{itemize}

    \item For all $j$, since $\lambda_j \geq l_j$, \Cref{lem:pruning} ensures that for all $\gamma \neq \gamma'$ $N^2(\Tilde{\phi}(\gamma))\cap N^2(\Tilde{\phi}(\gamma')) = \emptyset$. In particular $N^1(\Tilde{\phi}(\gamma))\cap N^1(\Tilde{\phi}(\gamma')) = \emptyset$.
    \item If $\gamma \in \Gamma_1$, $\Tilde{\phi}(\gamma)$ is not covered by $\Gamma$ by construction.
    \item For the third item, we distinguish three cases. For all $\gamma \in \Gamma$:
\begin{itemize}
    \item Either $\gamma$ is one of the $\gamma^1_j$. In that case, let $z$ be a net point of same level as  $\Tilde{\phi}(\gamma)$ such that $\dist(z,\gamma^1_j) \leq 2^{-(\level(z)+1)}$. $\gamma^1_j$ is covering $z_j^{\lambda_j-1}$ so $z$ could have been chosen by the algorithm instead of $\Tilde{\phi}(\gamma)$, therefore $\nval(C) \leq \nval(\Tilde{\phi}(\gamma))$. 
    Furthermore, $N^1(z_\gamma) \subset N^1(z)$. Therefore, $\val(z_\gamma) \leq \val(z)$. By the $\thresh$-threshold condition, if $\nval(z) < \thresh $ then $$\val(z_\gamma) \leq \val(z) \leq 2\cdot \thresh .$$ Otherwise if $\nval(z)  \geq \thresh$ then $$\val(z_\gamma) \leq \val(z) \leq \nval(z)/2 \le \nval(\Tilde{\phi}(\gamma)) /2  \leq  \val(\Tilde{\phi}(\gamma))/4.$$

    \item Or $\gamma$ is one of the $\gamma^2_j$. In that case, $\Tilde{\phi}(\gamma)$ does not need to verify the third property because $\gamma_j^2 \in \Gamma_2$
    \item Or $\gamma$ is not one of the $\gamma^1_j$ or $\gamma^2_j$, then $\Tilde{\phi}(\gamma)$ is of the form $z_j^{l_j}$. By the lemma \ref{lem:pruning}, $\nval(z_\gamma) \leq \nval(z_j^{l_j})$ because $z_\gamma$ is available at the end of the algorithm. Using the $\thresh$-threshold condition, we have 
    \begin{align*}
    \val(z_\gamma) &\leq \nval(z_\gamma) / 2 \leq \nval(z_j^{l_j}) / 2 \leq \val(z_j^{l_j})/4.\qedhere
    \end{align*}
\end{itemize}
\end{itemize}

\end{proof}

\subsection{Properties of Algorithm~\ref{alg:mp}: Proof of Lemma~\ref{lem:pruning}}\label{ap:pruning}
Before proving \Cref{lem:pruning}, we show some preliminary results.
We begin by establishing a simple Lemma that sets a limit on the distance between two net points that are selected during the same execution of loop line 6 of \Cref{alg:mp}.
\begin{restatable}{lemma}{8diam}
    \label{lem:8diam}
    Let $z_j^l$ and $z_j^{l'}$ be two net points selected by the algorithm at the $j$-th iteration of the loop line 6 of \Cref{alg:mp}, with $l' > l$. Then $\dist(z_j^l,z_j^{l'}) \leq 8\cdot 2^{-\level(z_j^{l})}$.
\end{restatable}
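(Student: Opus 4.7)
The plan is to chain the parent--child distance bound along the sequence $z_j^l, z_j^{l+1}, \ldots, z_j^{l'}$ and sum a geometric series.

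First, I would recall the key structural fact from the net decomposition: by construction of \Cref{alg:mp}, each $z_j^{i+1}$ is chosen (line 7) to be a child of $z_j^i$. By the definition of children in \Cref{sec:netDec}, this means $z_j^{i+1}\in N^4(z_j^i)$, and since $N^4(z_j^i) = B(z_j^i, 4\cdot 2^{-\level(z_j^i)})$, we immediately get the one-step bound
\[
\dist(z_j^i, z_j^{i+1}) \leq 4 \cdot 2^{-\level(z_j^i)}.
\]

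Next, I would apply the triangle inequality along the entire chain from $l$ to $l'$:
\[
\dist(z_j^l, z_j^{l'}) \leq \sum_{i=l}^{l'-1} \dist(z_j^i, z_j^{i+1}) \leq \sum_{i=l}^{l'-1} 4 \cdot 2^{-\level(z_j^i)}.
\]
Since the levels are consecutive, $\level(z_j^i) = \level(z_j^l) + (i-l)$, so the right-hand side is the partial sum of a geometric series:
\[
4 \cdot 2^{-\level(z_j^l)} \sum_{i=0}^{l'-l-1} 2^{-i} \;<\; 4 \cdot 2^{-\level(z_j^l)} \cdot 2 \;=\; 8\cdot 2^{-\level(z_j^l)},
\]
which is exactly the desired bound.

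There is no real obstacle here: the only subtlety is recognizing that we must use the \emph{infinite} geometric bound $\sum_{i\geq 0} 2^{-i} = 2$ rather than the finite partial sum, so that the inequality is uniform in $l'$. Everything else is a direct consequence of the definition of ``child'' in the net decomposition and the triangle inequality, with no privacy or approximation considerations entering the argument.
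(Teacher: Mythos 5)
Your proof is correct and takes essentially the same route as the paper: bound each consecutive parent--child step by $4\cdot 2^{-\level(z_j^i)}$, chain them with the triangle inequality, and bound the resulting geometric series by $8\cdot 2^{-\level(z_j^l)}$.
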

\begin{proof}
The distance between $z_j^{l}$ and one of its child is at most $4 \cdot 2^{-\level(z_j^{l})}$. By induction, we have:
\begin{align*}
\dist(z_j^l,z_j^{l'}) &\leq \sum_{i=0}^{l'-l-1}\frac{4 \cdot 2^{-\level(z_j^{l})}}{2^i} \leq 8\cdot 2^{-\level(z_j^{l})}\qedhere
\end{align*}
\end{proof}

The next Lemma is the key to prove that the procedure to compute the $l_j$'s terminates and verifies the conditions of \Cref{lem:pruning}
\begin{restatable}{lemma}{pruningOneStep}\label{lem:pruningOneStep}
If there exists at the end of the algorithm an available net point with value greater than $2\cdot \thresh$, then for every $j,j',l,l'$ such that $N^2(z_j^l)\cap N^2(z_{j'}^{l'}) \neq \emptyset$ and $j<j'$,
{ \parskip = 0pt
\begin{itemize}
\item $\level(z_{j'}^{1}) \geq \level(z_{j}^{l})+4$
\item $\nval(z_j^{l+1}) \geq \nval(z_{j'}^1)$
\end{itemize}
}
\end{restatable}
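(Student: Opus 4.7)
The plan is to deduce both items from two facts: (i) $z_{j'}^1$ is still available at the start of iteration $j'$, so it cannot have been eliminated by $c_j$, and (ii) the hypothesis gives a lower bound on $\nval(z_{j'}^1)$ (and hence on $\val(z_{j'}^1)$) via the threshold condition.

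I would first derive the lower bound on $\val(z_{j'}^1)$. Since $z^*$ is available at the end (and therefore at the start of iteration $j'$) with $\val(z^*) > 2\thresh$, the contrapositive of the threshold condition gives $\nval(z^*) \geq \thresh$. Because $z_{j'}^1$ is selected as the available net point maximising $\nval$, we get $\nval(z_{j'}^1) \geq \nval(z^*) \geq \thresh$, and the threshold condition applied to $z_{j'}^1$ itself yields $\val(z_{j'}^1) \geq \nval(z_{j'}^1)/2 \geq \thresh/2$.

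For the first item, I would exploit availability: since $z_{j'}^1$ survives iteration $j$, we must have $\dist(c_j, z_{j'}^1) > 89 \cdot 2^{-L}$ where $L = \level(z_{j'}^1)$ and $M = \level(z_j^l)$. I would then bound the same distance from above by concatenating the three triangle-inequality hops $c_j \to z_j^l \to z_{j'}^{l'} \to z_{j'}^1$, using Lemma~\ref{lem:8diam} to bound $\dist(c_j, z_j^l) \leq 8 \cdot 2^{-M}$ and $\dist(z_{j'}^1, z_{j'}^{l'}) \leq 8 \cdot 2^{-L}$ (strict for finite $l'$), and bounding $\dist(z_j^l, z_{j'}^{l'}) \leq 2 \cdot 2^{-M} + 2 \cdot 2^{-\level(z_{j'}^{l'})}$ from the fact that the two $N^2$-neighbourhoods share a point. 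A careful sum gives $\dist(c_j, z_{j'}^1) < 10 \cdot 2^{-M} + 8 \cdot 2^{-L}$. Combining with the availability bound yields $2^{L - M} > 8.1$, and since $L - M$ is an integer we conclude $L \geq M + 4$. The main obstacle is the tight arithmetic: using only a non-strict additive bound would give merely $L - M \geq 3$, and one must exploit $\dist(z_{j'}^1, z_{j'}^{l'}) + 2 \cdot 2^{-\level(z_{j'}^{l'})} < 8 \cdot 2^{-L}$ (strict for every finite $l'$) to recover the integer gap of $4$.

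For the second item, I would use the level gap from (a) to find, via the covering property at level $M+1$, a net point $w$ with $\dist(z_{j'}^1, w) \leq 2^{-(M+2)}$. The distance bounds from (a) imply $\dist(z_j^l, w) < 4 \cdot 2^{-M}$, so $w$ is a child of $z_j^l$. Because $L \geq M + 2$, the ball $N^1(z_{j'}^1)$ is contained in $N^1(w)$, which gives $\val(w) \geq 4^{L - M - 1} \val(z_{j'}^1) \geq 64 \val(z_{j'}^1) \geq 32\, \thresh$. The threshold condition then forces $\nval(w) \geq \val(w)/2 \geq 16\, \nval(z_{j'}^1) \geq \nval(z_{j'}^1)$. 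Since $z_j^{l+1}$ is selected as the child of $z_j^l$ with largest $\nval$, we obtain $\nval(z_j^{l+1}) \geq \nval(w) \geq \nval(z_{j'}^1)$, which is the desired bound.
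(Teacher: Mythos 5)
Your proof is correct, and I verified the arithmetic. For the first item you take a genuinely different route from the paper. You directly compare a sharp upper bound on $\dist(c_j, z_{j'}^1)$ --- obtained by chaining $c_j \to z_j^l \to z_{j'}^{l'} \to z_{j'}^1$ with the strictly telescoping bound $\dist(z_{j'}^1, z_{j'}^{l'}) + 2\cdot 2^{-\level(z_{j'}^{l'})} < 8\cdot 2^{-\level(z_{j'}^1)}$ --- against the survival lower bound $\dist(c_j, z_{j'}^1) > 89\cdot 2^{-\level(z_{j'}^1)}$, which follows because $z_{j'}^1$ is still available at iteration $j'>j$. The paper instead argues by contradiction: assuming $\level(z_{j'}^1) \leq \level(z_j^l)+3$, it extends the $z_{j'}$-sequence with descendants $y_i$ whose $N^2$-neighborhoods still meet $N^2(z_j^l)$, isolates a descendant $y$ of $z_{j'}^1$ at level $\level(z_j^l)+3$, shows $\dist(c_j,y)\leq 89\cdot 2^{-\level(y)}$ so $y$ is removed at iteration $j$, and invokes Lemma~\ref{lem:available} to conclude $y$ must nonetheless be available when $z_{j'}^1$ is picked. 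Your version avoids both the auxiliary descendant and the appeal to Lemma~\ref{lem:available}, and is more direct; the price is exactly the strictness bookkeeping you identified --- the naive non-strict bound gives $2^{L-M}>7.9$ hence only $L-M\geq 3$, whereas the strict telescope pushes it to $2^{L-M}>8.1$, hence $L-M\geq 4$. For the second item your argument is essentially the paper's: the same covering net point $w$ at level $M+1$, the same verification that $w$ is a child of $z_j^l$ and that $N^1(z_{j'}^1)\subset N^1(w)$, and the same two applications of the threshold condition (first lower-bounding $\nval(z_{j'}^1)$ via the surviving high-value net point, then lower-bounding $\nval(w)$ from the inflated $\val(w)$).
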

\begin{proof}
Let $j,j',l,l'$ such that $N^2(z_j^l)\cap N^2(z_{j'}^{l'}) \neq \emptyset$ and $j<j'$. We start by proving the first point by contradiction: suppose that $\level(z_{j'}^1) \leq \level(z_{j}^{l})+3$. We extend the sequence starting from $z_{j'}^{l'}$ to prove the existence of a "descendant" of $z_{j'}^1$ of level  $\level(z_{j}^{l})+3$ in $N^3(z_j^l)$. We will then prove that this descendant became unavailable when $c_j$ was selected, contradicting Lemma \ref{lem:available}.

More precisely, we pick recursively a sequence $y_0,\dots, y_{i^*}$ such that $y_0 = z_{j'}^{l'}$ and $y_{i+1}$ is an arbitrary child of $y_i$ such that in $N^2(z_j^l) \cap N^2(y_{i+1}) \neq \emptyset $ and $\level(y_{i^*}) \geq \level(z_{j}^{l})+4$. This is doable since $N^2(z_j^l)\cap N^2(z_{j'}^{l'}) \neq \emptyset$. Thus exists a net point $y$ in the sequence $z_{j'}^{1},\dots,z_{j'}^{l'},y_1,\dots,y_{i^*}$ of level $\level(z_{j}^{l})+3$. It should be noted that Lemma \ref{lem:available} guarantees the availability of all the net points within the sequence $z_{j'}^{1},\dots,z_{j'}^{l'},y_1,\dots,y_{i^*}$, including $y$, when the algorithm selects $z_{j'}^1$.

We have $\dist(z_{j}^{l},y_{i^*}) \leq 2\cdot 2^{-\level(z_{j}^{l})} + 2 \cdot 2^{-(\level(z_{j}^{l})+4)} \leq 16 \cdot 2^{-\level(y)} + 2^{-\level(y)} = 17 \cdot 2^{-\level(y)} $ and by Lemma \ref{lem:8diam} $\dist(y_{i^*},y) \leq 8 \cdot 2^{-\level(y)}$. On the other hand, a simple corollary of Lemma \ref{lem:8diam} is that $\dist(c_j, z_{j}^{l}) \leq 8\cdot 2^{-\level(z_{j}^{l})} = 64 \cdot  2^{-\level(y)}$. Using the triangle inequality, we get:

\begin{align*}
 \dist(c_j,y) &\leq  \dist(c_j, z_{j}^{l}) + \dist(z_{j}^{l},y_{i^*}) + \dist(y_{i^*},y)\\
&\leq 64 \cdot  2^{-\level(y)} + 17 \cdot 2^{-\level(y)} + 8 \cdot 2^{-\level(y)}\\
&= 89 \cdot 2^{-\level(y)}
\end{align*}

Therefore, $y \in N^{89}(z_{c_j})$ and this leads to a contradiction since $y$ became unavailable when $c_j$ was selected and therefore is not available when $z_{j'}^1$ is picked because $j>j'$. This conclude the proof of the first point.

We turn to the second point. Applying Lemma \ref{lem:8diam}, we get $\dist(z_{j'}^{l'}, z_{j'}^{1}) \leq 8 \cdot 2^{-\level(z_{j'}^{1})} \leq 0.5 \cdot 2^{-\level(z_j^{l})}$. On the other hand $N^2(z_j^l)\cap N^2(z_{j'}^{l'}) \neq \emptyset$ gives $\dist(z_{j}^{l},z_{j'}^{l'}) \leq 2 \cdot 2^{-\level(z_j^{l})} + 2\cdot 2^{-\level(z_j^{l}+4)} \leq 2.5 \cdot  2^{-\level(z_j^{l})}$. Using the triangle inequality, we have 

\begin{align*}
    \dist(z_{j}^{l},z_{j'}^{1}) &\leq \dist(z_{j}^{l},z_{j'}^{l'}) + \dist(z_{j'}^{l'}, z_{j'}^{1})\\
    &\leq 2.5 \cdot 2^{-\level(z_j^{l})} + 0.5\cdot 2^{-\level(z_j^{l})}\\
    & = 3 \cdot 2^{-\level(z_j^{l})}
\end{align*}

And $z_{j'}^{1}$ is in the $3$-neighborhood of $z_{j}^{l}$. Now let $w$ be a net point of level $\level(z_{j}^{l})+1$ such $z^1_{j'} \in N^{0.5}(w)$. Such a point exist by the covering property of nets. We have $N^1(z_{j'}^1) \subset N^1(w)$, hence $\val(w) \geq 8 \cdot \val(z_{j'}^1)$. Moreover $w$ is in $N^4(z_{j}^{l})$ and therefore $w$ is a child of $z_{j}^{l}$. The algorithm picked $z_{j}^{l+1}$ over $w$, so $\nval(z_{j}^{l+1}) \geq \nval(w)$. We know that at the end of the algorithm there exists an available net point with value greater than $2\cdot \thresh$. By the $\thresh$-threshold condition, the noisy value of this net point is at least $\thresh$. The algorithm picked $z_{j'}^1$ over this net points, so $\nval(z_{j'}^1) \geq \thresh$ and 

\begin{align*}
    \nval(z_{j}^{l+1}) &\geq \nval(w) \geq \val(w)/2 \geq 4 \val(z_{j'}^1) \geq 2  \nval(z_{j'}^1).\qedhere
\end{align*}
\end{proof}

We can now turn to the proof of \Cref{lem:pruning}, that we recall here for convenience:
\pruning*
\begin{proof}
We describe a procedure to compute the $l_j$'s. Start with $l_j = 1$ for all $j$, and note that with this choice the first condition is verified by design of the algorithm: when $z_j^1$ is picked (on line 5 of Algorithm~\ref{alg:mp}), it maximizes the noisy value among the available net points. In particular, $z_j^1$ has value larger than any cell that is still available at the end of the algorithm.
Then, to enforce the second condition, we proceed as follows: as long as we can find $j<j'$ and $l\geq l_j$, $l' \geq l_{j'}$ such that $N^2(z_j^{l}) \cap N^2(z_{j'}^{l'}) \neq \emptyset$, we update $l_j = l_j + 1$. 

The first item of \Cref{lem:pruningOneStep} guarantees that this procedure is well defined, namely that $z_j^{l}$ is not the last net point of the sequence and that $z_j^{l_j+1}$ does indeed exist. 
Furthermore, the second item of that Lemma ensures that $\nval(z_j^{l_j+1}) \geq \nval(z_{j'}^1)$, which itself is greater than $\nval(z)$ for any net point $z$ that is available at the end of Algorithm~\ref{alg:mp}. Therefore, the first condition remains satisfied after each update.
At each step, one of the $l_j$ get incremented, so this procedure must terminates because by the maximum level is $\lceil \log(n) \rceil$: when it ends, both conditions are satisfied, which concludes the proof. 
\end{proof}

\section{Missing Proofs for the $\eps$-DP Algorithm in the Low Dimensional Setting (Section~\ref{sec:privacyLD})}
\label{app:privacyLD}
Recall that by \emph{low-dimensional setting} we mean that we present an algorithm whose additive error is exponential in the dimension $d$, i.e., whose additive error is only small for small $d$.
\subsection{A Differentially Private Constant-Factor Approximation Algorithm}
In this subsection, we analyze \Cref{alg:makePriv}, and show \Cref{thm:linf}.

\linf*

To simplify the equations, we will note in the following
$\calE(\eps, \beta, n) = \frac{\log(n)^2\log(T)}{\eps} \cdot \lpar \log(1/\beta) + \sqrt{\log T}\rpar$.
As a direct consequence of \Cref{lem:dpCountNet}, we have the following fact: 
\begin{fact}\label{lem:propMakePriv}
    Algorithm \ref{alg:makePriv}  $\makePrivLowDim(P, \eps)$ satisfies $\eps$-differential privacy. 
    Moreover, it holds at each time step $t$, with probability $1-\beta$, that for all net point $z$,
    $\big|c(z,t) - |N^1(z) \cap P_t|\big| \leq d 2^{2d} \calE(\eps, \beta, n).$
\end{fact}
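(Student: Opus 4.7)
The plan is to apply Lemma~\ref{lem:dpCountNet} essentially as a black box, with $f \equiv 1$ and ground sets given by the 1-neighborhoods of every net point across every level. The only real work is to compute the two parameters that appear in the error bound of that lemma.

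First I would bound the two relevant quantities using Lemma~\ref{lem:sizeNei}. The total number of ground sets is
\[m \;=\; \sum_{i=1}^{\lceil \log n\rceil} |Z_i| \;\leq\; \sum_{i=1}^{\lceil \log n\rceil} 2^{id+3} \;=\; O(n^d),\]
so $\log m = O(d \log n)$. For the per-point multiplicity $b$, the second item of Lemma~\ref{lem:sizeNei} says that each point of $B_d(0,1)$ belongs to the 1-neighborhood of at most $2^{2d}$ net points at any single level, and there are $\lceil \log n\rceil$ levels, giving $b = O(2^{2d} \log n)$.

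For the privacy claim, Algorithm~\ref{alg:makePriv} just forwards each update to the histogram mechanism $\cal M$ of Lemma~\ref{lem:dpCountNet}, which is $\eps$-DP with the parameters above. The values $v(z,t) = 2^{-2\ell} c(z,t)$ are computed as a deterministic post-processing of the outputs of $\cal M$, hence do not affect privacy.

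For the utility claim, Lemma~\ref{lem:dpCountNet} with $f \equiv 1$ guarantees that at each time step $t$, with probability at least $1-\beta$, the estimate $c(z,t)$ satisfies
\[\bigl| c(z,t) - |N^1(z)\cap P_t| \bigr| \;\leq\; O(b) \cdot \frac{\log T}{\eps} \cdot \bigl(\sqrt{\log T} + \log(m/\beta)\bigr)\]
simultaneously for all net points $z$. Plugging in $b = O(2^{2d} \log n)$ and $\log m = O(d \log n)$, and using that $\log(m/\beta) = O(d \log n + \log(1/\beta))$, the bound simplifies to
\[O(d\cdot 2^{2d}) \cdot \frac{\log(n)^2 \log T}{\eps} \cdot \bigl(\sqrt{\log T} + \log(1/\beta)\bigr) \;=\; O(d\cdot 2^{2d})\cdot \calE(\eps,\beta,n),\]
which matches the claim up to the implicit constant. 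There is no genuine obstacle here: once the counts $m$ and $b$ are computed via Lemma~\ref{lem:sizeNei}, the statement is immediate from Lemma~\ref{lem:dpCountNet}.
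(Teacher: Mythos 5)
Your proof is correct and follows essentially the same route as the paper: the paper also derives the bound by applying \Cref{lem:dpCountNet} directly with the $1$-neighborhoods of all net points as ground sets, reading off $b = 2^{2d}\log n$ and $m = O(n^d)$ from \Cref{lem:sizeNei}, with privacy following because the computation of $v(z,t)$ from $c(z,t)$ is post-processing. Your write-up merely spells out the arithmetic that turns $O(b)\cdot\frac{\log T}{\eps}(\sqrt{\log T}+\log(m/\beta))$ into $O(d\,2^{2d})\,\calE(\eps,\beta,n)$, which the paper leaves implicit.
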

\begin{proof}
This directly stems from \Cref{lem:dpCountNet}, applied with $G = B(0, 1)$ and $G_j$ be the 1-neighborhoods of all net points. We have $b = 2^{2d} \log (n)$ and $m = O\left(n^{d}\right)$ as guaranteed by \Cref{lem:sizeNei}.
\end{proof}
Using this fact, we can show that the output of \Cref{alg:makePriv} satisfies the conditions of \Cref{thm:mp}:
\begin{lemma}\label{lem:estimation}
Let  $\thresh \geq 3 R$, where $R = d\cdot 2^{2d} \calE(\eps, \beta, n)$ is the upper bound from \Cref{lem:propMakePriv}. 
    For each time step $t$, it  holds with probability $1-\beta$ that, for all net point $z$: 
    \begin{itemize}
        \item if  $v(z,t) \geq \thresh$, then $v(z, t) \in [\val(z,t)/2, 2\val(z,t)]$,
        \item otherwise, if $v(z,t) <  \thresh$, then $\val(z,t) \leq 2 \cdot \thresh$. 
    \end{itemize}
\end{lemma}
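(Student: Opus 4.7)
The plan is to translate directly from the count error bound of \Cref{lem:propMakePriv} into the level-normalized quantity $v(z,t)$. Setting $R := d\cdot 2^{2d} \calE(\eps,\beta,n)$, that fact gives, with probability $1-\beta$ at the fixed time $t$, the simultaneous bound $|c(z,t) - |N^1(z)\cap P_t|| \leq R$ over all net points. Since $v(z,t) = 2^{-2\ell} c(z,t)$ and $\val(z,t) = 2^{-2\ell}|N^1(z)\cap P_t|$ for $z$ at level $\ell \geq 1$, multiplying both sides by $2^{-2\ell} \leq 1$ yields $|v(z,t) - \val(z,t)| \leq R$ uniformly over net points. I will condition on this event for the rest of the argument.

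Given this, both items of the lemma follow from a short case split on whether $v(z,t) \geq \thresh$. In the first case, the assumption $\thresh \geq 3R$ gives $R \leq \thresh/3 \leq v(z,t)/3$, so $\val(z,t) = v(z,t) \pm R$ sits in $[(2/3)v(z,t),\,(4/3)v(z,t)]$, which rearranges to the required two-sided bound $v(z,t) \in [\val(z,t)/2,\, 2\val(z,t)]$. In the second case, $\val(z,t) \leq v(z,t) + R < \thresh + \thresh/3 \leq 2\thresh$.

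There is no real obstacle here: the lemma is essentially a re-expression of the additive count error of \Cref{lem:propMakePriv} in the form required to invoke the $\thresh$-threshold condition of \Cref{def:thresh} and hence \Cref{thm:mp}. The only points to be mindful of are (i) the level scaling $2^{-2\ell}$ only shrinks the error, so the universal bound by $R$ is valid across all levels, and (ii) no extra union bound over net points is needed at time $t$ because \Cref{lem:propMakePriv} already provides a simultaneous guarantee; the choice $\thresh \geq 3R$ is calibrated precisely so that the multiplicative sandwich $[\val/2, 2\val]$ holds whenever $v(z,t) \geq \thresh$.
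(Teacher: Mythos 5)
Your proof is correct and follows essentially the same route as the paper: both derive $|v(z,t) - \val(z,t)| \leq R$ from \Cref{lem:propMakePriv} via the observation that $2^{-2\ell} \le 1$, then use $\thresh \geq 3R$ in a two-case split to obtain the threshold conditions. The only cosmetic difference is that you bound $\val$ as a fraction of $v$ and invert, whereas the paper first lower-bounds $\val \geq 2\thresh/3$ and compares $\thresh/3$ to $\val/2$; both give $v \in [\val/2,\, 3\val/2] \subset [\val/2,\, 2\val]$.
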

\begin{proof}
Fix a time step $t$.    
By \Cref{lem:propMakePriv}, it holds with probability $1-\beta$ that  for all $z$, 
    $\big|c(z,t) - | N^1(z) \cap P_t|\big|  \leq R$. 

    Since $\val(z,t) = 2^{-2\ell}c(z, t)$ and $2^{-2\ell} \leq 1$,
     this implies immediately that $|v(z,t) - \val(z,t)| \leq   R$.
    The condition on $\thresh$ implies therefore that $|v(z,t) - \val(z,t)| \leq \thresh / 3$.
    
    First, in the case where $v(z,t) \geq  \thresh$, then $\val(z,t) \geq \frac{2\thresh}{3}$. Therefore, $v(z,t) \geq \val(z,t) - \thresh/3 \geq \val(z,t) - \frac{\val(z,t)}{2}$. Similarly, $v(z,t) \leq \val(z,t) + \thresh/3 \leq \frac{3\val(z,t)}{2}$. This concludes the first part:
    it holds that $v(z,t) \in [\frac{\val(z,t)}{2}, \frac{3\val(z,t)}{2}]$.
     
    In the case where $v(z,t) < \thresh$, note that 
        $\val(z,t) \leq v(z,t) +  \thresh / 3$:  
    therefore, if $v(z,t) \leq  \thresh$, then $\val(z,t) \leq 2  \cdot \thresh$.
\end{proof}

Equipped with those results, we can conclude the proof of our constant-factor approximation: essentially, we proved that the values computed with $\makePrivLowDim$ are private and that they satisfy the assumption of \Cref{thm:mp}. Thus \Cref{thm:mp} shows that an algorithm with the desired approximation guarantees exists.

\begin{proof}[Proof of \Cref{thm:linf}]
    We set the target probability $\beta = 0.99$.
    The algorithm first uses $\makePrivLowDim(P, \eps)$ to compute $v(A,t)$, and then $\mettuP(v,  k)$ to compute a solution for $k$-means.
    By \Cref{lem:propMakePriv}, the values $v(A,t)$ are private: by post-processing, the whole algorithm is therefore $\eps$-DP.
    For the approximation guarantee, \Cref{lem:estimation} ensures that the values  $v$ satisfy the assumption of \Cref{thm:mp} with $\thresh = d \cdot 2^{2d} \calE(\eps,\beta,n)$, at each step with probability $1-\beta$. Theorem~\ref{thm:mp} 
    therefore shows that the solution computed at time $t$ is an $(O(1), O(k) \cdot \thresh)$-approximation to $k$-means.
    {\setlength{\emergencystretch}{2.5em}\par}

    With $\beta = 0.99$, the additive error is therefore of the order $k d 2^{2d} \calE(\eps,0.99,n) = k d 2^{2d} \frac{\log(n)^2\log(T)^{3/2}}{\eps}$.\footnote{Note that we could have obtained an additive error $\frac{\log(n)\log(T)(\sqrt{\log T} + \log(n))}{\eps}$ instead, but we opted for simplicity, as in later applications for high dimensions this will be anyway dominated by applications of \Cref{lem:dpCountClusters}.}
\end{proof}

\subsection{Improving the approximation ratio from $O(1)$ to $(1+\alpha)w^*$-}\label{sec:optLD}
In this section we show how to improve the approximation ratio of our $\eps$-differentially private algorithm from constant to $(1 +\alpha) w^*$,
for any $0 < \alpha \le 1/4$, where $w^*$ is the approximation ratio of the best non-differentially private, static $k$-means clustering algorithm.

Given the mechanism $\calM'$ from \Cref{lem:dpCountClusters} with $ f \equiv 1$, our implementation of the algorithm described in \Cref{sec:privacyLD} in continual observation setting follows the ideas described in \Cref{sec:privacyLD}.

\begin{algorithm}
\caption{$\algLowDim(P, \eps)$, given a non-private algorithm $\calA$}
\label{alg:optLowDim}
\begin{algorithmic}[1]
\State{Let $v(\cdot, \cdot), c(\cdot,\cdot) = \makePrivLowDim(P, \eps)$}
\For{Each time $t$}
\State{Let $c_1^t, ..., c_{k'}^t$ be the centers computed by $\mettuP(v(\cdot, t), k')$}
\State{For $j$ in $1,..., k'$, let $n_j^t$ be the estimated number of points in cluster $C_j^t$ with mechanism $\calM'$ and failure probability $\beta = 0.01$}
\State{Let $\calS^t$ be the solution computed via $\calA$ on the dataset $P'_t$ containing $n_j^t$ copies of $c_j^t$, for all $j$.}
\EndFor
\State{\textbf{Output}: at time $t$, $\calS^t$.}
\end{algorithmic}
\end{algorithm}

\optLowDim*

\begin{proof}[Proof of \Cref{thm:optLowDim}]
This algorithm is private under continual observation: 
as shown in \Cref{thm:linf}, line 1 and 4 are private, and therefore robustness to post-processing ensures privacy of the solution computed line 5.

For the utility guarantee, we reproduce here the analysis of \cite{dpHD}. He showed that, for $k'= k\alpha^{-O(d)} \log(n/\alpha)$, then $\opt_{k'} \leq \alpha^2 \opt_k$, where $\opt_k$ is the optimal cost using $k$ centers and $\opt_{k'}$ with $k'$.

Therefore, the solution computed in our second step has cost $O(\alpha^2) \opt_k + O(k') \cdot \frac{ d 2^{2d}\log(n)^2\log(T)^{3/2}}{\eps}$, from \Cref{thm:linf}.
\Cref{lem:structClusters}  ensures the clusters computed with $\calM'$ yield a solution with additive error $O(k') \cdot \alpha^{-O(d)} \log(n)^3 \log(T)^{3/2} / \eps$, and therefore the total additive error is $k \cdot \alpha^{-O(d)} \log(n)^4 \log(T)^{3/2} / \eps$

In step 4, for each of the $ k\cdot 2^{O(d)} \log(n)$ cells, the estimated count from \Cref{lem:dpCountNet} is off by at most $\frac{\log(T)^{3/2}}{\eps} \cdot d \cdot \log(n/\beta)^2$.

This ensures that the dataset $P'_t$ defined line 5 is a good proxy for the dataset $P_t$ at time $t$: more precisely, for any candidate solution $S$, we have that 
\begin{align*}
    |\cost(P, S) - \cost(P', S)| &\leq \frac{\log(T)^{3/2}}{\eps} \cdot d \cdot \log(n/\beta)^2 + \sum_{j = 1}^{k'} \sum_{p \in {C'}^t_j} |\cost(p, S) - \cost(c_j^t, S)|\\
    &\leq  \frac{\log(T)^{3/2}}{\eps} \cdot d \cdot \log(n/\beta)^2 + \sum_{j = 1}^{k'} \sum_{p \in {C'}^t_j} \alpha \cost(p, S) + \frac{1+\alpha}{\alpha} \cost(p, c_j^t)\\
    &\leq \alpha \cost(p, S) + \frac{1+\alpha}{\alpha} \lpar O(\alpha^2) \opt_k + k \cdot \alpha^{-O(d)} \frac{\log(n)^4 \log(T)^{3/2}}{\eps} \rpar\\
    &\leq  \alpha \cost(p, S) + O(\alpha) \opt_k + k \cdot \alpha^{-O(d)} \frac{\log(n)^4 \log(T)^{3/2}}{\eps}
\end{align*}

Therefore, computing a $w^*$-approximation on this modified instance yields a solution with cost at most 
$(1+\alpha)w^* \opt_k + \frac{k\cdot 2^{O_\alpha(d)} \log(n)^4 \cdot\log(T)^{3/2}}{\eps}$, which concludes the theorem.
\end{proof}

\section{Missing Proofs for the $\eps$-DP Algorithm in the High Dimensional Setting (Section~\ref{sec:privacyHD})}\label{ap:hd}
Recall that by \emph{high-dimensional setting} we mean that we target an algorithm whose additive error is only polynomial in the dimension $d$.

\makePrivHD*
\begin{proof}
    The privacy guarantee follows from composition: computing $\calC^t$ in line 6 is $\eps/(d+3)$-DP, the computation of $\lbra n_j^t \rbra_{j, t}$ is $\eps/(d+3)$-DP, similarly to the computation of $\sumNorm(j,t)$ and the one of each $\sumEst(j,t)_i$ is $\eps/(d+3)$-DP. Therefore, the algorithm is $\eps$-DP. We note that the clusters ${C'}_j^t$ are not released by the algorithm, and therefore leak no privacy.

    The probability of success of the projection is $0.99$. In that case, with high probability all points $\sqrt{\frac{d}{\hat d}} \pi(p)$ are in the ball $B_{\hat d}(0, \log (n))$, the projection of line $3$ does not change the location of points, as shown in~\cite{dpHD}.  This step is nonetheless necessary to guarantee that the diameter (and therefore, the privacy loss) is bounded in case the dimension reduction fails.
    
    In the case the projection succeeds, the first bullet in the lemma statement is as well a direct consequence of \Cref{thm:optLowDim} and \Cref{lem:dpCountClusters}: the additive error is dominated by the one from \Cref{thm:optLowDim} and is  $O\lpar k \alpha^{-O(\hat d)} \log(n)^4 \cdot \frac{\log(T)^{3/2}}{\eps/(d+2)} \rpar = d \calE(\eps, 0.99, n) \log(n)^2 k^{O(1)}$, which concludes the first bullet of the lemma.
    
    To show the second bullet, we use again \Cref{lem:dpCountClusters} for the estimation of $\sumEst(j, t)$: for each coordinate $i$, \Cref{lem:dpCountClusters} is applied with function $f_i(p') = p_i$ (and it holds that $f_i(p) \leq 1$ since $p \in B_d(0, 1)$), and success probability $1-\beta/(d+3)$:
    
    \begin{align*}
        \logor \sumEst(j, t) - \sum_{p: p' \in {C'}_j^t} p\rnor_2^2 &= \sum_{i=1}^d \lpar \sumEst(j, t)_i - \sum_{p \in C^t_j} p_i\rpar^2\\
        &\leq \sum_{i=1}^d \lpar  k 2^{O(\hat d)} \log(n)^3 \frac{\log(T)(\sqrt{\log T} + \log(d/\beta))}{\eps/d}  \rpar^2\\
        &\leq d \cdot \lpar  k^{O(1)} \log(n)^3  \frac{\log(T)(\sqrt{\log T} + \log(d/\beta))}{\eps/d}   \rpar^2\\
        &=   d^{3} \cdot \lpar \log(d) \calE(\eps, \beta, n) \log(n) k^{O(1)} \rpar^2.
    \end{align*}

    Taking the square root concludes. Finally, we bound $\sumNorm(j, t)$ by a direct application of \Cref{lem:dpCountClusters}, using the function $f(p') = \|p\|_2^2$ (for which it holds again that $f(p') \leq 1$). With probability $1-\beta/(d+3)$ it holds that:
    \begin{align*}
        \left| \sum_{x \in C_j} \|x\|_2^2 - \sumNorm(j,t)\right| &\leq k \alpha^{-O(\hat d)} \log(n)^3 \frac{\log(T)(\sqrt{\log T} + \log(d/\beta))}{\eps/d} \\
        &\leq d k^{O(1)} \cdot \calE(\eps, \beta, n) \log n.
    \end{align*}
    This concludes the proof of the lemma.
\end{proof}

\hdkmeans*
\begin{proof}
Note that the call to Algorithm \ref{alg:hd}  fixes the target failure probability $\beta$ to $1/10$.
The privacy guarantee follows from \Cref{lem:makePrivHD}.
For simplicity of notation, we fix a time $t$ and remove mentions of the time in our analysis.

We need to show that the centers $\bc_1, ..., \bc_k$ computed by \Cref{alg:hd} induce a solution in $\R^d$ that has cost comparable to the one induced by the clustering ${C'}_1, ..., {C'}_k$ computed in \Cref{alg:hd} in line $6$.

\textbf{The implicit clustering is a good approximation.}
Note that $C'_j$ is a cluster for the dataset $P'$: we let $C_j$ be the cluster of $P$ formed of points $p$ in $\R^d$ with $p'  \in C'_j$. 
We first show that assigning each of $C_1,...,C_k$ to its mean $\mu(C_j)=\frac{\sum_{p \in C_j} p}{|C_j|}$ is a good approximation to the $k$-means clustering for $P$.
    
The dimension reduction $\pi$ succeeds with probability $0.99$. In that case, since the cost of any partition is preserved up to a factor $(1\pm \alpha)$, then any $(a,b)$-approximation for $\pi(P)$ is an $((1+ \alpha)a, b)$-approximation for $P$.
Now, \Cref{thm:optLowDim} shows that, in $\R^{\hat d}$, the clusters $C'_1, ..., C'_k$ are an $\lpar (1+\alpha)w^*, k^{O(1)} \calE(\eps, 1/10, n) \log(n)^2 \rpar$-approximate solution for $\pi(P)$ with probability $9/10$.
    {\setlength{\emergencystretch}{2.5em}\par}

    Therefore, with a union-bound, it happens with probability at least $1-2\cdot 1/10  = 4/5$ (when both dimension reduction and the previous argument are successful) that clustering each $C_j$ to its mean yields an $\lpar (1+\alpha)w^*, k^{O(1)} \calE(\eps, 1/10, n) \log(n)^2 \rpar$-approximate solution for $P$.    
    {\setlength{\emergencystretch}{2.5em}\par}

\textbf{Bounding the error in each cluster.}
To conclude the proof of the theorem it is  enough to show that for all $j \in \lbra 1, ..., k\rbra$, $\cost(C_j, \bc_j) \leq \cost(C_j, \mu(C_j)) + k^{O(1)} \cdot d^2 \cdot \frac{\log(n)^4\log(T)^{3/2}}{\eps}.$

    To do so, we first prove that $\bc_j$  is close to $\mu(C_j) = \frac{\sum_{p \in C_j} p}{|C_j|}$ and then bound the desired cost difference: For this, we use that both the numerator and denominator are well approximated. Indeed, \Cref{lem:makePrivHD} ensures that for $\beta = 1/10$:
     \begin{align}
     \label{eq:deno}
         |n_j - |C_j|| &\leq \err,\\
     \label{eq:num}
         \logor \sumEst(j) - \sum_{p \in C_j} p\rnor_\infty &\leq  \err \cdot d^{1/2} \log(d),
     \end{align} 
where $\err = k^{O(1)} \calE(\eps, 1/10, n) \cdot d \log(n) 
= k^{O(1)} d \cdot \log(T)^{3/2} \log (n)^3/\eps$

    If a cluster contains fewer than $2\cdot \err$ points, then regardless of the position of $\bc_j$ its $1$-means cost is upper-bounded by $2\err$ -- as the diameter of $P$ is $1$ in $\R^d$-- and this is accounted for by the additive error.
    Otherwise, we can show that $\bc_j$ is very close to $\mu(C_j)$, using \Cref{eq:num}, \Cref{eq:deno}, and the fact that $|C_j| \ge 2\err$ for all $j$ as follows:

    \begin{eqnarray}
    \notag
        \logor \frac{\sumEst(j)}{n_j} - \mu(C_j)\rnor_2 
        &=& \logor  \lpar \frac{\sumEst(j)}{n_j} - \frac{\sumEst(j)}{|C_j|}\rpar + \lpar \frac{\sumEst(j)}{|C_j|} - \frac{|C_j|\mu(C_j)}{|C_j|}\rpar \rnor_2\\
        \notag
        \text{(\ref{eq:deno})}&\leq& \logor \sumEst(j)\rnor_2 \lpar \frac{1}{|C_j| - \err} - \frac{1}{|C_j|}\rpar + \frac{\logor \sumEst(j) - |C_j|\mu(C_j)\rnor_2}{|C_j|}\\
    \end{eqnarray}
    Using Eq.~\ref{eq:num}, and $(1-x)^{-1} \leq 1 + 2x$ for $x \leq 1/2$, this is upper-bounded by:
    \begin{eqnarray}
         \logor \frac{\sumEst(j)}{n_j} - \mu(C_j)\rnor_2 &\leq& \logor \sumEst(j)\rnor_2 \lpar \frac {1} {|C_j|}   + \frac{2\err}{|C_j|^2} - \frac{1}{|C_j|}\rpar + \frac{d^{1/2} \log (d)  \cdot \err}{|C_j|}\\
        \notag
        &\leq& \frac{|C_j| \logor \mu(C_j)\rnor_2 + \logor \sumEst(j) - |C_j|\mu(C_j)\rnor_2}{|C_j|}\cdot \frac{2\err}{|C_j|}  + \frac{d^{1/2} \log (d) \cdot \err}{|C_j|}\\
        \notag
        \text{(Eq.~\ref{eq:num})}&\leq &\lpar \logor \mu(C_j)\rnor_2 + \frac{d^{1/2} \log (d)  \cdot \err}{|C_j|}\rpar \frac{2\err}{|C_j|} + \frac{d^{1/2} \log (d)  \cdot \err}{|C_j|}\\
        \notag
        C_j \subset B_d(0, 1): &\leq& \lpar 1 + \frac{d^{1/2} \log (d)\cdot \err}{|C_j|}\rpar \frac{2\err}{|C_j|} + \frac{d^{1/2} \log (d)  \cdot \err}{|C_j|}\\
        \label{eq:estMean}
        &\leq& \frac{O(d^{1/2} \log (d) \cdot \err)}{|C_j|},
    \end{eqnarray}
    where the last line follows from $|C_j| \geq 2\err$.

    Finally, we get, using  the folklore property that for any set $X$ and point $y$, $\cost(X, y) = \cost(X, \mu(X)) + |X| \logor \mu(X)-y\rnor_2 ^2$ (see e.g. \cite{BecchettiBC0S19}):
    \begin{align*}
        \cost(C_j, \bc_j) &= \cost(C_j, \mu(C_j)) + |C_j| \cdot \logor \bc_j - \mu(C_j)\rnor_2^2\\
        &\leq \cost(C_j, \mu(C_j)) +\frac{O(\lpar d^{1/2} \log (d) \cdot \err \rpar)^2}{|C_j|} \\
        &\leq \cost(C_j, \mu(C_j)) + k^{O(1)} \cdot d^2 \log(d)^2 \cdot \frac{\log(n)^3 \log(T)^{3/2} }{\eps},
    \end{align*}
    where the last line uses again $|C_j| \geq 2\err$.
    Summing over all clusters concludes the proof: the additive error is dominated by the additive error incurred by the algorithm of \Cref{thm:optLowDim}, and is $k^{O(1)} \cdot d^2 \cdot \frac{\log(n)^4\log(T)^{3/2}}{\eps}$.
\end{proof}

\subsection{Boosting the Success Probabilities (\Cref{sec:boostProba})} \label{ap:boostProba}
To boost the probabilities so that the algorithm outputs solutions that are good at all steps simultaneously, we will run several copies and at each time output the solution with best cost. For that, it is necessary to measure the cost of a cluster.
Our first lemma formalizes how to do this without having access to its center -- which will be helpful to not lose additional privacy.
\begin{restatable}{lemma}{costVariance}\label{lem:costVariance}
    For any set $X$, $\cost(X, \mu(X)) =\sum_{x\in X} \|x\|_2^2 - \frac{\logor \sum_{y \in X} x\rnor_2^2}{|X|}$.
\end{restatable}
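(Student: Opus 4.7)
The plan is to prove this classical ``parallel axis'' identity by expanding the squared norms and using bilinearity of the inner product. First I would write
\[\cost(X, \mu(X)) = \sum_{x \in X} \logor x - \mu(X)\rnor_2^2 = \sum_{x \in X} \lpar \logor x\rnor_2^2 - 2\langle x, \mu(X)\rangle + \logor \mu(X)\rnor_2^2\rpar.\]
The first term already matches the target. For the cross term, pulling the sum inside the inner product gives $-2\,\bigl\langle \sum_{x \in X} x,\; \mu(X)\bigr\rangle$, and substituting $\mu(X) = \frac{1}{|X|}\sum_{y \in X} y$ turns this into $-\frac{2}{|X|}\logor \sum_{x \in X} x\rnor_2^2$. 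For the last term, summing over $X$ gives $|X|\cdot \logor \mu(X)\rnor_2^2 = \frac{1}{|X|}\logor \sum_{x \in X} x\rnor_2^2$.

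Adding the three contributions yields
\[\cost(X, \mu(X)) = \sum_{x \in X} \logor x\rnor_2^2 - \frac{2}{|X|}\logor \sum_{x \in X} x\rnor_2^2 + \frac{1}{|X|}\logor \sum_{x \in X} x\rnor_2^2 = \sum_{x \in X} \logor x\rnor_2^2 - \frac{\logor \sum_{x \in X} x\rnor_2^2}{|X|},\]
which is the claimed identity. There is no real obstacle here; this is a routine algebraic manipulation, and the only thing to be careful about is tracking the factor of $|X|$ when replacing $\mu(X)$ by $\frac{1}{|X|}\sum x$ in both the linear and the quadratic term so that the $-2$ and $+1$ coefficients correctly combine into $-1$.
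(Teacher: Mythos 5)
Your proof is correct and follows essentially the same approach as the paper: expand $\|x-\mu(X)\|_2^2$, split into the three terms, and use $\mu(X)=\frac{1}{|X|}\sum_{y\in X}y$ to rewrite the cross and quadratic terms. The paper does the expansion coordinate-by-coordinate rather than via inner-product notation, but this is purely cosmetic.
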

\begin{proof}
    Using properties of squared distances, we have the following:
    \begin{align*}
        \cost(X, \mu(X)) &= \sum_{x\in X} \|x - \mu(X)\|_2^2 
        = \sum_{x\in X} \sum_{i = 1}^d |x_i - \mu(X)_i|^2\\
        &= \sum_{x\in X} \sum_{i = 1}^d x_i^2 + \mu(X)_i^2 - 2 x_i \mu(X)_i\\
        &= \sum_{x\in X} \|x\|_2^2 + \|\mu(X)\|_2^2  - 2 \sum_{i=1}^d x_i \mu(X)_i.
    \end{align*}
    Here, we note that $\mu(X) = \frac{1}{|X|}\sum_{y \in X} y$: therefore, $ \sum_{x\in X} \|\mu(X)\|_2^2 = |X| \cdot \|\mu(X)\|_2^2 
 =  \frac{\logor \sum_{y \in X} y \rnor_2 ^2}{|X|}$. Furthermore, $\mu(X)_i = \frac{1}{|X|}\sum_{y \in X} y_i$, and so:
    \begin{align*}
        \sum_{x\in X} \sum_{i=1}^d x_i \mu(X)_i &= \sum_{i=1}^d \lpar \frac{1}{|X|}\sum_{y \in X} y_i\rpar \lpar \sum_{x\in X} x_i \rpar \\
        &= \frac{1}{|X|}\sum_{i=1}^d \lpar \sum_{x \in X} x_i\rpar^2\\
        &=  \frac{\logor \sum_{x \in X} x\rnor_2^2}{|X|}.
    \end{align*}

    Combining those equations concludes the lemma:
        \begin{align*}
        \cost(X, \mu(X)) 
        &= \sum_{x\in X} \|x\|_2^2 + \|\mu(X)\|_2^2  - 2 \sum_{i=1}^d x_i \mu(X)_i\\
        &= \sum_{x\in X} \|x\|_2^2 - \frac{\logor \sum_{y \in X} x\rnor_2^2}{|X|}.\qedhere
    \end{align*}
\end{proof}

\costEst*
\begin{proof}
Fix any time $t$ and cluster $C'_j$. We note that, since $C'_j$ is a cluster of points in dimension $\hat d = O(\log(k))$,  all calls to \Cref{lem:dpCountClusters} incur with probability $1-\beta/T$ an error $k^{O(1)} \log(n)^3 \cdot \frac{\log(T)}{\eps} \lpar \sqrt{\log T} + \log(T/\beta)\rpar$.

From \Cref{lem:costVariance}, it holds that 
    \begin{equation}
        \cost(C_j, \mu(C_j)) = \sum_{x \in C_j} \|x\|_2^2 - \frac{\logor  \sum_{x \in C_j} x\rnor_2 ^2}{ |C_j|}. 
        \label{eq:costVar}
    \end{equation}
    We bound the difference  with $\sumNorm(j) - \frac{\logor \sumEst(j)\rnor_2 ^2}{n_j}$  term by term, and let $\err' := d k^{O(1)} \cdot \calE(\eps, \beta / T, n) \log n = d k^{O(1)} \frac{\log(n)^3 \log(T)}{\eps} \log(T/\beta)$ for simplicity.

Essentially, \Cref{lem:makePrivHD} ensures that, for any time step $t$, it holds with probability at least $1-\beta/T$ that all $d+3$ values returned by \Cref{alg:hd} are within their respective error bounds, i.e., $n_j$ is a good estimate of $|C_j|$, $\sumNorm(j)$ is a good estimate of $\sum_{p \in C_j}\logor p \rnor_2$, and $\sumEst(j)$ is a good estimate of $\sum_{p \in C_j}p$ 
(as the calls to the algorithm of \Cref{lem:dpCountClusters} are with parameter $\beta/(dT+3T)$, and there are $d+3$ calls in \Cref{alg:hd}). By a union-bound, those properties hold with probability $1-\beta$ for all time steps simultaneously.
    
    \emph{Case 1:} First, in the case where $|C_j| \leq 2\err'$, then 
    \Cref{lem:makePrivHD} and the fact that all points belong to the unit ball, i.e., $\|p\|_2 \le 1$, ensure that
    $\sumEst(j) \leq \|\sum_{p : p' \in C_j} p\|_2 + \| \sumEst(j) - \sum_{p : p' \in C_j} p \ \|_2  \leq \sum_{p : p' \in C_j} \| p\|_2 + \| \sumEst(j) - \sum_{p : p' \in C_j} p\|_2 \leq 2\err' + O( d^{1/2} \log (d) \err') \leq   O(d^{1/2} \log(d) \err')$.
    We can bound $\sumNorm$ similarly: \Cref{lem:makePrivHD} shows that $\sumNorm(j,t) \leq \sum_{x \in C_j^t} \| x \|_2^2 + \err' \leq |C_j^t| + \err' \leq 3\err'$.
    Therefore, with probability at least $1 - \beta$, 
    $\left| \cost(C_j, \mu(C_j)) - \lpar \sumNorm(j,t) - \frac{\logor \sumEst(j)\rnor_2 ^2}{n_j}\rpar\right| \leq O(d^{1/2} \log(d) \err')$, as all terms are at most $O(d^{1/2} \log(d) \err')$, which concludes this case.

    \emph{Case 2:} In the other case, when  $|C_j| \geq 2 \err'$, then, \Cref{lem:makePrivHD} with failure probability $\beta/T$ ensures that, with probability $1-\beta/T$ it holds that for all cluster $C_j$,
        \begin{align*}
        \left| \sum_{x \in C_j} \|x\|_2^2 - \sumNorm(j,t)\right| 
        &\leq d k^{O(1)} \cdot \calE(\eps, \beta / T, n) \log n = \err'.
    \end{align*}

   A union-bound over all time steps $t$ ensures that this inequality holds with probability $1-\beta$  for all time steps simultaneously.

   For the second term of \Cref{eq:costVar}, we use exactly the same reasoning as in Equation~(\ref{eq:estMean}) from the proof of \Cref{lem:hdkmeans}: it holds with probability $1-\beta$, for any time $t$:
   \begin{align*}
           \logor \frac{\sumEst(j)}{ |n_j|} - \mu(C_j)\rnor_2 \leq \frac{\err' d^{1/2} \log(d) }{|C_j|}.
   \end{align*}
   We formally prove this claim in \cref{claim:distMean}.
   Furthermore, we know that $\left|n_j - |C_j| \right| \leq \err'$ and $\|\mu(C_j)\|_2^2 \le 1$ for all $1 \le j \le k$. Therefore, the following holds:
   \begin{align*}
       \frac{\logor  \sumEst(j)\rnor_2 ^2}{ n_j} 
       &\geq  n_j \cdot \lpar \|\mu(C_j)\|_2 - \logor \frac{  \sumEst(j)}{n_j} - \mu(C_j)\rnor_2 \rpar^2\\
       &\geq (|C_j| - \err') \cdot  \lpar \|\mu(C_j)\|_2^2  - 2 \|\mu(C_j)\|_2 \frac{\err' d^{1/2} \log(d) }{|C_j|}\rpar\\
       &\geq |C_j| \cdot \|\mu(C_j)\|_2 - \err' \|\mu(C_j)\|_2^2  - 2\|\mu(C_j)\|_2 \err' d^{1/2} \log(d)  \\
       &\geq |C_j| \cdot \|\mu(C_j)\|_2 - \err'   - 2 \err' d^{1/2} \log(d) \\
       &\geq  |C_j| \cdot \|\mu(C_j)\|_2 -   k^{O(1)} \frac{d^{3/2} \log(d) \log(n)^3 \log(T)}{\eps} \log(T/\beta).
   \end{align*}

   One can similarly show that 
   \[\frac{\logor  \sumEst(j)\rnor_2 ^2}{n_j}  \leq |C_j| \cdot \|\mu(C_j)\|_2 + k^{O(1)} \cdot \frac{d^{3/2} \log(d) \cdot \log(n)^3 \log(T) \log(T/\beta)}{\eps}.\]

   Therefore, we can conclude:
   \[\left| \cost(C_j, \mu(j)) - \lpar \sumNorm(j) - \frac{\logor \sumEst(C_j)\rnor_2 ^2}{n_j}\rpar\right| \leq k^{O(1)} \cdot \frac{d^{3/2} \log(d) \cdot \log(n)^3 \log(T) \log(T/\beta)}{\eps}\]
\end{proof}

\begin{fact}\label{claim:distMean}
Let $\err' = k^{O(1)} \frac{d \log(n)^3 \log(T)}{\eps} \log(T/\beta)$, and $C_j$ be a cluster with $|C_j| \geq 2 \err'$.
With probability $1-\beta$, it holds for any time $t$ that:
   \begin{align*}
           \logor \frac{\sumEst(j)}{ |n_j|} - \mu(C_j)\rnor_2 \leq \frac{\err' d^{1/2}  \log(d)}{|C_j|}.
   \end{align*}
\end{fact}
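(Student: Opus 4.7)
The plan is to mimic Equation~(\ref{eq:estMean}) from the proof of Lemma~\ref{lem:hdkmeans} essentially verbatim, the only additional ingredient being a union bound so that the guarantees of Lemma~\ref{lem:makePrivHD} hold simultaneously at every time step. Specifically, I would invoke Lemma~\ref{lem:makePrivHD} with failure probability $\beta / T$ (recall that \Cref{alg:hd} is already called in Lemma~\ref{lem:costEst} with parameter $\beta/T$), so that with probability at least $1-\beta$ we have, for every time $t$ and every cluster $j$ simultaneously, both $|n_j - |C_j|| \leq \err'$ and $\|\sumEst(j,t) - \sum_{p : p' \in C_j^t} p\|_2 \leq \err' d^{1/2} \log(d)$, where $\err'$ absorbs the relevant constants and factors of the additive errors from Lemma~\ref{lem:makePrivHD}.

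Having conditioned on this good event, I would fix $t$ and $j$ and decompose
\[
\frac{\sumEst(j)}{n_j} - \mu(C_j) \;=\; \sumEst(j)\left(\frac{1}{n_j} - \frac{1}{|C_j|}\right) + \frac{\sumEst(j) - |C_j|\,\mu(C_j)}{|C_j|},
\]
then apply the triangle inequality. The second summand has $\ell_2$-norm at most $\err' d^{1/2}\log(d)/|C_j|$ directly from the approximation bound on $\sumEst(j)$, since $|C_j|\,\mu(C_j) = \sum_{p\in C_j} p$.

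For the first summand, I would use $|n_j - |C_j|| \leq \err'$ together with the hypothesis $|C_j| \geq 2\err'$ (which forces $n_j \geq |C_j|/2$) to get $|1/n_j - 1/|C_j|| \leq 2\err'/|C_j|^2$, and bound $\|\sumEst(j)\|_2 \leq \|\sum_{p \in C_j} p\|_2 + \err' d^{1/2}\log(d) \leq |C_j| + \err' d^{1/2}\log(d)$ using that every $p$ lies in the unit ball. Multiplying through yields at most $2\err'/|C_j| + 2\err'^2 d^{1/2}\log(d)/|C_j|^2$, and the assumption $|C_j| \geq 2\err'$ absorbs the second term into $O(\err' d^{1/2}\log(d)/|C_j|)$.

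Combining the two bounds and adjusting constants (hidden in the $k^{O(1)}$ factor of $\err'$) yields the claimed inequality. The main subtlety is purely probabilistic, namely the $\beta/T$ union bound guaranteeing that the bounds of Lemma~\ref{lem:makePrivHD} hold across all time steps; the algebra is otherwise identical to the one already carried out for \eqref{eq:estMean}.
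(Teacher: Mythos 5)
Your proposal is correct and follows essentially the same route as the paper: the same decomposition of $\frac{\sumEst(j)}{n_j}-\mu(C_j)$ into two summands, the same bound on the second term via the error in $\sumEst(j)$, the same bound on $\|\sumEst(j)\|_2$ (you use $\|\sum_p p\|_2 \le |C_j|$ directly, the paper writes it as $|C_j|\,\|\mu(C_j)\|_2$ and then invokes $\|\mu(C_j)\|_2\le 1$ -- identical), and the same control of $|1/n_j - 1/|C_j||$ via $|C_j|\ge 2\err'$, with the union bound over $T$ time steps via the $\beta/T$ parameter handled as in the paper. The only cosmetic difference is that you bound $|1/n_j - 1/|C_j||$ symmetrically using $n_j\ge|C_j|/2$ whereas the paper uses the one-sided estimate $(1-x)^{-1}\le 1+2x$; both give $2\err'/|C_j|^2$.
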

\begin{proof}
This proof is identical to the one for showing Equation~(\ref{eq:estMean}), using the bound $|C_j| \geq \err'$ instead of $|C_j|\geq \err$. We repeat it for completeness.
     \Cref{lem:makePrivHD} ensures that with probability $1-\beta$:
     \begin{align}
     \label{eq:deno2}
         |n_j - |C_j|| &\leq \err',\\
     \label{eq:num2}
         \logor \sumEst(j) - \sum_{p \in C_j} p\rnor_\infty &\leq  \err' \cdot d^{1/2} \log(d),
     \end{align}

Combining triangle inequality and Eq.~\ref{eq:deno2}, we have:
    \begin{eqnarray*}
        \logor \frac{\sumEst(j)}{n_j} - \mu(C_j)\rnor_2 
        &=& \logor  \lpar \frac{\sumEst(j)}{n_j} - \frac{\sumEst(j)}{|C_j|}\rpar + \lpar \frac{\sumEst(j)}{|C_j|} - \frac{|C_j|\mu(C_j)}{|C_j|}\rpar \rnor_2\\
        \text{(\ref{eq:deno2})}&\leq& \logor \sumEst(j)\rnor_2 \lpar \frac{1}{|C_j| - \err'} - \frac{1}{|C_j|}\rpar + \frac{\logor \sumEst(j) - |C_j|\mu(C_j)\rnor_2}{|C_j|}\\
    \end{eqnarray*}
    Using Eq.~\ref{eq:num2}, and $(1-x)^{-1} \leq 1 + 2x$ for $x \leq 1/2$, this is upper-bounded by:
    \begin{eqnarray*}
         \logor \frac{\sumEst(j)}{n_j} - \mu(C_j)\rnor_2 &\leq& \logor \sumEst(j)\rnor_2 \lpar \frac {1} {|C_j|}   + \frac{2\err'}{|C_j|^2} - \frac{1}{|C_j|}\rpar + \frac{d^{1/2} \log (d)  \cdot \err'}{|C_j|}\\
        &\leq& \frac{|C_j| \logor \mu(C_j)\rnor_2 + \logor \sumEst(j) - |C_j|\mu(C_j)\rnor_2}{|C_j|}\cdot \frac{2\err'}{|C_j|}  + \frac{d^{1/2} \log (d) \cdot \err'}{|C_j|}\\
        \text{(Eq.~\ref{eq:num})}&\leq &\lpar \logor \mu(C_j)\rnor_2 + \frac{d^{1/2} \log (d)  \cdot \err'}{|C_j|}\rpar \frac{2\err'}{|C_j|} + \frac{d^{1/2} \log (d)  \cdot \err'}{|C_j|}\\
        C_j \subset B_d(0, 1): &\leq& \lpar 1 + \frac{d^{1/2} \log (d)\cdot \err'}{|C_j|}\rpar \frac{2\err'}{|C_j|} + \frac{d^{1/2} \log (d)  \cdot \err'}{|C_j|}\\
        &\leq& \frac{O(d^{1/2} \log (d) \cdot \err')}{|C_j|},
    \end{eqnarray*}
    where the last line follows from $|C_j| \geq 2\err'$.
\end{proof}

We know from \ref{lem:hdkmeans} that the solution at any time step is correct with probability $4/5$, and moreover \Cref{lem:costEst} estimates with probability $1-\beta$ the cost at any time step. Knowing $T$ in advance, one solution to boost probabilities would be to run $\lfloor \log(T) \rfloor$ independent copies of $\hdkm(P, \eps / \log(T), 1/5)$: at each time step at least one of those copies will be a good approximation with probability $1-1/5^{\lfloor \log(T) \rfloor} \approx 1-1/T$. Selecting the cheapest according to \Cref{lem:costEst} would therefore conclude that with constant probability the solution at each time $t$ is a good approximation. 

In the case where $T$ is unknown, one can instead run $\lfloor \log(T) \rfloor$ instances of  $\hdkm(P, \eps, 1/5)$, with parameters defined as follows: the instance started at time $t = 2^i$ has privacy parameter $\eps_i$. 
Starting an instance at time $2^i$ means that we create a stream that inserts one by one each point currently in the data, in the order they appeared, and run a copy of algorithm $\hdkm$ starting with this stream and continuing with the actual updates.

This whole algorithm is $O(\sum \eps_i)$-DP, using standard composition. Furthermore, the utility increases only by a factor $\max_{i \leq \log(T)} \frac{1}{\eps_i}$. 
We will use the fact that $s(\kappa):= \sum i^{-1-\kappa} = O(1/\kappa)$ and $\max_{i \leq \log(T)} {i^{1+\kappa}} = \log(T)^{1+\kappa}$:
thus, setting $\eps_i = \frac{\eps}{s(\kappa) i^{1+\kappa}}$ 
ensures that, with constant probability, the algorithm is $\eps$-differentially private and gives essentially an $\lpar (1+\alpha)w^*,  k^{O(1)}  \cdot d^2 \log(1/\beta) \log(n)^4 \log(T)^{3+\kappa}  \cdot {\eps^{-1}}\rpar$-approximation to $k$-means.  
The next algorithm and lemma make formal this proof sketch.

The formal algorithm we use to boost the probabilities is the following:
\begin{algorithm}[H]
\caption{BoostProbabilities$(P, \eps, \beta, \kappa)$}
\label{alg:boostProba}
\begin{algorithmic}[1]
\State let $s(\kappa) = \sum_{i \geq 1} i^{-1-\kappa}$, and $c_1, c_2$ be two large enough constants ($c_1 \geq 3/\log(5/2), c_2 \geq \log(\pi^2/6\beta) / \log(5/2)$)
\State{For $i = \lbra 1,..., c_1 \lfloor \log (t) \rfloor  + c_2\rbra$, run a copy of $\hdkm(P,  \eps /(s(\kappa) i^{1+\kappa}), 1/5)$.}
\State{Estimate the cost of each solution using \Cref{lem:costEst}.}
\State{Output centers corresponding to the solution with cheapest cost.}
\end{algorithmic}
\end{algorithm}

The next lemma is a more precise formulation of \Cref{thm:main} -- which directly follows by setting $\kappa = 0.001$.

\boostProba*
\begin{proof}
We show that algorithm~\ref{alg:boostProba} satisfies the lemma.
   The composition theorem and the properties of \Cref{alg:hd} shown in \Cref{lem:hdkmeans} ensure that the resulting algorithm is $\eps$-DP as     $ \sum_{i=1}^{c_1 \log (T) + c_2} \frac{\eps}{s(\kappa) i^{1+\kappa}} \leq \eps$ (by definition of $s(\kappa)$).
   
    Furthermore, each instance provides at time $t$, with probability $4/5$, an approximation with multiplicative factor unchanged compared to \Cref{lem:hdkmeans}, and additive error increased (due to the rescaling of $\eps$) by a factor $s(\kappa) i^{1+\kappa} \leq s(\kappa) (c_1\log (t) + c_2)^{1+\kappa} = O(\kappa) \cdot \lpar\log(T)^{1+\kappa} + \log(1/\beta)^{1+\kappa}\rpar$.

    Since all copies are independent, at least one succeeds with probability $1-(2/5)^{c_1 \log(T) + c_2} \geq 1 - \frac{6 \beta}{\pi^2 \cdot T^2}$. Namely, at least one solution is an approximation with multiplicative factor $(1+\alpha)w^*$ and additive at most (where we took the maximum for each exponent of the additive terms from \cref{lem:hdkmeans} and \cref{lem:costEst}):
    \begin{align*}
         & k^{O(1)} \cdot \frac{d^{2} \cdot \log(n)^4 \log(T) \log(T/\beta)}{\eps/(s(k) i^{1+\kappa})} \\
         = ~& k^{O(1)} \cdot \frac{d^{2}  \cdot \log(n)^4 \log(T)^{3+\kappa} \log(1/\beta)}{\eps}
    \end{align*}

    We conclude the proof with a union-bound over all $T$ steps: for this choice of $c_1, c_2$, with probability $1-\frac{\beta \cdot 6}{\pi^2}\cdot\sum_{t = 1}^T 1/t^2 \geq 1-\beta$, the solution is correct at any time $t$.
\end{proof}

\section{Missing Proof for Privately Counting in Clusters (\Cref{sec:dpCountClusters})}\label{ap:dpCountClusters}
In this section we show the proof of
\Cref{lem:structClusters}, which is the last missing piece to complete the proof of the crucial
Lemma~\ref{lem:dpCountClusters}.

To define the sets $G_1, ... G_m$, we make use of a 
\emph{hierarchical decomposition} $\calG = \lbra \calG_1,..., \calG_{\lceil \log (n)\rceil}\rbra$ with the following properties: 
\begin{enumerate}
    \item for all  $i = 0, ..., \lceil \log (n)\rceil$, $\calG_i$ is a partition of $B_d(0,1)$. Each part $A$ of $\calG_i$ is called a cell of level $i$, denoted $\level(A) = i$, and has diameter at most $2^{-i}$.
    \item The partition $\calG_{i+1}$ is a refinement of the partition $\calG_i$, namely every part of $\calG_{i+1}$ is strictly contained in one part of $\calG_i$. $\calG_0$ is merely $\lbra B_d(0, 1)\rbra$.
    \item for any point $p \in B_d(0, 1)$, any level $i$ and any $r \geq 1$, the ball $B_d(p, r 2^{-i})$ intersects at most $r^{O(d)}$ many cells of level $i$.
\end{enumerate}
For a cell $A$ of level $i+1$, we denote by $\parent(A)$  the unique cell of level $i$ containing $A$. 

The sets $G_1,..., G_m$ on which we will apply \Cref{lem:dpCountNet} are the cells at each level. Since they are $\log (n)$ levels, each being a partition, any point from $B_d(0, 1)$ appears in $b = \log (n)$ sets. 
Furthermore, $m = O\lpar n^{d}\rpar$ (by the third property with $r = 1/n$), and so \Cref{lem:dpCountNet} applies with additive error $O\lpar d \cdot \log(n)^2 \cdot \frac{\log(T)}{\eps} \cdot \lpar \sqrt{\log (T)} + \log(1/\beta)\rpar\rpar$. 

Therefore, in order to maintain $\sum_{p \in C_i^t} f(p)$, we only need to show how to express this sum with few $\sum_{p \in G_i \cap P_t} f(p)$ (where $P_t$ is the dataset at time $t$). For this, we use the hierarchical decomposition described above, and identify few cells that are far from any center, compared to their diameter. All points in such a cell are roughly equivalent for the given clustering, and therefore we can serve them all by the same center.  This idea is formalized in the next lemma:

\structClustersNet*
\begin{proof}
    Let $\ell = \lceil 10/\alpha \rceil$. For any cell $A$ of the decomposition, fix an arbitrary point $v_A \in A$. We call the $\ell$-neighborhood of $A$ all cells at the same level as $A$ that intersect with the ball $B\lpar v_A, \ell 2^{-\level(A)}\rpar$. Note that this includes $A$.
    Let $\calA$ be the set of cells $A$ such that their $\ell$-neighborhood does not contain a center of $\calC$, but the $\ell$-neighborhood of $\parent(A)$  does contain one. Add furthermore to $\calA$ the cells at level $\lceil \log (n) \rceil$ that contain a center in their $\ell$-neighborhood.

    We first bound the size of the set $\calA$ obtained. For this, we will count for each level how many cells have a center in their $\ell$-neighborhood, and how many children each cell has. 
    First, for a fixed level $i$ and center $c^*$, the decomposition ensures that there are at most $\ell^{O(d)}$ many cells intersecting $B_d(c^*,  2\ell 2^{-i})$. Note that any cell $A$ that contains $c^*$ in its $\ell$-neighborhood must intersect $B_d(c^*,  2\ell 2^{-i})$: indeed, if $B$ is the cell containing $c^*$ this means there is a point in $B$ at distance at most $\ell 2^{-i}$ of $v_A$. Since the diameter of $B$ is at most $2^{-i}$, $c^*$ is at distance at most $\ell 2^{-i} + 2^{-i}$ of $v_A$, and therefore $A$ intersects with the ball $B_d(c^*,  2\ell 2^{-i})$. 
    
    Now, using the third property of decomposition, any cell $A$ of level $i$ is the parent of at most $2^{O(d)}$ many cells. Indeed, all those cells are included in $A$: this means they are contained in a ball of radius $2^{-i}$, and since they are at level $i+1$, third property ensures that there are at most $2^{O(d)}$ many of them.
    
    Therefore, at any level, at most $k\cdot \ell^{O(d)} = k \cdot \alpha^{-O(d)}$  cells $A$ are added, as $N^\ell(\parent(A))$ must contain one center. Hence, $|\calA| \leq k \cdot \alpha^{-O(d)} \log(n)$.

    \textbf{$\calA$ is a partition of $B_d(0, 1)$.} Now, we show that $\calA$ is a partition of $B_d(0, 1)$. For this, we observe that if a cell $A$ contains a center in its $\ell$-neighborhood, then $\parent(A)$ also contains one.
    First, we argue that the cells in $\calA$ cover the whole unit ball. For a point $p$ in the ball, consider the sequence of parts containing it. There are two cases: Either the smallest part at level $\lceil \log (n) \rceil$ contains a center in its $\ell$-neighborhood, and, thus, it is added to $\calA$; or it does not, and then the largest part of the sequence that does not contain a center in its $\ell$-neighborhood is added to $\calA$. Note that such a largest part must exist, as the cell of level $0$ contains all centers.

    Next we argue that all  cells 
    in $\calA$ are disjoint. Consider two  intersecting cells $A$ and $B$ in $\calA$. 
    By property 2 of the decomposition, it must be that one is included in the other: assume wlog that $A \subset B$ (therefore $\level(A) > \level(B)$). 
    Note that if $A$ is on level $\log (n)$ and has a center in its $\ell$-neighborhood, then trivially $N^\ell(\parent(A))$ contains a center.
    Thus, for all $A \in \cal A$ it holds that $N^\ell(\parent(A))$ contains a center.
    But $\parent(A) \subseteq B$: therefore, $N^\ell(B)$ contains a center, which implies that $B \notin \calA$. 
    Thus, $A$ and $B$ cannot be both in $\calA$, which concludes the proof that $\calA$ is a partition of $B_d(0, 1)$. 

    \textbf{Assigning cells to center.} We can now define $c$ as follows: for each cell $A \in \calA$, we define $a(A) = \argmin_{i} \dist(v_A, c_i)$ to be the closest point from $\calC$ to $v_A$.

    We first define an assignment for a cell  $A \in \calA$ at level $\log (n)$. In this case, triangle inequality ensures that paying the diameter of $A$ for each point allows to serve all points in $A$ with the same center. Formally, let $a(A)$ be the closest center to $v_A$. For any $p\in A$, we have using \Cref{lem:weaktri}:
    \begin{align*}
        \dist(p, a(A))^2 &\leq (\dist(p, v_A) + \dist(v_A, a(A)))^2 \leq (2\dist(p, v_A) + \dist(p, \calC))^2\\
        &\leq (1+\alpha/2)\dist(p, \calC)^2 + (1+\frac{2}{\alpha})\cdot 4\dist(p, v_A)^2 \\
        &\leq (1+\alpha/2)\dist(p, \calC)^2 +\frac{9}{\alpha}\cdot \frac{1}{n}.
    \end{align*}
    Therefore, summing all points in cells at level $\log (n)$ gives an additive error at most $\frac{9}{\alpha}$.

    Now, fix a cell $A \in \calA$ at level $> \log (n)$, and a point $p \in A$. Since, by construction of $\calA$, the $\ell$-neighborhood of $A$ contains at most one center from $\calC$, it holds that:
    \begin{itemize}
        \item either $p$ is already assigned to $a(A)$, and then $\dist(p, a(A)) = \dist(p, \calC)$,
        \item or the center serving $p$ in $\calC$ is outside of the $\ell$-neighborhood of $A$: in particular, $\dist(p, \calC) \geq \ell \cdot \dist(p, v_A)$. In that case, we use the modified triangle inequality from \Cref{lem:weaktri}. This yields similarly as above:
    \begin{align*}
        \dist(p, a(A))^2 &\leq (\dist(p, v_A) + \dist(v_A, a(A)))^2 \leq (2\dist(p, v_A) + \dist(p, \calC))^2\\
        &\leq (1+\alpha/2)\dist(p, \calC)^2 + (1+\frac{2}{\alpha})\cdot 4\dist(p, v_A)^2 \\
        &\leq (1+\alpha/2)\dist(p, \calC)^2 + (1+\frac{2}{\alpha})\cdot 4\cdot \frac{1}{\ell^2} \dist(p, \calC)^2\\
        &= (1+\alpha/2)\dist(p, \calC)^2+ (4 + \frac{8}{\alpha}) \cdot \frac{\alpha^2}{100} \cdot \dist(p, \calC)^2 \\
        &\leq (1+\alpha)\dist(p, \calC)^2.
    \end{align*}
    \end{itemize}
    Summing over all $p\in P$ and combining with the cells at level $\log(n)$, we conclude that:
    \begin{align*}
        \sum_{A \in \calA} \sum_{p \in P \cap A} \dist(p, a(A))^2 \leq (1+\alpha) \cost(P, \calC) +\frac{9}{\alpha}.
    \end{align*}
    The other direction is straightforward: since $a(A) \in \calC$, we have by definition for any $p\in A$ $\dist(p, \calC) \leq \dist(p, a(A))$, and therefore $\cost(P, \calC) \leq \sum_{A \in \calA} \sum_{p \in P \cap A} \dist(p, a(A))^2$.
\end{proof}

\section{Running Time Analysis and Handling Unknown $n$}\label{sec:extension}

\paragraph*{Running Time Analysis.}
The running time of \Cref{alg:mp}, given the values $v$, is $2^{O(d)} \cdot k \log (n)$. Therefore, in dimension $O(\log (k))$, the running time is $k^{O(1)} \log (n)$. 
Computing the values for all cell is the most expensive part of the algorithm: it is both slow and takes lot of memory, as there are  $n^{O(d)}$ many cells.

However, the algorithm \Cref{alg:mp} does not consider more than $2^{O(d)} \cdot k \log (n)$ values. In particular, it does never use the value of a cell whose parents have $\nval$ less than $\thresh$ -- where the parents of a net point $z$ are all net point at previous level with $z$ in their 1-neighborhood.
Therefore, it is not necessary to maintain any information for those, which leaves simply $2^{O(d)} \cdot k \log (n)$ many counters -- or $k^{O(1)} \log (n)$ for the high-dimensional case.

The only detail to deal with is to add new counters in memory, when the count of net point $z$ goes from below $\thresh$ to above: then, one needs to incorporate the counts of its children. For this, it is merely enough to start a new counter from scratch, and add to it all points currently present in the dataset (at most $n$). 
For two neighboring datasets, those sets of points are neighboring as well, and therefore the privacy proof follows.

\paragraph*{Knowledge of $n$.}
Knowing $n$ (the maximum number of points at any given step) is necessary 
to apply \Cref{lem:dpCountNet} with the right privacy parameters -- as we scale $\eps$ by $\log (n)$.

If $n$ is not known, we can proceed exactly as presented for the time $T$, by guessing a value of $n$ (initially some constant, say 100), and restarting the  algorithm from scratch every time the maximum number of points doubles, and scaling the privacy budget by $1/i^{1+\kappa}$ for the $i$-th algorithm.
As in the proof of \Cref{lem:boostProba}, one can show that this increases the additive error only by a factor $O\lpar \frac{\log(n)^{1+\kappa}}{\kappa}\rpar$.

\section{Extension to $(\eps, \delta)$-Differential Privacy}
\label{sec:weakPrivacy}

Instead of using $\eps$-differentially private algorithm, one could improve slightly the additive error by tolerating an $(\eps, \delta)$-differentially private one. 
The definition is the following:
For an $\eps, \delta \in \R^+$, a mechanism $\calM$ is $(\eps, \delta)$-differentially private under continual observation if, for any pair of neighboring streams $\sigma, \sigma'$, and any possible set of outcomes 
$\calS \subseteq \calX$,  it holds that 
$$\Pr\left[ \calM(\sigma) \in \calS\right]  \leq \exp(\eps) \Pr\left[ \calM(\sigma') \in \calS\right] + \delta.$$

The first improvement we can get using $(\eps, \delta)$-differential privacy is due to the advanced composition theorem of \cite{KairouzOV17}: for any $\delta > 0$, composing $k$ algorithms that are $(\eps_1, \delta_1), ..., (\eps_k, \delta_k)$-DP results in an $(\eps', \delta'$-DP algorithm, with $\eps = \sqrt{2 \log(1/\delta) \sum \eps_i^2} +  \sum \eps_i \frac{\exp(\eps_i)-1}{\exp(\eps_i)+1}$, and $\delta' = \delta + \sum \delta_i$. 
Note that, when all $\eps_i \leq 1$, $\exp(\eps_i)-1 \leq 2\eps_i$, and therefore $\eps' \leq  \sqrt{2 \log(1/\delta) \sum \eps_i} + 2\sum \eps_i^2$ 

Therefore, when running the different copies of the algorithm in \Cref{alg:boostProba}, with parameters $\eps_i = \frac{\eps}{\sqrt{s(\kappa) i^{1+\kappa} \cdot 8\log(1/\delta)}}, \delta_i = \frac{\delta}{s(\kappa) i^{1+\kappa}}$ one get an algorithm that is $(\eps', \delta')$-DP, with
\begin{align*}
    \eps' &= \sqrt{2 \log(1/\delta) \sum_{i=1}^{\log T} \eps_i^2} +  2\sum \eps_i^2\\
    &\leq \eps/2 + \eps / 2 \leq \eps\\
    \delta' &= \sum_{i=0}^{\log T} \frac{\delta}{s(\kappa) i^{1+\kappa}}\\
    &\leq \delta.
\end{align*}

Therefore, the algorithm is $(\eps, \delta)$-DP; and most importantly the quality loss due to the rescaling of $\eps$ is now essentially $\sqrt{\log(T)^{1+\kappa}}$, instead of $\log(T)^{1+\kappa}$.

In addition, we can use an $(\eps, \delta)$-DP algorithm to maintain histogram  with a better dependency in $\log T$ as well: 
\begin{lemma}[Corollary 2 in \cite{monikaHisto}]
    Let $\calU$ be a universe of size $u$, and $b$ be a given integer. Consider a stream of $T$ vectors $\sigma(t) \in [-1, 1]^u$ such that at most $b$ entries of $\sigma(t)$ are non-zero, and such that for all time $t$ all entries of $\sum_{i=1}^t \sigma(i)$ are non-negative. Then there is an $(\eps, \delta)$-DP algorithm which output a vector $h_t \in \R^u$ for each round $t$, such that, with probability $2/3$, it holds that 
    \[\forall t, \logor h_t - \sum_{i=1}^t \sigma(i)\rnor_{\infty} \leq \frac{c\sqrt{\log(1/\delta)}}{\eps} \log(T)\sqrt{b \log(6u T)},\]
    where $c$ is some constant.
\end{lemma}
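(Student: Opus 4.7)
The plan is to instantiate the standard binary tree (a.k.a.~tree-based or CSS-Dwork-Rothblum) mechanism on the stream $\sigma(1),\ldots,\sigma(T)$, but to replace the Laplace noise used in the $\eps$-DP version with spherical Gaussian noise in order to exploit the tight composition behavior of the Gaussian mechanism under $(\eps,\delta)$-DP. Conceptually, arrange the $T$ time steps as the leaves of a balanced binary tree; at each internal node $v$ store $\Sigma_v := \sum_{t \in \mathrm{leaves}(v)} \sigma(t) \in \R^u$, release $\widetilde\Sigma_v = \Sigma_v + N_v$ with $N_v \sim \mathcal{N}(0, \sigma_0^2 I_u)$, and answer a prefix query at time $t$ by decomposing $\{1,\ldots,t\}$ into the $O(\log T)$ canonical dyadic intervals covering it and summing the corresponding $\widetilde\Sigma_v$.

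For the privacy analysis, the key observation is that a single update $\sigma(t)$ influences only the $O(\log T)$ nodes on the root-to-leaf path through $t$, and at each of those nodes the $L_2$-sensitivity is bounded by $\sqrt{b}$, since at most $b$ coordinates of $\sigma(t)$ are nonzero and each lies in $[-1,1]$. Because Gaussian mechanisms compose tightly, either via advanced composition or, more cleanly, via Rényi differential privacy, the full release is $(\eps,\delta)$-DP as soon as $\sigma_0 = \Theta\!\lpar\sqrt{b \log(T) \log(1/\delta)}/\eps\rpar$, which is where the $\sqrt{\log(1/\delta)}/\eps$ factor in the claim comes from.

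For the utility analysis, any answer $h_t$ is a sum of $O(\log T)$ independent Gaussian noise vectors, so each coordinate of $h_t - \sum_{i\le t}\sigma(i)$ is Gaussian with standard deviation $O(\sigma_0 \sqrt{\log T})$. A standard Gaussian tail bound plus a union bound over all $uT$ (coordinate, time) pairs shows that, with probability at least $2/3$, every such entry has absolute value at most $O(\sigma_0 \sqrt{\log T}\cdot \sqrt{\log(uT)})$. Substituting the chosen $\sigma_0$ yields
\[\logor h_t - \sum_{i=1}^t \sigma(i)\rnor_{\infty} = O\!\lpar \frac{\sqrt{\log(1/\delta)}}{\eps}\,\log(T)\sqrt{b\log(6uT)}\rpar\]
uniformly in $t$, matching the claimed bound.

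The main obstacle is obtaining the advertised $\sqrt{\log(1/\delta)}$ rather than a worse $\log(1/\delta)$ dependence in $\sigma_0$: naive basic composition of the Gaussian mechanism across the $O(\log T)$ nodes touched by a single update would lose an extra $\sqrt{\log T}$ factor and hence also an extra $\sqrt{\log(1/\delta)}$ factor after calibrating. This is avoided by using the Kairouz-Oh-Viswanath advanced composition, or, equivalently and more transparently, by noting that independent Gaussian releases on the $O(\log T)$ path nodes are jointly distributed as a single Gaussian mechanism whose effective $L_2$-sensitivity is $\sqrt{b\log T}$, to which the standard $(\eps,\delta)$ Gaussian mechanism calibration applies directly.
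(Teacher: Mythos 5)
The paper does not actually prove this lemma: it is imported verbatim as ``Corollary~2 in \cite{monikaHisto}'' and used as a black box, so there is no internal proof to compare against. Your sketch is a sound and standard way to recover the stated asymptotics. Concretely: a single update touches only the $O(\log T)$ nodes on its root-to-leaf path in the binary counting tree, each with per-node $\ell_2$-sensitivity $O(\sqrt{b})$; viewing the joint release on those nodes as one Gaussian mechanism gives effective $\ell_2$-sensitivity $O(\sqrt{b\log T})$, so $\sigma_0 = O\bigl(\sqrt{b\log T\,\log(1/\delta)}/\eps\bigr)$ suffices for $(\eps,\delta)$-DP, and each coordinate of $h_t - \sum_{i\le t}\sigma(i)$ is Gaussian with standard deviation $O(\sigma_0\sqrt{\log T})$; a Gaussian tail plus a union bound over the $uT$ (coordinate, time) pairs then delivers exactly $O\bigl(\frac{\sqrt{\log(1/\delta)}}{\eps}\log T\sqrt{b\log(uT)}\bigr)$. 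Two small observations. You never use the hypothesis that all running sums stay nonnegative; the tree-plus-Gaussian route genuinely does not need it, so that is not a gap (the assumption is presumably there for how the cited algorithm is organized, e.g.\ for memory, not for accuracy). And the cited work is likely proved via a matrix-factorization mechanism rather than the vanilla dyadic tree, which buys sharper constants, but both routes land on the same asymptotic bound, so treating the constant $c$ as unspecified your argument is a faithful reconstruction. The level of rigor is appropriate for a result the paper itself cites without proof; to make it self-contained you would only need to fill in the explicit Gaussian tail bound and track the constants through the Gaussian mechanism, and there is no conceptual obstacle in doing so.
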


Compared to  \Cref{lem:dpCountNet}, this works directly for all time steps simultaneously. Therefore, the proof of \Cref{lem:costEst} is now almost immediate, as it avoids the union-bound over $T$ steps, and using the lemma above allows to save another factor $\sqrt{\log T}$.

Combined, those two improvements yield an algorithm with additive error $k^{O(1)} \frac{\log(n)^4 \log(T)^{2+\kappa} \log(1/\delta)}{\eps}$.

\end{document}